\documentclass[12pt]{article}
\usepackage{amsfonts, amsmath, amssymb, mathtools}
\usepackage[unicode=true,pdfusetitle, bookmarks=true,bookmarksnumbered=false,bookmarksopen=false,
 breaklinks=false,pdfborder={0 0 0},pdfborderstyle={},backref=false,colorlinks=true]{hyperref}
\usepackage{setspace}
\usepackage{authblk}
\usepackage{geometry}
\usepackage{natbib}
\usepackage{comment}
\usepackage[ruled,vlined]{algorithm2e}
\usepackage{multirow}
\usepackage{booktabs,array,tabularx}
\usepackage{caption}
\usepackage{graphicx,float}
\usepackage{subcaption}
\usepackage{threeparttable}
\usepackage{longtable}

\setcitestyle{authoryear,open={(},close={)}}
\hypersetup{citecolor=blue, linkcolor=blue, urlcolor=blue}

\newtheorem{theorem}{Theorem}

\newtheorem{assumption}{Assumption}

\newtheorem{lemma}{Lemma}

\newenvironment{proof}[1][Proof]{\noindent\textbf{#1.} }{\ \rule{0.5em}{0.5em}}

\newcommand{\R}{ \mathbb R}
\newcommand{\Ebb}{\mathbb{E}}
\newcommand{\Links}{\mathcal{L}}

\newcommand{\norm}[1]{\left\lVert#1\right\rVert}

\newcommand{\parrow}{\overset{p}{\rightarrow}}

\newcommand{\argmin}{\operatornamewithlimits{argmin}}
\newcommand{\Darrow}{\overset{D}{\rightarrow}}

\begin{document}

\title{Estimating the perturbed utility route choice model with trip-level data}
\author[1,2]{Mogens Fosgerau}
\author[1]{Nikolaj Nielsen}
\author[2]{Thomas Rasmussen}
\author[3]{Rui Yao\footnote{Corresponding author: \href{mailto:rui.yao@technion.ac.il}{rui.yao@technion.ac.il}.}}

\affil[1]{University of Copenhagen}
\affil[2]{Technical University of Denmark}
\affil[3]{Technion}
\maketitle

\begin{abstract}
We provide an estimator for the perturbed utility route choice (PURC) model that works with data at the level of individual trips. The estimator is a nested fixed-point algorithm that combines an upper bias-corrected linear regression problem with a lower individual-level perturbed utility maximization problem. We establish the statistical properties of the microPURC estimator and confirm these results with an experiment using simulated data. Finally, we demonstrate the estimator in practice using a large real-world dataset.
\end{abstract}

\section{Introduction}

This paper proposes an estimator for the perturbed utility route choice (PURC) model \citep{fosgerau_perturbed_2022, fosgerau_bikeability_2023} that applies to trip-level data without aggregation. This makes it possible to estimate the parameters of a PURC model directly from a dataset of observed route choices through a network together with a matrix of network link characteristics, possibly interacted with individual-specific variables. We refer to the new proposed estimator as a microPURC estimator.

This is an improvement over the estimator in \citet{fosgerau_perturbed_2022}, which requires route choice data to be averaged at the origin-destination pair level into origin-destination flows. This aggregation has two costs. First, to avoid noisy averages, trips in origin-destination pairs with few observations must be discarded, resulting in data loss. Second, aggregation discards individual-level information, precluding the use of traveler- or trip-specific characteristics in the cost function. The \citet{fosgerau_perturbed_2022} estimator therefore entails both a loss of data and a loss of information, both of which are avoided by our proposed estimator.

To explain the contribution of the current paper in more detail, we first outline the model specification. The PURC model views the traveler as choosing a flow vector $x$ of non-negative link flows that minimizes a convex function of the form  $x^\top c + F(x)$. Here,  $c$ is a vector of non-negative link costs and $F$ is a convex function that incorporates the network structure. The flow $x$ is required to satisfy flow conservation from the origin to the destination. The observed route $y$ is coded as a vector of zeroes and ones, indicating which links were used. The observed route is assumed to be a random draw from the optimal flow vector, i.e., it is assumed to satisfy  $\Ebb [y] = x$.

If one is willing to aggregate data to the level of origin-destination (OD) flows, the PURC model is also very fast and easy to estimate using the method of moments \citep{fosgerau_perturbed_2022, fosgerau_bikeability_2023}. This approach relies on the first-order condition for the traveler's problem, which includes the gradient $c+\nabla F(x)$ as well as (very many) Lagrange multipliers for the flow conservation constraint. As we will show, the Lagrange multipliers are eliminated by a certain projection matrix $P$, giving us the projected first-order condition
\begin{align}\label{projFOC}
P(c+\nabla F(x))=0.
\end{align}
Writing the link cost vector as the product of a link-characteristics matrix $Z$ and a parameter vector $\beta$, we obtain $P\nabla F(x)=-PZ\beta$, which has the form of a linear regression equation. With many route choice observations for some given OD-pair, we can compute an approximation of the flow $\hat x$ as the average of the observed routes and plug that into the regression equation. With many such OD-pairs, we can estimate $\beta$ by regression. As shown in \citet{fosgerau_perturbed_2022} using simulated data, this approach allows us to recover the true parameters with realistic sample sizes.

However, with real data, aggregating to the common ODs generally entails data loss due to trip trimming and discarding observations in OD pairs with few observations. Aggregation also limits the possibilities for incorporating individual-level information in the model.

Hence, we seek a way to estimate the PURC model with trip-level data. The challenge we face is that we only observe the chosen routes $y$, not the flows $x$ from which the routes are drawn. We can therefore not directly apply the projected first-order condition.

We propose to meet this challenge as follows. Given an initial parameter estimate, we solve the perturbed utility maximization problem to predict the flow $\hat x$ corresponding to each observed chosen route. Next, for each observation, we add a term to the projected first-order condition \eqref{projFOC} that depends on the observed route and that allows $\beta$ to be estimated without bias from the amended first-order condition. This gives us an updated parameter estimate. Generally, as we will show, the updated parameter estimate will improve on the previous estimate.

This procedure can be repeated. On convergence, we show that it yields a consistent $\sqrt{N}$ asymptotically normal estimate of the model parameters. This is our proposed microPURC estimator.
An open-source implementation of the microPURC estimator is available at \url{https://github.com/andyYaoR/micropurc}.

\subsection{Literature review}\label{sec:litreview}

\paragraph{Path-based route choice models} The extensive literature on route choice is reviewed in \citet{prato_route_2009}. Most papers treat route choice as a discrete choice among routes, which poses the challenge that the number of feasible routes in a large network is astronomical. Research has therefore focused on constructing choice sets with good coverage to avoid bias. Another focus has been on developing models with realistic substitution patterns while maintaining tractability.

Estimation of these models is typically performed using maximum likelihood estimation. However, this is not directly available in the PURC context, since the PURC model (by design) allows corner solutions. Hence, the PURC model may predict zero flow on links used by an observed route, which will cause the likelihood to be zero.

\paragraph{Recursive models} Instead of viewing route choice as a discrete choice among a set of routes, a newer line of research considers recursive models, where travelers choose paths link by link in a Markovian fashion. Early work addressed the assignment problem \citep{Dial1971, bell_alternatives_1995, shen_cyclic_1996, baillon_markovian_2008}. Subsequent studies introduced the recursive logit model and its generalizations based on the multivariate extreme value distribution \citep{fosgerau_link_2013, Mai2015, Mai2015a, Mai2016}. The duality properties of these models have been exploited by \citet{oyama_markovian_2022}. However, applying recursive logit to large networks remains challenging due to the need to invert large matrices. \citet{zimmermann_tutorial_2020} provides a comprehensive tutorial.

\noindent\textbf{The PURC model.}  \citet{fosgerau_perturbed_2022, fosgerau_bikeability_2023} depart from the discrete choice paradigm by modeling travelers as maximizing a perturbed utility function \citep{McFadden2012}. It requires no choice set and uses the entire network directly. Substitution patterns arise from the network structure via a perturbation function defined as a sum of convex link-flow functions.  \citet{yao_perturbed_2024} propose a fast algorithm for computing equilibrium assignment in large networks, and \citet{fosgerau_sensitivity_2025} consider sensitivity analysis of the PURC equilibrium assignment.

With aggregate route choice data, PURC can be estimated using linear regression \citep{fosgerau_perturbed_2022}, similarly to the estimation of the inverse product differentiation logit model from market shares \citep{fosgerau_inverse_2024}. In this paper, we extend the regression equation by adding a debiasing term to correct for the bias that results from replacing observed market shares with noisy estimates \citep{chernozhukov2018double,chernozhukov2020locally}.

\paragraph{Nested Fixed-Point Estimation} The proposed estimator belongs to the class of nested fixed-point algorithms, which are widely used in structural econometrics to estimate models with an inner optimization or equilibrium condition. Introduced by \citet{rust1987optimal} for dynamic discrete choice models, nested fixed-point algorithms combine an outer optimization over parameters with an inner routine that solves the model for optimality. In our context, the inner loop solves the perturbed utility maximization problem for each trip, while the outer loop adjusts parameters based on the bias-corrected first-order condition. Our approach has similarities to the constrained optimization methods used in demand estimation \citep{berry1995automobile, nevo_practitioners_2000}. The convergence and asymptotic properties of our estimator build on the theory of iterative fixed-point methods in econometrics \citep{aguirregabiria2007sequential, su2012constrained}.

\subsection{Layout}

The paper proceeds as follows: Section \ref{sec:setup} sets up the model. Section \ref{sec:estimation} sets up the estimator, discusses its implementation, and derives its asymptotic properties. Section \ref{sec:sim} presents simulation results that illustrate and verify the theoretical predictions. Section \ref{sec:scalability} examines the computational scalability of the estimator. Section \ref{sec:empirical_application} demonstrates the feasibility of the estimator with a large-scale network. Section \ref{sec:conclusion} concludes. The appendix contains an index of notation, supporting mathematical results, proofs of the results in the main text, and a discussion of the computation of optimal flows.

\section{Model setup}\label{sec:setup}

We model the choice of route through a network for travelers who make a single trip from an origin to a destination. The network is connected with nodes and links $\left( \mathcal{N},\mathcal{L}\right) $. Links are indexed by $ij$, where $i$ and $j$ are the start and end nodes. Nodes are also indexed by  $v$ depending on the context. The network structure is specified by a node-link incidence matrix $A=\left\{ a_{v,ij}|v\in \mathcal{N},ij\in \mathcal{L}\right\} $ with entries
\begin{equation*}
a_{v,ij}=\left\{
\begin{array}{rc}
-1, & v=i \\
1, & v=j \\
0, & \text{otherwise.}%
\end{array}%
\right.
\end{equation*}%
We consider here a generic random trip made by a traveler and omit notation that refers to the traveler. For a flow vector $x\in \mathbb{R}_{+}^{\mathcal{L}}$, flow conservation for a trip is expressed as $Ax=b$, where $b\in \mathbb{R}^{\mathcal{N}}$ has components that are zero except $b_{v}=-1$ when $v$ is the origin of the trip and $b_{v}=1$ when $v$ is the destination of the trip.

We observe $\left( y, Z, b\right) $, where $y\in \{0,1\}^{
\mathcal{L}}$ is a vector indicating the links used, and $Z$ is a matrix with a row for each network link and columns for network characteristics, such that $c = Z\beta $ is a link cost vector.
The matrix $Z$ comprises network characteristics but can also include
interactions of traveler or trip-specific information with link
characteristics. The link cost vector is assumed to be positive at the true value of $\beta $, i.e., $Z \beta_{0} \gg 0$.

Given an origin-destination pair, the traveler chooses a flow $x$ that minimizes a perturbed cost function. For a flow vector $x\in \mathbb{R}_+^{\mathcal{L}}$, we define the perturbation function as the sum of link perturbation functions
\begin{align}  \label{eq:sum_link_perturbations}
F(x) = \sum_{ij\in \mathcal{L}} F_{ij}(x_{ij}).
\end{align}
This specification generalizes the specification in  \citet{fosgerau_perturbed_2022}, who define link perturbation functions by multiplying a generic perturbation function by link lengths to ensure that the overall perturbation function $F$ is invariant with respect to link splitting. The present formulation allows link perturbation functions to be specific to each link.

Each link perturbation function $F_{ij}:  \mathbb{R}_{+}\rightarrow
\mathbb{R}_{+}$ is assumed to be continuously differentiable, strictly convex with positive second derivative, and strictly increasing, with $F_{ij}\left( 0\right) =F^{\prime
}_{ij}\left(0\right) =0$ and range equal to $ \mathbb{R}_{+}$.\footnote{Examples of such functions include the entropic $F_{ij}(x_{ij}) = l_{ij}[(1+x_{ij}) \ln (1+ x_{ij}) - x_{ij}]$ and the quadratic $F_{ij}(x_{ij}) = l_{ij} (x_{ij})^2$, where the length of link $ij$, $l_{ij}$, is included to ensure that the perturbation is invariant with respect to link splitting. See \citet{fosgerau_perturbed_2022}.\label{fn:examples}}

The optimal flow vector for the traveler, given cost vector $c$, is the solution to the convex minimization problem

\begin{equation*}
\label{eq:x_hat} \hat x(c,b) =\operatornamewithlimits{argmin}_{x\in  \mathbb{%
R}_+^{ \mathcal{L}}}\left\{ c^\top x+ F\left( x\right) | Ax=b\right\}.
\end{equation*}
When link costs are positive, the set of links carrying positive optimal flow forms a directed acyclic subnetwork \citep{fosgerau_perturbed_2022}.

Observed individual trips $y$ are random route realizations. We assume that their conditional expectation, given vector $b$ and attributes $Z$, coincides with the PURC optimal flow evaluated
at the true parameter $\beta_0$, i.e.,
\begin{equation}\label{eq:Ey}
\Ebb[y|Z,b]=\hat x\left( Z \beta_{0},b \right) .
\end{equation}
This conditional moment condition connects trip-level route observations with the PURC model. The condition may be interpreted as the outcome of a Markovian route choice process, as shown in the following:

Since $x$ solves a cost-minimization problem with strictly positive link costs, it contains no circulation, since removing any circulation would strictly lower cost. As $o$ is the unique source, this also rules out incoming links at $o$ and, by the same reasoning, outgoing links at $d$. It follows that $\text{outflow}(o)= \sum_{(o,j) \in \mathcal{L}} x_{oj}= 1$. Now order the nodes in the active subnetwork topologically as $o=v_1,\dots,v_M=d$. Consider a random walk that starts at $o$ and, at any node $v$, moves to node $j$ along link $(v,j)$ with conditional probability
\[
p_{vj}=\frac{x_{vj}}{\text{outflow}(v)},
\]
until it reaches $d$. Let $q_{ij}$ denote the probability that the walk traverses link $(i,j)$.

We show by induction on $m=1,\dots,M-1$ that the probability of visiting node $v_m$ is $\text{outflow}(v_m)$. For $m=1$, this is immediate since $\Pr[\text{visit }o]=1=\text{outflow}(o)$. Now fix $m>1$ and suppose that $q_{v_k,v_m}=x_{v_k,v_m}$ for every $k<m$. Then
\begin{equation}\label{eq:markov_sampling}
    \Pr[\text{visit }v_m]=\sum_{k:(v_k,v_m)\in\mathcal L} q_{v_k,v_m}
=\sum_{k:(v_k,v_m)\in\mathcal L} x_{v_k,v_m}=\text{inflow}(v_m)=\text{outflow}(v_m),
\end{equation}
where the last equality uses flow conservation at the traversing node $v_m$, for which $b_{v_m}=0$. Hence, for every active link $(v_m,j)$,
\[
q_{v_m,j}=\Pr[\text{visit }v_m]\,p_{v_m j}=x_{v_m,j},
\]
which completes the induction. Since $d$ has no outgoing links, this establishes $q_{ij}=x_{ij}$ for every active link. Therefore $\Ebb[y_{ij}]=q_{ij}=x_{ij}$ for all $ij$, so
\[
\Ebb[y\mid Z,b]=x=\hat x(Z\beta_0,b),
\]
which confirms \eqref{eq:Ey}. Thus, conditional on $(Z,b)$, the traveler can be viewed as selecting an outgoing link at each node with probability proportional to the optimal PURC flow until reaching $d$. We use this Markov process to simulate routes in Section~\ref{sec:sim}.

\subsection{Optimizing behavior}\label{sec:optimizing_behavior}

To derive an estimator for $\beta_{0}$, we begin by analyzing the generic traveler's cost minimization problem. Given cost vector $c=Z\beta$, the Lagrangian for the perturbed cost minimization problem is
\begin{equation}  \label{eq_Lagrange}
\Lambda( x,\eta) =x ^\top c + F(x) +\eta ^\top \left( Ax-b\right) ,\; x\in
\mathbb{R}_+^{\mathcal{L}}.
\end{equation}

Let $\hat{B}=\hat{B}(c,b)=\mathrm{diag}\left(1_{\hat{x}>0}\right)$ be the matrix with ones on the diagonal corresponding to interior solutions for the link flows $\hat{x}_{ij}$. The first-order condition for the cost minimization problem for the positive entries of $\hat{x}$ can then be written%
\begin{equation}
\hat{B}\left( c+\nabla F\left( \hat{x}\right) +A^{\top }\hat{\eta}\right) =0,
\label{eq:1}
\end{equation}%
where $\nabla F(\hat{x})$, the gradient of $F$, has components $F_{ij}^{\prime
}(\hat{x}_{ij})$.

Eq. \eqref{eq:1} shows that, on the active subnetwork, the marginal cost $c+\nabla F(\hat x)$ is given by $-A^\top\hat\eta$, restricted to the active links. 
We eliminate $\hat\eta$ by projecting Eq. \eqref{eq:1} onto a subspace orthogonal to the multiplier term. Let $\hat P$ denote the orthogonal projection onto the linear subspace
\begin{eqnarray}\label{eq:subspace}
\{ x\in \mathbb R^{\mathcal{L}}:A x =0,  \hat Bx =x\},
\end{eqnarray}
i.e., the intersection of the null space of $A$ with the subspace where components corresponding to inactive links are zero. This subspace is exactly the space of \emph{circulations} on the active subnetwork: perturbations of the flow $\hat x$ that reroute traffic among alternative active paths or cycles between origin and destination while leaving net inflow and outflow unchanged at every node. Because such perturbations lie in the null space of $A$, and the null space of $A$ is orthogonal to its row space $\mathrm{range}(A^\top)$, they are automatically orthogonal to any vector of potential differences $A^\top\eta$, hence $\hat P A^\top = 0$. In other words, premultiplying by $\hat P$ eliminates the node potentials,
whatever their value. (Lemma \ref{lem:P_hat} in Appendix \ref{app:Phat} gives the explicit formula for $\hat P$ and verifies this and its other properties formally.) Premultiplying Eq. \eqref{eq:1} by $\hat{P}$ therefore yields the projected first-order condition, which no longer
involves $\hat\eta$ and is the starting point for our estimation strategy:
\begin{equation}\label{eq:projectedFOC}
\hat{P}(c+\nabla F(\hat{x}))=0.
\end{equation}
Intuitively, \eqref{eq:projectedFOC} says that marginal cost, measured along any feasible reallocation of flow within the active subnetwork, cannot be reduced.

We can think of $\hat P$ as a matrix of instruments: its rows are orthogonal to the marginal cost by construction, regardless of the
unobserved multipliers $\hat\eta$. The estimator we present is essentially an instrumental-variables estimator, with the caveat that $\hat P$ depends on $\hat x$.

\paragraph{Active link detection}
Constructing $\hat P$ requires the active set $\{ij:\hat x_{ij}>0\}$ encoded in $\hat B$, which we read from the dual slack. Define the dual slack of link $ij\in\mathcal{L}$ as
\begin{equation}\label{eq:active_threshold}
    s_{ij}\coloneqq c_{ij}+(A^\top\hat\eta)_{ij},
\end{equation}
and note that link $ij$ is active if and only if $s_{ij}<0$. Stationarity gives $c+\nabla F(\hat x)+A^\top\hat\eta=\hat\mu$, where $\hat\mu\ge 0$ is the multiplier of the non-negativity constraint $x\ge 0$, so $s_{ij}=\hat\mu_{ij}-F_{ij}'(\hat x_{ij})$. Since $F_{ij}'(0)=0$, complementarity splits this into $s_{ij}=-F_{ij}'(\hat x_{ij})<0$ on active links and $s_{ij}=\hat\mu_{ij}\ge 0$ on inactive ones, so $s_{ij}$ is the signed margin to the switching boundary, in cost units. 
Because it is built from the costs and node potentials, $s_{ij}$ is more robust to floating errors in flow: an inactive link keeps a strictly positive slack $\hat\mu_{ij}$ even when $\hat x_{ij}$ contains a small positive residual, so $s_{ij} \geq 0$ correctly classifies inactive links, and similarly for classifying active links.

\section{Estimation}\label{sec:estimation}
We now turn to estimation of the cost parameters in the link cost function $c = Z \beta$. We impose the following regularity conditions:

\begin{assumption}
\; \label{as:regularity}

\begin{itemize}
\item[a)] $(y^n,Z^n,b^n)$ are i.i.d. realizations from the joint distribution of a random variable $(y,Z,b)$.

\item[b)] There exists a unique $\beta_0\in \mathbb{R}^K$ such that for
all $ Z, b$, $\Ebb\left[ y|Z,b\right]=\hat
x( Z\beta_0,b)$ and $Z\beta_0 \gg 0$. There exists a compact set $\Theta$ such that $\beta_0\in \mathrm{int} (\Theta)$.

\item[c)] $F:\R^{\mathcal{L}}_+\rightarrow \R_+$ is four times continuously differentiable.

\item[d)] $\Ebb\left[Z^\top \hat P Z \right]$ has full rank, where $\hat P=\hat P(\hat x(Z\beta_0,b ))$ and $\Ebb\left[\left\| Z\right\|^2\right]<\infty$

\item[e)] There exists an open neighborhood $\mathcal{B}_0$ around $\beta_0$ such that $1_{\hat x(Z\beta,b)>0}=1_{\hat x(Z\beta_0,b)>0}$ for almost all $Z,b$ and all $\beta\in \mathcal{B}_0$.
\end{itemize}
\end{assumption}

Items a) and b) are standard regularity conditions: trip-level observations are i.i.d., and a unique true parameter vector $\beta_0$ rationalizes the observations such that the conditional mean of the observed route coincides with the PURC optimal flow. Assumption b) is a conditional moment condition, $\Ebb[y\mid Z,b]=\hat x(Z\beta_0,b)$: it requires only that the expected route choice matches the PURC-optimal flow, not that route choices follow any specific probability law over paths. This is weaker than specifying path probabilities as done in path-based models and the recursive logit model.
Item c) ensures sufficient smoothness of the perturbation function $F$ for the asymptotic results -  this assumption is satisfied for the perturbation functions in footnote \ref{fn:examples}.
The assumption of main practical relevance is item d), which requires that the projected regressors contain sufficient variation to identify $\beta_0$ and yield a nonsingular asymptotic covariance matrix.
Item e) is a local constant active–set condition to ensure local differentiability of the PURC solution with respect to $\beta_0$. It is a mild assumption, since, as shown in \citet{fosgerau_sensitivity_2025}, the set of links with positive flow is constant in an open neighborhood of any point in the space of link cost vectors, except on a closed set of Lebesgue measure zero.

\paragraph{Intuition for identification of $\beta_0$}  Assumption~\ref{as:regularity}(b) postulates that a unique parameter value rationalizes the observed conditional mean route choice across the population of trips; by  Lemma~\ref{lem:unique_RV} below, this is equivalent to $\beta_0$ uniquely minimizing the residual variance between observed and predicted flow. Locally, identification is driven by the projected first-order condition \eqref{eq:projectedFOC}, which requires the marginal cost to be equalized across all alternative routes within the active subnetwork for a given trip. Whenever a traveler's active subnetwork contains more than one route between origin and destination, comparing the link attributes along these alternative routes reveals the rate at which travelers trade off one attribute against another. Assumption~\ref{as:regularity}(d) requires that this cross-route variation in attributes be rich enough to separately identify each coefficient. Identification can therefore fail in two ways: if $Z$ varies collinearly across the alternative routes available in the data (e.g., if travel time is everywhere proportional to distance), or if trips in the sample have essentially just one active route. Nevertheless, with sufficiently rich data, identification is expected in practice. 

Consider the residual variance function
\begin{eqnarray}\label{eq:RV}
        RV(\beta)=\Ebb\left[\norm{y-\hat x(Z\beta,b)}^2 \right],
\end{eqnarray}
which measures the expected squared distance between the observed route choice $y$ and the predicted flow $\hat x(Z\beta,b)$.

Lemma \ref{lem:unique_RV}, proved in the Appendix, shows that the true parameter $\beta_0$ is the unique minimizer of the residual variance function.
\begin{lemma}\label{lem:unique_RV}
  $\beta_0$ is the unique minimizer of $RV$.
\end{lemma}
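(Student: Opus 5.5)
The plan is the standard bias–variance decomposition of a conditional mean-squared error. I would first condition on $(Z,b)$ and use Assumption~\ref{as:regularity}(b), $\Ebb[y\mid Z,b]=\hat x(Z\beta_0,b)$, to write
\begin{align*}
\norm{y-\hat x(Z\beta,b)}^2 &= \norm{y-\hat x(Z\beta_0,b)}^2 + \norm{\hat x(Z\beta_0,b)-\hat x(Z\beta,b)}^2 \\
&\quad + 2\left(y-\hat x(Z\beta_0,b)\right)^\top\left(\hat x(Z\beta_0,b)-\hat x(Z\beta,b)\right).
\end{align*}
The second factor of the cross term is a measurable function of $(Z,b)$. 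Hence, by iterated expectations, the cross term has zero conditional mean given $(Z,b)$.

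Next I would verify that all expectations are finite, so that this decomposition is legitimate and $RV$ is well defined. Here $y\in\{0,1\}^{\mathcal L}$ is bounded. The flows $\hat x$ are also bounded: they satisfy flow conservation with unit demand, contain no circulation when costs are positive, and so lie in $[0,1]^{\mathcal L}$. For cost vectors that are not positive I would still want a bound, and I would obtain it from the minimization on the compact feasible polytope intersected with a sublevel set, provided the network gives $\{x\ge 0: Ax=b\}$ boundedness. If that fails, I would restrict attention to $\beta\in\Theta$ and argue that $\hat x$ is still well defined with finite moments. This gives
\[
RV(\beta)=RV(\beta_0)+\Ebb\left[\norm{\hat x(Z\beta_0,b)-\hat x(Z\beta,b)}^2\right]\ge RV(\beta_0),
\]
so $\beta_0$ is a minimizer.

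For uniqueness, suppose $RV(\beta_1)=RV(\beta_0)$. Then $\hat x(Z\beta_1,b)=\hat x(Z\beta_0,b)$ for almost all $(Z,b)$. Consequently $\Ebb[y\mid Z,b]=\hat x(Z\beta_1,b)$ almost surely, so $\beta_1$ also satisfies the conditional moment condition in Assumption~\ref{as:regularity}(b). The uniqueness clause there then forces $\beta_1=\beta_0$.

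I expect the main obstacle to be this last step, for two reasons. The assumption says the moment condition holds ``for all $Z,b$'' while the equality obtained holds only almost surely, so I would read (b) as an almost-sure statement. The assumption also includes $Z\beta_1\gg0$, which I would not have. If a more self-contained argument is preferred, I would instead derive $\beta_1=\beta_0$ from the projected first-order condition \eqref{eq:projectedFOC}. Equal flows give equal $\hat P$ and equal $\nabla F(\hat x)$, so $\hat P Z(\beta_1-\beta_0)=0$ almost surely. Then $\Ebb[Z^\top\hat P Z](\beta_1-\beta_0)=0$, and the full-rank condition in Assumption~\ref{as:regularity}(d) yields $\beta_1=\beta_0$. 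I would present this route as the main argument, since it relies only on (d) and sidesteps the positivity issue.
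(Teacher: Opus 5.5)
Your decomposition step is the paper's argument. The paper writes $y=\Ebb[y\mid Z,b]+\epsilon$, removes the cross term by iterated expectations, and gets $RV(\beta)-RV(\beta_0)=\Ebb\bigl[\norm{\Ebb[y\mid Z,b]-\hat x(Z\beta,b)}^2\bigr]\ge 0$.

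Your finiteness detour is unnecessary, and the boundedness of $\{x\ge 0: Ax=b\}$ that you hope for fails on any network with a directed cycle. You already condition on $(Z,b)$, where $y$ is bounded and $\hat x(Z\beta,b)$ is fixed. So the cross term vanishes exactly, and the identity holds in $[0,\infty]$ after taking expectations of nonnegative terms, with no moment bound needed.

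For uniqueness, the paper uses your fallback. From $RV(\beta)=RV(\beta_0)$ it gets $\Ebb[y\mid Z,b]=\hat x(Z\beta,b)$ a.s. and invokes the uniqueness clause of Assumption~\ref{as:regularity}(b) directly.

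Your preferred route is genuinely different and correct:
\begin{itemize}
\item The projected condition $\hat P(Z\beta_1+\nabla F(\hat x))=0$ holds at any minimizer. It uses only the KKT conditions of a linearly constrained convex program together with $\hat P=\hat P\hat B$ and $\hat PA^\top=0$, not $Z\beta_1\gg 0$.
\item Equal flows give equal active sets, hence the same $\hat P$ and the same $\nabla F(\hat x)$.
\item $\Ebb[Z^\top\hat P Z]$ is finite because $\norm{\hat P}\le 1$ and $\Ebb[\norm{Z}^2]<\infty$.
\item Assumption~\ref{as:regularity}(d) then yields $\beta_1=\beta_0$.
\end{itemize}

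The paper's route is a one-liner that treats (b) as a black-box identification assumption. Read literally, though, that clause only rules out competitors that satisfy the moment condition for \emph{all} $(Z,b)$ \emph{and} have $Z\beta\gg 0$. The paper is therefore tacitly reading (b) generously, which is exactly the mismatch you spotted.

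Your argument removes that issue and shows more: the uniqueness part of (b) is implied by the rank condition (d). It holds globally rather than only near $\beta_0$, because identical flows force identical active sets, so Assumption~\ref{as:regularity}(e) is not needed. This gives a rigorous version of the paper's informal discussion linking identification to (d).
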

However, directly minimizing the residual variance is computationally difficult, since  $\hat{x}(Z\beta, b)$ is itself the solution to a constrained convex program (the PURC model) that allows corner solutions. Direct optimization using gradient-based methods requires the sensitivity of the solution with respect to parameters to be computed many times, which can be challenging~\citep{fosgerau_sensitivity_2025}.
We therefore seek an approach that exploits the PURC problem structure to make estimation easier and, not least, much faster.

Based on the projected first-order condition \eqref{eq:projectedFOC}, we now construct a map that has the true parameter as a fixed point. To do so, we first introduce a function $T$ as a first-order approximation to the projected first-order condition at the point $x$ in the direction of the choice variable $y$. For given $\left( \beta ,x,Z,y\right)$, define
\begin{equation}\label{eq:T_function}
    T\left( \beta ,x,Z,y\right) =\hat{P}(x) \left( Z\beta +\nabla F\left( x\right)
    +\nabla ^{2}F\left( x\right) \cdot \left( y-x\right) \right),
    \end{equation}%
where $\nabla ^{2}F\left( x\right) \cdot \left( y-x\right)$ is the first-order approximation of $\nabla F\left( x\right)$ around $x$, evaluated at the point $y$.

To unpack the meaning of this function, write $x_0=\hat x(Z\beta_0,b)$ for the optimal flow at the true parameter. Consider the term $\nabla F(x) + \nabla^2 F(x)(y-x)$, where $x=\hat{x}(Z \beta,b)$ is the predicted flow at the current value of $\beta$. At the true parameter, where $x=x_0$, taking the conditional expectation given $(Z,b)$ gives
\begin{equation*}
\Ebb\bigl[\nabla F(x_0) + \nabla^2 F(x_0)(y-x_0)\,\big|\,Z,b\bigr] = \nabla F(x_0),
\end{equation*}
since the second term vanishes in conditional expectation because $\Ebb[y\mid Z,b]=x_0$ by Eq.~\eqref{eq:Ey}. Because $x_0$, and hence $\hat P(x_0)$, are fixed given $(Z,b)$, it follows that
\begin{equation*}
\Ebb\bigl[T(\beta_0,x_0,Z,y)\,\big|\,Z,b\bigr] = \hat P(x_0)\bigl(Z\beta_0+\nabla F(x_0)\bigr)=0,
\end{equation*}
where the last equality is the projected first-order condition \eqref{eq:projectedFOC} evaluated at $\beta_0$. 

Based on this approximation, we next define an iterative map $\beta\rightarrow \Phi(\beta)$ by
\begin{eqnarray}\label{eq:Phi}
    \Phi(\beta) \in \argmin_{\beta'} \Ebb \left[ \norm{T(\beta',\hat x(Z\beta,b),Z,y)}^2\right].
\end{eqnarray}
The mechanics of this map are the following. For a given value of the parameter $\beta$, we compute the corresponding predicted flow $\hat x(Z\beta,b)$, then $\Phi(\beta)$ as the value of $ \beta'$ that minimizes the norm of the approximate projected first-order condition $T(\beta',\hat x(Z\beta,b),Z,y)$.
The value $\Phi(\beta)$ is uniquely determined for $\beta$ in a neighborhood of the true parameter $\beta_0$ by the regularity Assumption \ref{as:regularity}; see Lemma \ref{lem:Phi_properties} in the Appendix.

The function $T( \beta',\hat x(Z\beta,b),Z,y)$ is linear as a function of the $ \beta'$, which makes the iterative map available in closed-form as
\begin{eqnarray}\label{eq:Phi_closedform}
    \Phi(\beta) = -\left( \Ebb\left[ Z^{\top }\hat{P} Z\right]
\right) ^{-1}\Ebb\left[ Z^{\top }\hat {P}\left( \nabla F(\hat x)+\nabla ^{2}F(\hat x )(y-\hat x)\right) \right],
\end{eqnarray}
where $\hat x = \hat x(Z \beta,b)$ is the optimal flow corresponding to $\beta$, the projection $\hat{P}$ is constructed based on $\hat x$, and $\Ebb [Z^\top  \hat P Z]$ has been assumed to be invertible.

In the Appendix, we prove the following result.

\begin{lemma}\label{lem:fxp_foc}
    Any point $\beta^*$ is a fixed point of $\Phi$ if and only if it satisfies the first-order condition  for problem \eqref{eq:Phi}
    \begin{eqnarray}
        \Ebb \left[ Z^\top P^*\nabla^2 F(x^*)(\hat x(Z\beta_0,b)-x^*)\right]=0,
    \end{eqnarray}
    where $x^*=\hat x(Z\beta^*,b)$ and $P^*=\hat P(x^*)$.
\end{lemma}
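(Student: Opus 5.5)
We work directly with the closed form \eqref{eq:Phi_closedform}. Write $x^*=\hat x(Z\beta^*,b)$, $P^*=\hat P(x^*)$ and $M^*=\Ebb[Z^\top P^* Z]$. We assume $M^*$ is invertible, which is needed for $\Phi(\beta^*)$ to be well defined by \eqref{eq:Phi_closedform}; near $\beta_0$ this follows from Assumption~\ref{as:regularity}(d),(e) and Lemma~\ref{lem:Phi_properties}. The plan is to show that the equation $\Phi(\beta^*)=\beta^*$ is the same as the displayed moment condition, using the projected first-order condition at $\beta^*$ to cancel terms and the law of iterated expectations to replace $y$ by $\hat x(Z\beta_0,b)$.

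First, multiplying \eqref{eq:Phi_closedform} by $M^*$ shows that $\Phi(\beta^*)=\beta^*$ holds if and only if
\begin{equation*}
\Ebb\left[Z^\top P^* Z\right]\beta^* + \Ebb\left[Z^\top P^*\left(\nabla F(x^*)+\nabla^2F(x^*)(y-x^*)\right)\right]=0 .
\end{equation*}
Since the objective in \eqref{eq:Phi} is a convex quadratic in $\beta'$, this is exactly the first-order condition of that problem evaluated at $\beta'=\beta^*$. It can be written as
\begin{equation*}
\Ebb\left[Z^\top P^*\left(Z\beta^*+\nabla F(x^*)\right)\right]+\Ebb\left[Z^\top P^*\nabla^2F(x^*)(y-x^*)\right]=0 .
\end{equation*}

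Second, we eliminate the first expectation. The flow $x^*$ is the PURC optimum at cost $Z\beta^*$, so the projected first-order condition \eqref{eq:projectedFOC} gives $P^*(Z\beta^*+\nabla F(x^*))=0$ pointwise for every $(Z,b)$. This requires $Z\beta^*$ to be a legitimate cost vector; for $\beta^*$ near $\beta_0$ it is, since $Z\beta_0\gg0$. The first expectation therefore vanishes identically.

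Third, we handle the second expectation. The quantities $Z$, $P^*$ and $\nabla^2F(x^*)$ are all measurable with respect to $(Z,b)$. Conditioning on $(Z,b)$ and applying Assumption~\ref{as:regularity}(b), so that $\Ebb[y\mid Z,b]=\hat x(Z\beta_0,b)$, gives
\begin{equation*}
\Ebb\left[Z^\top P^*\nabla^2F(x^*)(y-x^*)\right]=\Ebb\left[Z^\top P^*\nabla^2F(x^*)(\hat x(Z\beta_0,b)-x^*)\right].
\end{equation*}
Combining the steps, $\Phi(\beta^*)=\beta^*$ holds if and only if the displayed condition holds. Every step is an equivalence, so both directions follow at once.

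The main obstacle is making sure the expectations are finite, so that splitting them and conditioning are legitimate. We plan to handle this as follows:
\begin{itemize}
\item $P^*$ is an orthogonal projection, so $\|P^*\|\le1$.
\item $\Ebb\|Z\|^2<\infty$ by Assumption~\ref{as:regularity}(d).
\item $y$ and $x^*$ are flows of a single unit trip, so their entries lie in $[0,1]$.
\item $\nabla^2F$ is diagonal and continuous on the bounded set of such flows, by Assumption~\ref{as:regularity}(c).
\end{itemize}
Together these bound all the integrands by an integrable quantity.
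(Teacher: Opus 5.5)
Your proposal is correct and follows essentially the same route as the paper. You write the first-order condition of \eqref{eq:Phi} at $\beta'=\beta^*$, remove the $Z\beta^*+\nabla F(x^*)$ term with the projected first-order condition \eqref{eq:projectedFOC}, replace $y$ by $\hat x(Z\beta_0,b)$ via iterated expectations and Assumption~\ref{as:regularity}(b), and justify integrability with $\Ebb\norm{Z}<\infty$ and boundedness on $[0,1]^{\mathcal{L}}$. Your derivation of the fixed-point/first-order-condition equivalence from the closed form, which needs $\Ebb[Z^\top P^*Z]$ invertible, only spells out a step the paper leaves implicit; arguing directly from the convex argmin would give the same equivalence without invertibility.
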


Lemma \ref{lem:fxp_foc} shows in particular that the true parameter $\beta_0$ is a fixed point of $\Phi$. In Lemma \ref{lem:unique_RV} we showed that the true parameter is the unique minimizer of the residual variance. Hence, the true parameter also minimizes the residual variance over the set of fixed points of $\Phi$.

\begin{theorem}\label{thm:estimator} The true parameter $\beta_0$ minimizes the residual variance over the set of fixed points of $\Phi$, i.e.
\begin{eqnarray}
    \beta_0=\argmin_{\beta} RV(\beta) \quad \textrm{s.t.} \quad \Phi(\beta)=\beta.
\end{eqnarray}
\end{theorem}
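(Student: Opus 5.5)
The argument combines the two lemmas above: Lemma \ref{lem:fxp_foc} to place $\beta_0$ in the feasible set, and Lemma \ref{lem:unique_RV} to show that $\beta_0$ beats every other feasible point. The logic is that of restricting an unconstrained minimization to a subset that contains the unconstrained minimizer.

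\textbf{Step 1: $\beta_0$ is feasible.} We check that $\beta_0$ satisfies the condition in Lemma \ref{lem:fxp_foc}. With $\beta^*=\beta_0$ we have $x^*=\hat x(Z\beta_0,b)$, so the integrand $Z^\top P^*\nabla^2F(x^*)(\hat x(Z\beta_0,b)-x^*)$ vanishes identically and its expectation is zero. Lemma \ref{lem:fxp_foc} then gives $\Phi(\beta_0)=\beta_0$. The lemma presupposes that $\Phi(\beta_0)$ is well defined, i.e.\ that the closed form \eqref{eq:Phi_closedform} applies. This requires $\Ebb[Z^\top\hat P Z]$, with $\hat P$ evaluated at $x_0$, to be invertible, which is exactly Assumption \ref{as:regularity}(d).

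\textbf{Step 2: $\beta_0$ is the unique constrained minimizer.} Lemma \ref{lem:unique_RV} gives $RV(\beta_0)<RV(\beta)$ for every $\beta\neq\beta_0$. This holds in particular for every fixed point $\beta\neq\beta_0$ of $\Phi$. Hence $\beta_0$ attains the minimum of $RV$ over the set $\{\beta:\Phi(\beta)=\beta\}$, and it does so uniquely, so the $\argmin$ in the statement is the singleton $\{\beta_0\}$.

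\textbf{Main obstacle.} Only Step 1 needs care. The issue is whether $\Phi$ is defined at $\beta_0$ and whether the fixed-point set is understood relative to the domain of $\Phi$, for instance $\Theta$ or the neighborhood where Lemma \ref{lem:Phi_properties} guarantees uniqueness. I would first state that the minimization ranges over $\beta$ in the domain of $\Phi$. Then I would note that $\beta_0\in\mathrm{int}(\Theta)$ lies in that domain by Assumption \ref{as:regularity}(b),(d).

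One could also verify feasibility directly from \eqref{eq:Phi_closedform}. Conditioning on $(Z,b)$ removes the $y-\hat x$ term, and the projected first-order condition \eqref{eq:projectedFOC} gives $\hat P\nabla F(x_0)=-\hat P Z\beta_0$. Substituting yields $\Phi(\beta_0)=\beta_0$, which serves as a consistency check on Step 1.
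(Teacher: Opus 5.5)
Your proposal is correct and follows the paper's own argument: Lemma~\ref{lem:fxp_foc} places $\beta_0$ in the fixed-point set, because the integrand vanishes at $x^*=\hat x(Z\beta_0,b)$. Lemma~\ref{lem:unique_RV} then makes $\beta_0$ the strict minimizer of $RV$ over any set containing it. Your added remark, that invertibility of $\Ebb[Z^\top\hat P Z]$ at $\beta_0$ (Assumption~\ref{as:regularity}(d)) is needed for $\Phi(\beta_0)$ to be well defined, is a detail the paper leaves implicit.
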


In addition, $\Phi$ is a local contraction, with quadratic convergence towards $\beta_0$, as shown by the following Lemma, proved in the Appendix.
\begin{lemma}\label{lem:quadratic_convergence}
    Given Assumption \ref{as:regularity}, there exists a finite constant $C$  such that $\norm{\Phi(\beta)-\beta_0}\leq C\norm{\beta-\beta_0}^2$, for $\beta$ sufficiently close to $\beta_0$.
\end{lemma}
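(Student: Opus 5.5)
We need to prove Lemma \ref{lem:quadratic_convergence}: $\Phi$ converges quadratically to $\beta_0$ near $\beta_0$. The plan is to write $\Phi(\beta)-\beta_0$ in closed form via \eqref{eq:Phi_closedform}, use the projected first-order condition at $\beta$ to cancel the cost term, and show that the remainder is a second-order Taylor remainder of $\nabla F$ between $\hat x$ and $x_0$.

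\textbf{Step 1: reduce to a conditional expectation.} Fix $\beta\in\mathcal B_0$, so by Assumption \ref{as:regularity}(e) the active set, and hence $\hat P$, coincides with that at $\beta_0$ for almost all $(Z,b)$. Write $\hat x=\hat x(Z\beta,b)$, $x_0=\hat x(Z\beta_0,b)$, and $P=\hat P(\hat x)=\hat P(x_0)$. By iterated expectations and \eqref{eq:Ey}, $y$ can be replaced by $x_0$ inside \eqref{eq:Phi_closedform}, because $P$ and $\hat x$ are $(Z,b)$-measurable. Thus
\[
\Phi(\beta)=-\Bigl(\Ebb[Z^\top P Z]\Bigr)^{-1}\Ebb\Bigl[Z^\top P\bigl(\nabla F(\hat x)+\nabla^2F(\hat x)(x_0-\hat x)\bigr)\Bigr].
\]

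\textbf{Step 2: use the true first-order condition.} The projected first-order condition \eqref{eq:projectedFOC} at $\beta_0$ gives $PZ\beta_0=-P\nabla F(x_0)$. Hence $\beta_0=-(\Ebb[Z^\top PZ])^{-1}\Ebb[Z^\top P\nabla F(x_0)]$, and
\[
\Phi(\beta)-\beta_0=\Bigl(\Ebb[Z^\top P Z]\Bigr)^{-1}\Ebb\Bigl[Z^\top P\bigl(\nabla F(x_0)-\nabla F(\hat x)-\nabla^2F(\hat x)(x_0-\hat x)\bigr)\Bigr].
\]
Since $F$ is separable, the bracket is componentwise the Taylor remainder $F_{ij}'(x_{0,ij})-F_{ij}'(\hat x_{ij})-F_{ij}''(\hat x_{ij})(x_{0,ij}-\hat x_{ij})$. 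By Assumption \ref{as:regularity}(c), this is bounded by $\tfrac12\sup|F_{ij}'''|\,|x_{0,ij}-\hat x_{ij}|^2$. Flows lie in $[0,1]$ because the solution is acyclic with unit flow, so $F'''$ is bounded on this compact range. The inverse matrix is bounded by Assumption \ref{as:regularity}(d), $\norm{P}\le 1$, and near $\beta_0$ we use that $\Ebb[Z^\top PZ]$ is locally constant by (e).

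\textbf{Step 3: Lipschitz bound on the flow map.} We then need $\norm{\hat x(Z\beta,b)-x_0}\le L\norm{Z}\,\norm{\beta-\beta_0}$. With the active set fixed, $\hat x$ solves the equality-constrained strictly convex problem on the active links. The implicit function theorem applied to the KKT system, whose Hessian $\nabla^2F$ is positive definite on the null space of $A$ restricted to active links, gives $\partial\hat x/\partial c=-P^{1/2}$-type expressions. More precisely, $\partial\hat x/\partial c=-H^{+}$ with $H^+$ the generalized inverse of $\nabla^2F$ on the circulation space. Its norm is bounded by $1/\min F''$ over $[0,1]$, which is positive by strict convexity and continuity. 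Combining the steps gives $\norm{\Phi(\beta)-\beta_0}\le C'\,\Ebb[\norm{Z}^3]\,\norm{\beta-\beta_0}^2$.

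\textbf{Main obstacle.} The moment condition is the delicate point: the bound above needs $\Ebb\norm{Z}^3<\infty$, whereas only a second moment is assumed. I would first try to exploit $\norm{\hat x-x_0}\le 2\sqrt{|\mathcal L|}$, since flows lie in $[0,1]$. This lets one factor of $\norm{Z}$ be traded for a constant, giving a bound with $\Ebb\norm{Z}^2$. Failing that, I would add the third-moment condition explicitly.

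Uniformity of the Lipschitz constant over $(Z,b)$ within $\mathcal B_0$ also needs care. Assumption (e) holds only almost surely and the neighborhood is fixed, but the derivative bound in Step 3 depends only on $\min F''$ on $[0,1]$, so the constant is uniform.
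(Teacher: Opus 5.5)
Your route differs from the paper's. The paper applies a second-order mean value inequality to $\Phi$ itself, using $\Phi(\beta_0)=\beta_0$, $\nabla\Phi(\beta_0)=0$ and $\sup_{\mathcal B}\norm{\nabla^2\Phi}<\infty$ from Lemma~\ref{lem:Phi_properties}. You instead write $\Phi(\beta)-\beta_0=(\Ebb[Z^\top\hat PZ])^{-1}\Ebb[Z^\top\hat P r]$, where $r$ is the Taylor remainder of $\nabla F$, and bound $r$ directly. Steps 1 and 2 are correct. The route is attractive because it never needs $\nabla^2\Phi$ and uses only $F\in C^3$.

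As written, however, it does not prove the lemma under Assumption~\ref{as:regularity}. Your constant involves $\Ebb\norm{Z}^3$, and the fix you propose fails. Replacing one factor $\norm{\hat x-x_0}$ by $2\sqrt{|\mathcal L|}$ leaves $\norm{\hat x-x_0}^2\le 2\sqrt{|\mathcal L|}\,L\norm{Z}\,\norm{\beta-\beta_0}$. You then get $C\,\Ebb\norm{Z}^2\,\norm{\beta-\beta_0}$, which is linear in $\norm{\beta-\beta_0}$, not quadratic. Assuming $\Ebb\norm{Z}^3<\infty$ instead proves a different statement.

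The missing observation is that $Z$ enters only through $\hat PZ$, and Assumption~\ref{as:regularity}(e) makes $\hat PZ$ almost surely bounded.

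First, the outer factor is $Z^\top\hat P=(\hat PZ)^\top$.

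Second, $\hat x$ and $x_0$ both satisfy $Ax=b$ with the same support, so $x_0-\hat x$ lies in the range of $\hat P$. Subtracting the projected first-order conditions at $\beta$ and $\beta_0$ gives $\hat P\bigl(\nabla F(x_0)-\nabla F(\hat x)\bigr)=\hat PZ(\beta-\beta_0)$. Taking the inner product with $x_0-\hat x$ and using strong monotonicity of $\nabla F$ on $[0,1]^{\mathcal L}$ yields $\norm{x_0-\hat x}\le\norm{\hat PZ(\beta-\beta_0)}/m$, where $m=\min_{ij}\min_{[0,1]}F_{ij}''>0$. This replaces your implicit-function step. Equivalently, the Jacobian there is $-(\hat P\nabla^2F(\hat x)\hat P)^+$, which annihilates everything orthogonal to the range of $\hat P$, so bounding by $\norm{Z}$ discards exactly the structure you need.

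Third, the same identity gives $\norm{\hat PZ(\beta-\beta_0)}\le 2\sup_{x\in[0,1]^{\mathcal L}}\norm{\nabla F(x)}$ for every $\beta\in\mathcal B_0$. This holds because the active subnetwork at $\beta$ is the acyclic one at $\beta_0$, so flows stay in $[0,1]$. Hence, if $\mathcal B_0$ contains the closed ball of radius $r_0$ around $\beta_0$, then $\norm{\hat PZ}\le 2\sup_{[0,1]^{\mathcal L}}\norm{\nabla F}/r_0$ almost surely.

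Your estimate then closes as $\norm{\Phi(\beta)-\beta_0}\le\norm{(\Ebb[Z^\top\hat PZ])^{-1}}\,\tfrac{K_3}{2m^2}\,\Ebb\norm{\hat PZ}^3\,\norm{\beta-\beta_0}^2$, with $K_3=\max_{ij}\sup_{[0,1]}|F_{ij}'''|$. This is a finite deterministic constant.

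The paper's proof never faces this explicitly, because it takes the bound on $\nabla^2\Phi$ from Lemma~\ref{lem:Phi_properties}. But $\nabla^2\Phi(\beta_0)$ is itself an expectation of a term cubic in $\hat PZ$, since $\partial\hat x/\partial\beta$ is of order $\hat PZ$. So that lemma implicitly relies on the same boundedness.
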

For $\beta$ sufficiently close to $\beta_0$, we have $C\norm{\beta-\beta_0}<1$. Then the Lemma guarantees that $\norm{  \Phi(\beta)-\beta_0}$ is strictly smaller than $\norm{\beta-\beta_0}$. Repeated application of $\Phi$ then results in a sequence which quickly converges to $\beta_0$. There is no similar guarantee for the potential other fixed points of $\Phi$, which means the iterative map may not converge to such points.

\paragraph{Quadratic perturbation}

The iterative map $\Phi$ has a particularly elegant form when $F$ is a
quadratic function. We would favor using
\begin{equation}\label{eq:quadF}
    F(x)=\frac{1}{2}x^\top L x,
\end{equation} where $L$ is a diagonal matrix of positive link lengths, since this makes the perturbation function invariant with respect to link splitting. In that case,
\begin{eqnarray*}
\nabla F(x) + \nabla^2 F(x) (y-x)=Lx+ L(y-x)= Ly,
\end{eqnarray*}
such that the iterative map reduces to
\begin{eqnarray*}
\Phi( \beta) = -\Ebb \left[ Z^{\top}\hat P Z \right]^{-1} \Ebb \left[ Z^{\top}\hat P L y \right].
\end{eqnarray*}
In this case, the flow $\hat x$ enters only with the set of active flows into the projection matrix $\hat  P$.


\subsection{Implementation and convergence}

A sample of trips has observations $(y^{n},Z^{n},b^{n}), n=1,\ldots ,N$ that are assumed to be i.i.d. realizations of the random variables $(y,Z,b)$. We suggest estimating $\beta_0$ by the sample analog of the procedure described above, i.e.
\begin{eqnarray}\label{eq:tilde_beta}
    \tilde \beta \equiv \argmin_{\beta} RV_N(\beta) \quad \textrm{s.t.} \quad \Phi_N(\beta)=\beta,
\end{eqnarray}
where $RV_N$ and $\Phi_N$ replace population expectations with sample averages in \eqref{eq:RV} and \eqref{eq:Phi}.

Lemma~\ref{lem:quadratic_convergence} shows that $\Phi(\cdot)$ is a local contraction near $\beta_0$. In general, however, away from $\beta_0$, iterating  $\tilde \beta_{(t+1)} = \Phi_N (\tilde \beta_{(t)})$ can be numerically unstable. We therefore use a damped fixed-point scheme of Krasnoselskii-Mann (KM) type~\citep{mann1953mean} with a non-monotone line search~\citep{grippo1986nonmonotone}. We summarize the proposed approach in the following Algorithm~\ref{iterated_algo}.

\begin{algorithm}[H]\label{iterated_algo}
  \caption{Iterated algorithm}

\begin{itemize}
    \item[Step 0: ]Choose starting value $\tilde \beta_{(0)}$, a convergence tolerance $\tau>0$, a reduction factor $\gamma\in(0,1)$, and integers $M \ge 0, m_{\max} \ge 0$.

    \item[Step 1: ]Given $\tilde \beta_{(t)}$, compute $\hat x^n_{(t)}=\hat x(Z^n \tilde \beta_{(t)},b^n)$, and $\hat P^n_{(t)}=\hat P(\hat x^n_{(t)})$ for $n =1, ..., N$.
\item[ Step 2: ]Compute step $\tilde \beta_{(t+0.5)}=\Phi_N(\tilde \beta_{(t)})$, i.e.
\begin{eqnarray*}
\tilde \beta_{(t+0.5)}=-\left(\Ebb_N\left[Z^{n\top}\hat P^n_{(t)}Z^n\right]%
\right)^{-1}\Ebb_N\left[Z^{n\top}\hat P^n_{(t)} \left( \nabla F(\hat
x^n_{(t)}) + \nabla^2 F(\hat x^n_{(t)}) (y^n-\hat x^n_{(t)} )\right) \right],
\end{eqnarray*}
where $\Ebb_N[\cdot]$ denotes the empirical average over $n$.

\item[Step 3: ] KM step: Update $\tilde \beta_{(t+1)} = (1 - \alpha_{(t)}) \tilde \beta_{(t)} + \alpha_{(t)} \tilde \beta_{(t+0.5)}$ with line search (elaborated below) to determine step size $\alpha_{(t)}$.

\item[Step 4: ] If $\norm{\tilde \beta_{(t+1)} - \Phi_N(\tilde \beta
_{(t+1)})} > \tau$, repeat steps 1-3.

\end{itemize}

\end{algorithm}
\bigskip

Elaborating on the KM iteration (Step 3), we propose the following non-monotone line search scheme to determine the step size $\alpha_{(t)}$. Define the merit function
\[
    Q_N(\beta) \coloneqq \frac{1}{2}\|\beta-\Phi_N(\beta)\|^2.
\]
Letting $m$ run from $0$ to $m_{\max}$, set $\alpha_{(t)} = \gamma^m$ and accept the first $\alpha_{(t)}$ satisfying the non-monotone condition
\[
    Q_N(\tilde\beta_{(t+1)})
    \;\le\;
    \max_{0\le i\le \min\{t, M\}} Q_N(\tilde\beta_{(t-i)}).
  \]
Otherwise, let $\alpha_{(t)}=\gamma^{m_{\max}}$. We discuss in Appendix~\ref{app:Phat} and~\ref{app:implementation_details} how to compute the projection and $\hat x^n_{(t)}$ efficiently.

\paragraph{Practical guidance} Note that we do not guarantee convergence to the true $\beta_0$. This is a general issue with nested fixed-point estimators of this kind; see, for example, \citet{Knittel2014}. Good practice in this situation is to run the estimator from different starting points to determine whether multiple fixed points exist. If multiple fixed points are detected, we recommend choosing the one minimizing the sample residual variance, as in \eqref{eq:tilde_beta}. If multiple fixed points are not found across a range of starting values, this increases confidence that the resulting estimate is the global minimizer of the residual variance among fixed points of $\Phi_N$. 
In the simulation study below (Section~\ref{sec:robustness_to_init_sim}), we tested the microPURC estimator extensively with different starting points and found no evidence that convergence to a different fixed point is problematic. Likewise, in our empirical application (Section~\ref{subsec:emp-init-robust}), estimating with different starting values did not reveal any such issues.

We proceed to establish two theorems concerning the convergence and statistical properties of the microPURC estimator. We first show that there is a neighborhood where $\Phi_N$ with probability approaching one admits a unique fixed point, and from which Algorithm~\ref{iterated_algo} is guaranteed to converge to this fixed point. Convergence is quadratic, except for a small error term that goes to zero as the sample size increases.

\begin{theorem}\label{thm:quadratic_convergence}
Suppose Assumption~\ref{as:regularity} holds, then the following holds with probability approaching $1$:
\begin{itemize}
    \item[1.] $\Phi_N$ admits a unique fixed point $\tilde\beta  \in B \subset\mathcal{B}_0$ satisfying $\tilde\beta = \Phi_N(\tilde\beta)$, on some closed ball $B$ centered at $\beta_0$.
\end{itemize}
Additionally, let $\{\tilde\beta_{(t)}\}_{t\ge 0}$ be generated by
Algorithm~\ref{iterated_algo} from a starting value $\tilde\beta_{(0)}$ sufficiently close to $\tilde\beta$.
Then,
\begin{itemize}
    \item[2.] There exists a constant $C>0$ and a random variable $\epsilon_N\ge 0$ such that,
\begin{equation}\label{eq:aq-bound-app}
  \bigl\|\tilde\beta_{(t+1)}-\tilde\beta\bigr\|
  \;\le\;
  C\,\bigl\|\tilde\beta_{(t)}-\tilde\beta\bigr\|^2
  +
  \epsilon_N\,\bigl\|\tilde\beta_{(t)}-\tilde\beta\bigr\|,
\end{equation}
and $\epsilon_N \xrightarrow{p} 0$.
In particular, $\tilde\beta_{(t)}$ converges to $\tilde\beta$ almost quadratically.
\end{itemize}
\end{theorem}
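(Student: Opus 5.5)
Since the statement has two parts, we will first establish that $\Phi_N$ is, with probability approaching one, a contraction on a closed ball $B\subset\mathcal{B}_0$ centered at $\beta_0$. We will then read off the convergence bound for the algorithm from a Taylor expansion of $\Phi_N$ around its fixed point.

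\textbf{Smoothness on $\mathcal{B}_0$.} By Assumption~\ref{as:regularity}(e), on $\mathcal{B}_0$ the active set, and hence $\hat P$, does not depend on $\beta$ for almost all $(Z,b)$. On the active links, $\hat x(Z\beta,b)$ solves a smooth equality-constrained system. The implicit function theorem together with Assumption~\ref{as:regularity}(c) then gives that $\beta\mapsto \hat x(Z\beta,b)$ is three times continuously differentiable. Its derivatives are bounded by polynomial expressions in $\|Z\|$, and these are integrable by (d) (if needed, we would shrink $B$ and strengthen the moment bound). It follows that both $\Phi(\beta)$ and
\[
\Phi_N(\beta)=-\bigl(\Ebb_N[Z^\top\hat PZ]\bigr)^{-1}\Ebb_N\bigl[Z^\top\hat P(\nabla F(\hat x)+\nabla^2F(\hat x)(y-\hat x))\bigr]
\]
are $C^2$ on $B$. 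Because $\hat P$ is fixed on $B$, the matrix $\Ebb_N[Z^\top\hat PZ]$ does not depend on $\beta$. By the law of large numbers it converges to the matrix in (d), so it is invertible with probability approaching one.

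\textbf{Uniform convergence.} A uniform law of large numbers on the compact ball $B$, justified by dominated derivatives and the i.i.d.\ sampling in (a), gives
\[
\sup_{B}\|\Phi_N-\Phi\|\parrow0 \quad\text{and}\quad \sup_B\|D\Phi_N-D\Phi\|\parrow0 .
\]
Second derivatives are bounded uniformly with probability approaching one. Lemma~\ref{lem:quadratic_convergence} implies $D\Phi(\beta_0)=0$, so $\|D\Phi\|\le C\|\beta-\beta_0\|$ on $B$. Choose the radius $r$ of $B$ so that $Cr\le 1/4$. Then, with probability approaching one, $\sup_B\|D\Phi_N\|\le 1/2$ and
\[
\|\Phi_N(\beta_0)-\beta_0\|=\|\Phi_N(\beta_0)-\Phi(\beta_0)\|<r/2 .
\]
These two facts give $\Phi_N(B)\subset B$, and $\Phi_N$ is a contraction on $B$. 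Banach's fixed point theorem then yields a unique fixed point $\tilde\beta\in B$, which proves part 1.

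\textbf{Local bound for the algorithm.} Expand around $\tilde\beta$:
\[
\Phi_N(\beta)-\tilde\beta=D\Phi_N(\tilde\beta)(\beta-\tilde\beta)+R_N(\beta),\qquad \|R_N\|\le C\|\beta-\tilde\beta\|^2 .
\]
Set $\epsilon_N=\|D\Phi_N(\tilde\beta)\|$. Then
\[
\epsilon_N\le \|D\Phi_N(\tilde\beta)-D\Phi(\tilde\beta)\|+\|D\Phi(\tilde\beta)\|\le o_p(1)+C\|\tilde\beta-\beta_0\| .
\]
The contraction property gives
\[
\|\tilde\beta-\beta_0\|\le 2\|\Phi_N(\beta_0)-\beta_0\|\parrow0,
\]
and therefore $\epsilon_N\parrow0$. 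For the KM step, note that
\[
\tilde\beta_{(t+1)}-\tilde\beta=(1-\alpha)(\tilde\beta_{(t)}-\tilde\beta)+\alpha(\Phi_N(\tilde\beta_{(t)})-\tilde\beta).
\]
To get \eqref{eq:aq-bound-app} we need $\alpha_{(t)}=1$ to be accepted near $\tilde\beta$. We would show this by checking that the full step strictly decreases $Q_N$. This holds because, near the fixed point, $Q_N(\beta)$ is comparable to $\|\beta-\tilde\beta\|^2$, up to factors $1\pm\epsilon_N$ and $O(\|\beta-\tilde\beta\|)$. Meanwhile $\Phi_N$ shrinks the distance to $\tilde\beta$ by a factor $\epsilon_N+C\|\beta-\tilde\beta\|<1$. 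Once $\alpha_{(t)}=1$ is accepted, the iterates stay in $B$, and the displayed expansion gives \eqref{eq:aq-bound-app}.

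\textbf{Main obstacle.} The delicate step is the uniform convergence of the derivatives of $\Phi_N$, i.e.\ the claim $\sup_B\|D\Phi_N-D\Phi\|\parrow0$. It requires integrable envelopes for the derivatives of $\hat x$ in $\beta$, which involve inverses of $\nabla^2F$ restricted to the active subnetwork, combined with the fact that the active set is constant only almost surely. I would first try to bound these inverses using the positive second derivatives of the $F_{ij}$, since optimal flows lie in the compact range $[0,1]$ so that $F_{ij}''$ is bounded away from zero there. I would then multiply this bound by the moment condition on $Z$.
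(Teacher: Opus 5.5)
Your proposal is correct and follows the paper's proof essentially step for step. Both arguments use uniform convergence of $\Phi_N$ and $\nabla\Phi_N$ on a ball where $\|\nabla\Phi\|$ is small; the paper simply takes this from Lemma~\ref{lem:Phi_properties}, so your ``main obstacle'' is delegated there. Both then apply the Banach fixed point theorem, Taylor-expand at $\tilde\beta$ with $\epsilon_N=\|\nabla\Phi_N(\tilde\beta)\|$, and show the full KM step is accepted.

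There are two minor differences. First, you obtain $\epsilon_N\parrow 0$ directly from $\|\tilde\beta-\beta_0\|\le 2\|\Phi_N(\beta_0)-\beta_0\|$, whereas the paper cites the consistency result of Theorem~\ref{thm:normality}. Second, for $\alpha_{(t)}=1$ the paper uses the simpler bound $\|\Phi_N(\Phi_N(\beta))-\Phi_N(\beta)\|\le \eta\,\|\Phi_N(\beta)-\beta\|$ with $\eta=\sup_{B}\|\nabla\Phi_N\|<1$, valid on the whole ball, instead of your local comparison of $Q_N$ with $\|\beta-\tilde\beta\|^2$; your contraction constant $1/2$ would give you that bound at once.
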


The line search helps improve numerical stability by guiding the iterates towards the fixed point. When the algorithm has reached a point sufficiently close to the fixed point $\tilde{\beta}$, the proof of Theorem~\ref{thm:quadratic_convergence} shows that the non-monotone line search gives step size $\alpha_{(t)} = 1$ for all subsequent iterates, such that the line search ceases to play a role and we retain local quadratic convergence.

The next theorem establishes consistency and asymptotic normality of the estimate $\tilde\beta$. The asymptotic variance has the classical sandwich form $H^{-1}SH^{-1}$, as in standard GMM or quasi-maximum likelihood estimation. Here $H=\Ebb\left[Z^\top \hat P Z\right]$ plays the role of the curvature, or information, matrix familiar from OLS ($X^\top X$) or MLE (the expected Hessian): it captures how strongly the projected link attributes respond to changes in $\beta$. $S$ is the variance of the estimating equation itself, capturing how the deviation of each trip's realized route $y$ from its predicted flow $\hat x$ propagates into noise in the moment condition; the sandwich form then converts this moment-level noise into parameter-level uncertainty. This allows us to compute standard errors and conduct inference on the parameter estimates obtained.

\begin{theorem}
\label{thm:normality}  Given Assumption \ref{as:regularity}, the estimator $\tilde \beta$ is consistent, $\left\| \tilde \beta-\beta_0\right\|\overset{p}%
{\rightarrow} 0$ and $\sqrt{N}$-asymptotically normal $\sqrt{N}(\tilde \beta_N -\beta_0)\overset{D}{\rightarrow} \mathcal{N} (0,H^{-1}S H^{-1})$ where
\begin{eqnarray*}
H=\Ebb\left[ Z^\top \hat P Z\right], \quad S=\Ebb\left[ Z^\top
\hat P \nabla^2 F(x)  (y-x)(y-x)^\top \nabla^2 F(x)\hat P  Z\right].
\end{eqnarray*}
The asymptotic variance of $\tilde \beta$ can be consistently estimated by replacing $H$ and $S$ by their sample counterparts,
\begin{eqnarray}  \label{eq:H_hat and S_hat} \tilde H=\Ebb_N\left[Z^{n\top} \hat P^n Z^n \right], \quad
\tilde S=\Ebb_N\left[ Z^{n\top}\hat P^n \nabla^2 F(\hat x^n) (y^n-\hat x^n)(y^n-\hat x^n)^\top  \nabla^2 F(\hat x^n) \hat P^n Z^n \right],
\end{eqnarray}
where  $\hat x^n=\hat x(Z^n \tilde \beta,b^n)$ and $\hat P^n=\hat P( \hat x^n)$.
\end{theorem}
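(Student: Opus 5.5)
The plan is to treat $\tilde\beta$ as a Z-estimator defined by the sample fixed-point equation $\tilde\beta=\Phi_N(\tilde\beta)$ on the ball $B$ of Theorem~\ref{thm:quadratic_convergence}. By Assumption~\ref{as:regularity}(e), the active set, and hence $\hat P$, is constant in $\beta$ on $\mathcal{B}_0$. Define the sample moment
\[
g_N(\beta)=\Ebb_N\bigl[Z^{n\top}\hat P^n\bigl(Z^n\beta+\nabla F(\hat x^n_\beta)+\nabla^2F(\hat x^n_\beta)(y^n-\hat x^n_\beta)\bigr)\bigr],\qquad \hat x^n_\beta=\hat x(Z^n\beta,b^n).
\]
With $\tilde H_\beta=\Ebb_N[Z^{n\top}\hat P^nZ^n]$, the fixed-point condition is exactly $g_N(\tilde\beta)=0$, because $\Phi_N(\beta)-\beta=-\tilde H_\beta^{-1}g_N(\beta)$. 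Let $g(\beta)$ be the population analogue. Lemma~\ref{lem:fxp_foc} and the computation preceding it give $g(\beta_0)=0$.

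\textbf{Consistency.} Theorem~\ref{thm:quadratic_convergence}(1) gives a unique fixed point of $\Phi_N$ in $B$ with probability approaching one. I will show that for every $\delta>0$ the root lies in $B_\delta(\beta_0)$ with probability approaching one. The first step is a uniform law of large numbers for $g_N-g$ on $B$. The summands are continuous in $\beta$ on $B$ because $\hat P$ is fixed and $\hat x(Z\beta,b)$ is $C^1$ there, by the implicit function theorem applied to the reduced KKT system on the fixed active set (Assumption (c),(e)). They are dominated by an integrable envelope, since $y,\hat x\in[0,1]^{\mathcal L}$ after flow bounds, $\|\hat P\|\le1$, and $\Ebb\|Z\|^2<\infty$. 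The second step is local uniqueness of the population root. Lemma~\ref{lem:quadratic_convergence} gives $\Phi'(\beta_0)=0$, so $\partial g/\partial\beta(\beta_0)=H-H\Phi'(\beta_0)+\dots$ reduces to $H$, which is nonsingular by (d). So $\beta_0$ is the isolated zero of $g$ on $B$ after shrinking $B$, and the standard Z-estimator argument yields $\tilde\beta\parrow\beta_0$.

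\textbf{Asymptotic normality.} Expand $0=g_N(\tilde\beta)=g_N(\beta_0)+G_N(\bar\beta)(\tilde\beta-\beta_0)$ using the mean value theorem row by row. Here
\[
G_N(\beta)=\tilde H_\beta+\Ebb_N\bigl[Z^{n\top}\hat P^n\nabla^3F(\hat x^n_\beta)[\cdot\,,\,y^n-\hat x^n_\beta]\,\partial_\beta\hat x^n_\beta\bigr].
\]
This formula uses the key cancellation that the $\nabla^2F\,\partial\hat x$ terms from differentiating $\nabla F(\hat x)$ and $-\nabla^2F(\hat x)\hat x$ cancel. The debiasing term is designed to produce exactly this, and it is the reason $\Phi'(\beta_0)=0$.

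At $\beta_0$, the remaining term has conditional mean zero given $(Z,b)$ since $\Ebb[y\mid Z,b]=x_0$. A ULLN, using (c) for continuity of $\nabla^3F$ and consistency, then gives $G_N(\bar\beta)\parrow H$. Next, $g_N(\beta_0)=\Ebb_N[Z^{n\top}\hat P^n\nabla^2F(x_0^n)(y^n-x_0^n)]$, because $\hat P(Z\beta_0+\nabla F(x_0))=0$ by \eqref{eq:projectedFOC}. This is an i.i.d. mean-zero average with variance $S$, which is finite by the moment condition on $Z$ and boundedness of $y-x$. The CLT gives $\sqrt N g_N(\beta_0)\Darrow\mathcal N(0,S)$, and Slutsky yields the stated limit $H^{-1}SH^{-1}$.

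\textbf{Variance estimation and the main obstacle.} Consistency of $\tilde H$ and $\tilde S$ follows from the same ULLN plus consistency of $\tilde\beta$. Here $\hat P^n$ evaluated at $\tilde\beta$ equals $\hat P^n$ evaluated at $\beta_0$ once $\tilde\beta\in\mathcal B_0$, and $S$ needs the envelope $\|Z\|^2$. I expect the main obstacle to be the regularity bookkeeping rather than the algebra. Assumption (e) only gives a constant active set for almost all $(Z,b)$ on a common neighborhood, and we need derivatives of $\hat x$ in $\beta$ that are uniformly dominated across $(Z,b)$ to justify the ULLN for $G_N$. The first thing I would try is to bound $\partial_\beta\hat x=-\hat B(\hat B\nabla^2F\hat B)^{-1}\hat P$-type expressions by $C\|Z\|$, using that $\nabla^2F$ is bounded below on the compact flow range $[0,1]$ by strict convexity with positive second derivative. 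This gives the needed $\|Z\|^2$ envelope.
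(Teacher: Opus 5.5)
There is a genuine gap: you prove the asymptotics for the root of $\Phi_N$ inside the ball $B$ of Theorem~\ref{thm:quadratic_convergence}, but never show that the estimator is that root. The statement is about the microPURC estimator defined in \eqref{eq:tilde_beta}. That estimator is the fixed point of $\Phi_N$ that minimizes $RV_N$ among \emph{all} fixed points in the compact parameter space $\Theta$. Theorem~\ref{thm:quadratic_convergence} reuses the symbol $\tilde\beta$ for the local fixed point, which invites the conflation.

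The fixed-point equation does not identify $\beta_0$ globally. Lemma~\ref{lem:Phi_properties} gives only local uniqueness. By Lemma~\ref{lem:fxp_foc}, a population fixed point need only satisfy the weighted moment $\Ebb[Z^\top P^*\nabla^2F(x^*)(x_0-x^*)]=0$, and Assumption~\ref{as:regularity} does not rule out solutions away from $\beta_0$. This is exactly why Theorem~\ref{thm:estimator} characterizes $\beta_0$ through the residual variance.

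A Z-estimator argument confined to $B$ cannot see fixed points of $\Phi_N$ outside $B$, so you have not shown that $\tilde\beta\in B$. As a result, consistency, asymptotic normality, and the consistency of $\tilde H,\tilde S$ (which needs $\tilde\beta\in\mathcal B_0$) all remain unproved for the actual estimator. Tellingly, Lemma~\ref{lem:unique_RV} never enters your argument.

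The missing step is the selection argument, which goes as follows.
Call your local root $\check\beta$, so that $\check\beta\parrow\beta_0$. Combine a uniform law of large numbers for $RV_N$ on $\Theta$ with the defining property of $\tilde\beta$:
\[
RV(\tilde\beta)\le RV_N(\tilde\beta)+o_p(1)\le RV_N(\check\beta)+o_p(1)=RV(\beta_0)+o_p(1).
\]
By Lemma~\ref{lem:unique_RV}, continuity of $RV$, and compactness of $\Theta\setminus\mathrm{int}(B)$, you have $\inf_{\Theta\setminus\mathrm{int}(B)}RV-RV(\beta_0)>0$. Hence $\Pr(\tilde\beta\notin B)\to 0$, and on $\{\tilde\beta\in B\}$ uniqueness of the fixed point in $B$ gives $\tilde\beta=\check\beta$.

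With this step added, the rest of your proof is sound and differs from the paper mainly in packaging. The paper takes the auxiliary minimizer $\bar\beta$ of $Q_N(\beta)=\tfrac12\norm{\beta-\Phi_N(\beta)}^2$ over a small ball. It then applies Newey--McFadden's extremum-estimator theorems, using $\nabla^2 Q(\beta_0)=I$ and $\sqrt N\nabla Q_N(\beta_0)\Darrow\mathcal N(0,H^{-1}SH^{-1})$. You instead expand the moment $g_N$ directly, with Jacobian $G_N\parrow H$. Your route is more direct. Your envelope bound on $\partial_\beta\hat x$ and your treatment of $\tilde H,\tilde S$ also fill in points the paper leaves implicit.
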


A consistent estimator of the standard error of $\tilde \beta$ is then
\begin{equation}\label{eq:hat_sigma}
    \tilde \sigma=\sqrt{\mathrm{diag}(\tilde H^{-1}\tilde S \tilde H^{-1})/N}
\end{equation}

\section{Simulation}\label{sec:sim}
This section examines the finite-sample behavior of the microPURC estimator and Algorithm~\ref{iterated_algo} on a benchmark network.

\subsection{Setup}
Throughout the simulation experiments, we use the Eastern Massachusetts network~\citep{bargeraNetwork}, which consists of $|\mathcal{L}|=258$ directed links and 1113 OD pairs. We take as given the corresponding road network and the origin-destination (OD) demand matrix.
We consider a link attribute matrix $Z \in \mathbb{R}^{|\mathcal{L}| \times 3}$ with three link attributes: link length, link travel time, and a synthetic attribute generated as $\exp(\varepsilon_{ij})$ with $\varepsilon_{ij}$ sampled from a normal distribution $\mathcal{N}(0,0.2^2)$.
The true parameter is set to be
\[
  \beta_0 = (0.5, 0.5, 0.5)^\top.
\]
With these parameters, the cost components are of comparable magnitude, and the predicted flow contains several alternative paths for most OD pairs.
To simulate a trip, we first draw an OD pair from the OD matrix with sampling probabilities proportional to the OD demands.
Next, we solve the PURC problem with quadratic perturbation~\eqref{eq:quadF} under $\beta_0$, and then sample a simple path from the resulting optimal link flows. Specifically, starting from the origin, we perform random walks with the following link choice probability
    \begin{equation*}
        p_{ij}^n = \frac{x_{ij}^{n*}}{\sum_{k:(i,k) \in \mathcal{L}} x_{ik}^{n*}},
    \end{equation*}
and terminate at the destination node.
This delivers a set of route choices $\{y_i\}_{i=1}^N$ and associated demand vectors $\{b_i\}_{i=1}^N$, which we use as input to the microPURC estimator. Further details on the simulation setup are provided in  Appendix~\ref{app:implementation_details}. The appendix also reports a sensitivity analysis for the parameters of Algorithm~\ref{iterated_algo}. 
We find that the estimates are stable regardless of the choice of algorithm parameters. The parameters have only limited influence on runtime, so we fix them at their defaults throughout.

\subsection{Finite-sample performance}
We first assess bias, root mean-square error (RMSE), and the quality of the standard errors.
For each sample size
\[
  N \in \{200, 500, 1000, 2000, 5000, 10000, 20000\},
\]
we generate $R=500$ sets of observations $\{y_i\}_{i=1}^N, \{b_i\}_{i=1}^N$, which results in a total of $3,500$ datasets. On each dataset, we estimate $\tilde{\beta}$ from the same initialization
\[
\tilde \beta_{(0)} = (1.0, 1.0, 1.0)^\top.
\]
For parameter $k$ and sample size $N$, denoted as $\tilde \beta^k_{N}$, we compute the Monte Carlo bias and RMSE as follows
\[
  \text{Bias}^k(N) = \mathbb{E}[\tilde\beta_{N}^k - \beta_{0}^k],
  \qquad
  \text{RMSE}^k(N) = \sqrt{\mathbb{E}[(\tilde\beta_{N}^k - \beta_{0}^k)^2]},
\]
and we report their empirical counterparts based on the $R$ replications. The statistics reported in Table~\ref{tab:sim-bias-rmse} and Table~\ref{tab:sim-sd-se} are computed over all replications. In Appendix~\ref{app:sim_converged}, we show that restricting them to the converged runs only leaves the statistics effectively unchanged.
Table~\ref{tab:sim-bias-rmse} presents the estimated bias, RMSE, and $\sqrt{N}$-\text{RMSE} for each coefficient.

\begin{table}[H]
  \centering
  \caption{Bias, RMSE, and $\sqrt{N}$-RMSE by coefficient and sample size.}
  \label{tab:sim-bias-rmse}
  \small
  \setlength\tabcolsep{4pt}
  \begin{tabular}{cccccccccc}
    \toprule
    $N$ & \multicolumn{3}{c}{$\tilde\beta^{\text{length}}$} & \multicolumn{3}{c}{$\tilde\beta^{\text{time}}$} & \multicolumn{3}{c}{$\tilde\beta^{\text{sim}}$} \\
    \cmidrule(lr){2-4} \cmidrule(lr){5-7} \cmidrule(lr){8-10}
    & Bias & RMSE & $\sqrt{N}$-RMSE & Bias & RMSE & $\sqrt{N}$-RMSE & Bias & RMSE & $\sqrt{N}$-RMSE \\
    \midrule
    200 & 0.0229 & 0.1686 & 2.3851 & 0.0170 & 0.1424 & 2.0141 & -0.0029 & 0.4394 & 6.2139 \\
    500 & 0.0115 & 0.0974 & 2.1785 & 0.0050 & 0.0795 & 1.7777 & 0.0027 & 0.2630 & 5.8806 \\
    1,000 & 0.0093 & 0.0654 & 2.0683 & -0.0021 & 0.0552 & 1.7458 & 0.0061 & 0.1991 & 6.2955 \\
    2,000 & 0.0003 & 0.0441 & 1.9741 & 0.0021 & 0.0384 & 1.7187 & 0.0056 & 0.1302 & 5.8238 \\
    5,000 & 0.0007 & 0.0277 & 1.9622 & 0.0012 & 0.0230 & 1.6239 & -0.0007 & 0.0801 & 5.6605 \\
    10,000 & 0.0008 & 0.0177 & 1.7725 & 0.0001 & 0.0168 & 1.6790 & 0.0014 & 0.0571 & 5.7122 \\
    20,000 & -0.0004 & 0.0137 & 1.9314 & 0.0003 & 0.0115 & 1.6247 & 0.0012 & 0.0404 & 5.7162 \\
    \bottomrule
  \end{tabular}
\end{table}

The estimator is effectively unbiased at all sample sizes: the absolute bias is an order of magnitude smaller than the corresponding RMSE for every coefficient, implying the bias is negligible compared to random error.
The $\sqrt{N}\,\text{RMSE}^k(N)$ is approximately constant across values of $N$, which is consistent with the $\sqrt{N}$-rate predicted by the asymptotic theory.

To assess the approximation in the standard deviation formula, we compare the empirical standard deviation of $\tilde \beta^k_{N}$ across replications to the average estimated standard error using Eq.~\eqref{eq:hat_sigma}, and we examine the coverage of the Wald intervals.
Let
\[
  \text{SD}^k(N) = \sqrt{\frac{1}{R-1}\sum_{r=1}^{R}\left(\tilde\beta_{N,(r)}^{k}-\mathbb{E}[\tilde{\beta}^k_{N}]\right)^2}, \qquad
  \overline{\text{SE}}^k(N) = \mathbb{E}[\tilde{\sigma}^k_{N}],
\]
and consider the nominal $95\%$ interval
\[
  \tilde\beta^k_{N} \pm 1.96\,\tilde{\sigma}^k_{N}.
\]
Table~\ref{tab:sim-sd-se} reports $\text{SD}^k(N)$, $\overline{\text{SE}}^k(N)$, and the empirical coverage, that is, the percentage of replications in which the nominal $95\%$ interval contains $\beta_0$.

\begin{table}[H]
  \centering
  \caption{Empirical SD, $\overline{\text{SE}}$, and 95\% coverage by coefficient and sample size.}
  \label{tab:sim-sd-se}
  \small
  \begin{tabular}{cccccccccc}
    \toprule
    $N$ & \multicolumn{3}{c}{$\tilde\beta^{\text{length}}$} & \multicolumn{3}{c}{$\tilde\beta^{\text{time}}$} & \multicolumn{3}{c}{$\tilde\beta^{\text{sim}}$} \\
    \cmidrule(lr){2-4} \cmidrule(lr){5-7} \cmidrule(lr){8-10}
    & SD & $\overline{\text{SE}}$ & Coverage & SD & $\overline{\text{SE}}$ & Coverage & SD & $\overline{\text{SE}}$ & Coverage \\
    \midrule
    200 & 0.1673 & 0.1333 & $92.18\%$ & 0.1415 & 0.1180 & $90.58\%$ & 0.4398 & 0.4063 & $94.19\%$ \\
    500 & 0.0968 & 0.0844 & $90.20\%$ & 0.0794 & 0.0744 & $93.60\%$ & 0.2632 & 0.2550 & $93.40\%$ \\
    1,000 & 0.0648 & 0.0599 & $93.40\%$ & 0.0552 & 0.0524 & $94.20\%$ & 0.1992 & 0.1800 & $91.80\%$ \\
    2,000 & 0.0442 & 0.0418 & $94.40\%$ & 0.0384 & 0.0369 & $94.20\%$ & 0.1302 & 0.1268 & $94.40\%$ \\
    5,000 & 0.0278 & 0.0268 & $93.40\%$ & 0.0230 & 0.0233 & $96.20\%$ & 0.0801 & 0.0803 & $94.40\%$ \\
    10,000 & 0.0177 & 0.0189 & $97.00\%$ & 0.0168 & 0.0164 & $94.00\%$ & 0.0572 & 0.0568 & $94.40\%$ \\
    20,000 & 0.0137 & 0.0134 & $93.80\%$ & 0.0115 & 0.0116 & $94.60\%$ & 0.0404 & 0.0402 & $95.40\%$ \\
    \bottomrule
  \end{tabular}
\end{table}

Across all coefficients and sample sizes in Table~\ref{tab:sim-sd-se}, the empirical standard deviations and the average estimated standard errors are very close.
The resulting 95\% Wald intervals achieve coverage close to the nominal level: for all three coefficients and all $N$, the empirical coverage lies in the 90–97\% range, and there is no systematic drift as $N$ increases.
These patterns are in line with the asymptotic variance expression in Eq.~\eqref{eq:hat_sigma} and with the asymptotic normality predicted by Theorem~\ref{thm:normality}.

Algorithmic diagnostics for the fixed-point iteration are reported in Table~\ref{tab:sim-runtime}.
\begin{table}[H]
  \centering
\begin{threeparttable}
  \caption{Convergence and runtime of Algorithm~\ref{iterated_algo} by sample size.}
  \label{tab:sim-runtime}
  \begin{tabular}{cccccccc}
    \toprule
    & \multicolumn{7}{c}{$N$} \\
    \cmidrule(lr){2-8}
    & 200 & 500 & 1,000 & 2,000 & 5,000 & 10,000 & 20,000 \\
    \midrule
    Convergence rate$^*$ & $75.00\%$ & $80.20\%$ & $88.40\%$ & $92.40\%$ & $94.40\%$ & $97.80\%$ & $99.20\%$ \\
    Avg. iterations & 6.96 & 6.30 & 6.40 & 6.29 & 5.64 & 4.86 & 4.26 \\
    Avg. runtime[s] & 1.87 & 1.84 & 1.29 & 1.23 & 1.40 & 0.91 & 0.73 \\
    \bottomrule
  \end{tabular}
  \begin{tablenotes}
      \footnotesize
      \item $*$: Convergence rate is defined as the percentage of replication runs that reach fixed-point residual tolerance within 200 iterations.
    \end{tablenotes}
  \end{threeparttable}
\end{table}

Convergence rates are high and increase with $N$, reaching about 99\% for large samples.
On the other hand, convergence rates are lower for small $N$.
This is because, for small $N$, the empirical map $\Phi_N$ can be noisy and may not behave like the population map $\Phi$ outside a neighborhood of $\beta_0$; the algorithm hits the iteration cap (set as 200 here) and is declared non-convergent under the strict tolerance $\tau = 10^{-4}\sqrt{K}$.
As $N$ grows, $\Phi_N$ stabilizes and the observed convergence rate increases.
We formalize these properties of the mappings $\Phi$ and $\Phi_N$ in Lemma~\ref{lem:Phi_properties} in the Appendix.

The average number of iterations reduces from roughly 7 to about 4 as $N$ increases. In this example, the runtime does not grow with $N$: although the cost of solving the PURC problems (Step 1) grows linearly in $N$, the runtime is largely compensated by the decreases in the number of iterations for updating $\tilde \beta$. We exploit in our implementation the fact that the microPURC estimator is embarrassingly parallelizable.
We show the scalability of the microPURC estimator in Section~\ref{sec:scalability} and demonstrate its computational efficiency in the large-scale empirical application in Section~\ref{sec:empirical_application}.

\subsection{Robustness to initialization}\label{sec:robustness_to_init_sim}
We next examine how sensitive the proposed Algorithm~\ref{iterated_algo} is to the choice of starting values.
Our Theorem~\ref{thm:quadratic_convergence} and Theorem~\ref{thm:normality} characterize the (local) behavior of the fixed-point map in a neighborhood of $\beta_0$, but in practice one would like the algorithm to converge reliably from a wide range of initializations.
To explore the size of this neighborhood in practice, we vary the starting point $\tilde \beta_{(0)}$ randomly and track the convergence of the estimator and the final estimate.

We fix $N = 10,000$ and generate $10$ independent datasets in the same way as above.
Working in log-parameter space $\theta = \log \beta$, we consider balls of radii
\[
  r \in \{0.5, 1.0, 2.0, 3.0, 5.0, 8.0\}
\]
around $\theta_0 = \log \beta_0$.
For each dataset $d$ and each radius $r$, we draw $M = 1000$ initial values $\theta$ uniformly from the Euclidean ball of radius $r$ around $\theta_0$, exponentiate to obtain positive starting values $\tilde \beta_{(0)} = \exp(\theta)$, and run Algorithm~\ref{iterated_algo} on dataset $d$ with initial parameter $\tilde \beta_{(0)}$. This results in a total of $60,000$ estimation runs.

For each run, we record whether the algorithm converges (reaching $\tau=10^{-4}\sqrt{K}$ within 200 iterations), the number of iterations, and the Euclidean distance between the final estimate and the true parameter, $\norm{\tilde\beta-\beta_0}$.
Table~\ref{tab:sim-cball} summarizes these diagnostics averaged over datasets and initializations.

\begin{table}[H]
  \centering
  \begin{threeparttable}
  \caption{Convergence and accuracy by radius of the initialization ball.}
  \label{tab:sim-cball}
  \small
  \begin{tabular}{ccccc}
    \toprule
    Radius $r$
      & Starting distance
      & Accuracy
      & Convergence rate$^*$
      & Avg. iterations \\
      & $\mathbb{E}\big[\|\tilde\beta_{(0)}-\beta_0\|\big]$
      & $\mathbb{E}\big[\|\tilde\beta-\beta_0\|\big]$
      &
      &  \\
    \midrule
     0.5 & 0.1267 & 0.0557 & 100.0\% &  3.73 \\
     1.0 & 0.2632 & 0.0557 & 100.0\% &  4.01 \\
     2.0 & 0.6246 & 0.0557 & 100.0\% &  4.36 \\
     3.0 & 1.2200 & 0.0557 & 100.0\% &  4.68 \\
     5.0 & 4.7606 & 0.0557 & 100.0\% & 5.53 \\
     8.0 &38.4247 & 0.0556 & 99.81\% &  6.88\\
    \bottomrule
  \end{tabular}
  \begin{tablenotes}
      \footnotesize
      \item $*$: Convergence rate is defined as the percentage of replication runs that reach fixed-point residual tolerance $\tau = 10^{-4}\sqrt{K}$ within 200 iterations.
    \end{tablenotes}
  \end{threeparttable}
\end{table}

The starting values span a wide range: at $r = 0.5$ a typical initialization is very close to $\beta_0$, with $\mathbb{E}\|\tilde\beta_{(0)}-\beta_0\|\approx 0.13$, while at $r = 3.0$ it is already about $1.2$ in Euclidean norm, and at $r = 8.0$ the average distance is roughly $38$, many times the length of the true parameter vector ($\norm{\beta_0} \approx 0.87$). 
Despite this, the algorithm's behavior remains stable. 
The algorithm converges from every start at $r \leq 5.0$, and from 99.8\% of starts
even at $r = 8$.
Among converged runs, the average distance to the truth, $\mathbb{E}\|\tilde\beta - \beta_0\|$, is essentially constant across radii at about $0.056$, and the average number of iterations increases only moderately with $r$, remaining below about 7 even at the largest radii. These patterns indicate that, in this example, over a wide range of initial values in log-parameter space, including starting points that are very far from $\beta_0$, the iterations generated by Algorithm~\ref{iterated_algo} converge with high probability to essentially the true parameter $\beta_0$.

\subsection{Active set and convergence analysis}\label{sec:active_set}
This subsection demonstrates the local stability of the trip-level active sets as required by the asymptotic theory (Assumption~\ref{as:regularity}(e)) and connects it to the active-set changes observed during estimation.

A trip's active set changes only when one of its links switches from active to inactive. Under the dual slack classification, this switch occurs when the link slack
\[
    s_{ij}=c_{ij}+(A^\top\hat\eta)_{ij}
\]
crosses zero. 
Since the slack is measured in cost units, dividing by $\|Z_{ij}\|$ converts it to the units of $\beta$. 
This allows us to define the margin of trip $n$ at the estimate as
\begin{equation}\label{eq:margin_def}
  m_n \coloneqq \min_{ij}\ \frac{|s_{ij}|}{\|Z_{ij}\|},
\end{equation}
where the minimum is taken over all links.
Thus $m_n$ can be interpreted as the distance, in $\beta$-units, from $\tilde\beta$ to the nearest active-set boundary for trip $n$. 
A small margin means that the trip is close to a switching boundary, while a large margin means that the trip lies well inside a region where its active set is fixed.

\begin{figure}[H]
  \centering
  \includegraphics[width=0.72\textwidth]{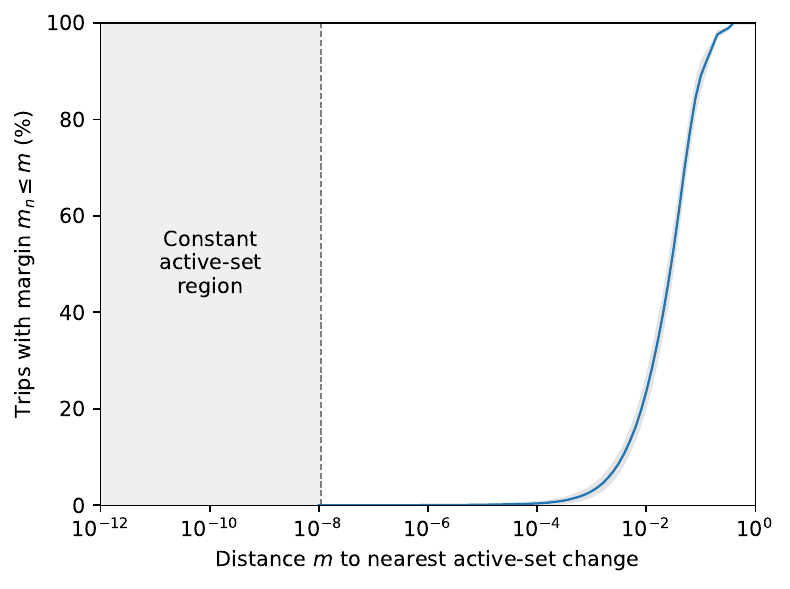}
  \caption{Cumulative distribution of the per-trip active-set margin at the estimate ($N=500$ with shaded band representing $R=500$ replications).}
  \label{fig:margin-cdf}
\end{figure}

Figure~\ref{fig:margin-cdf} shows the distribution of $m_n$. First, we see that the margin is bounded away from zero: the minimum over the $R=500$ replications ($250{,}000$ trips) is about $10^{-8}$, so, even at this scale, no trip lies on an exact boundary, and every trip has a genuine neighborhood of $\tilde\beta$ on which its active set is fixed. Hence, it provides a direct numerical validation of Assumption~\ref{as:regularity}(e). 
The shaded region sharpens this from a per-trip to a sample-wide observation: below the smallest margin ($1\times10^{-8}$ in this example\footnote{We use machine precision at $10^{-16}.$}) no trip changes its active links at all, so every $\hat B^n$ and projection $\hat P^n$ is constant and the sample map $\Phi_N$ is smooth in this region. Second, on the other hand, the neighborhoods vary across trips: the median trip lies about $0.03$ $\beta$-units from its nearest boundary, but roughly a quarter lie within $0.01$ and a thin tail runs down to the $10^{-8}$ floor. This means local constancy holds in general as the theory predicts, but its radius can be trip-specific.

\begin{figure}[H]
  \centering
  \includegraphics[width=0.72\textwidth]{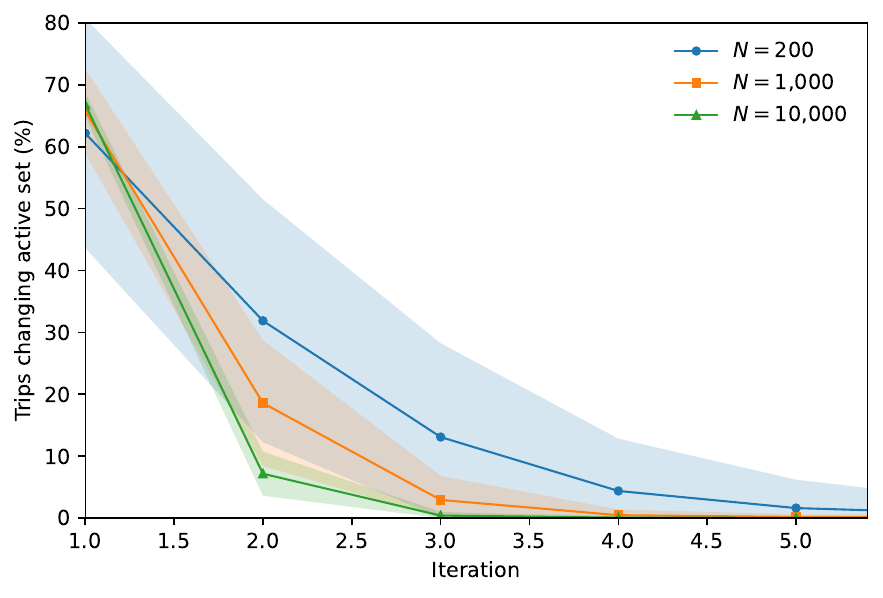}
    \caption{Active-set change across the iterations of Algorithm~\ref{iterated_algo}, by sample size (with shaded band representing $R=500$ replications).}
  \label{fig:iter_active_set}
\end{figure}

This margin distribution explains the pattern of active-set changes during estimation. 
Figure~\ref{fig:iter_active_set} follows the active sets across the outer iterations of Algorithm~\ref{iterated_algo}, plotting at each iteration the share of trips whose active set differs from the previous iterate. 
These changes are substantial in the first iterations because the early parameter updates move far enough to cross many active-set boundaries. They then quickly disappear in larger samples: after a few iterations, the iterates enter a region where all trip-level active sets are effectively stable. At $N=200$, however, a minority of replications retain a small persistent tail. This tail is carried by a small number of trips (less than $0.5\%$) with small margins, whose active sets switch back and forth as the iterate crosses their nearby boundaries.

\begin{figure}[H]
  \centering
  \includegraphics[width=1.0\textwidth]{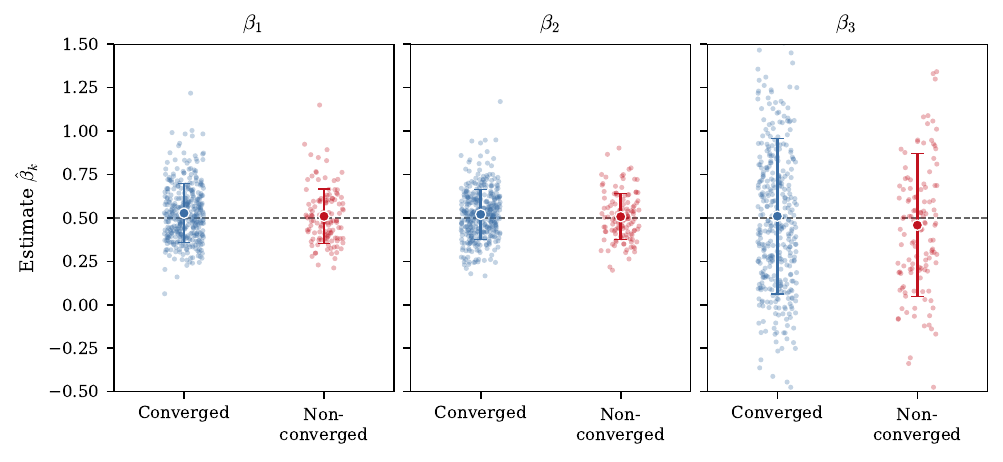}
    \caption{Estimates at $N=200$ by convergence status with $R=500$ replications (dashed line represents true value).}
  \label{fig:nonconv-estimates}
\end{figure}

Crucially, failing to converge leaves the estimates unbiased. Figure~\ref{fig:nonconv-estimates} plots the estimates at $N=200$, the worst case in our examples, separately for the converged and non-converged replications. 
For every coefficient, the two groups share the same center: the mean of the non-converged estimates is statistically indistinguishable from both the converged mean and the truth, so non-convergence introduces no systematic error. 
In addition, the standard deviation of the non-converged estimates is about $8\%$ smaller than that of the converged ones for each coefficient. 
Read together with Figure~\ref{fig:iter_active_set}, this indicates that non-convergence is driven by a small number of trips lying close to an activation boundary, which has little impact on statistical inference in practice.

In sum, the local constant active-set condition is empirically satisfied, and once the iterate enters a common active-set region, the estimator is locally well behaved. Non-convergence arises only when the finite-sample iterate stays close enough to a boundary that a few small-margin trips cannot settle, while such a finite-sample boundary effect shrinks with $N$.
 
\subsection{Robustness to repeated trips}\label{sec:repeated_trips}

The asymptotic theory of Section~\ref{sec:estimation} requires trips are i.i.d.~(Assumption~\ref{as:regularity}(a) as in classic route choice literature. 
In practice, however, route choice datasets often contain repeated trips by the same traveler in violation of the i.i.d. assumption. 
Consequently, a sample of $N$ trips carries the information of fewer than $N$ independent observations. 
We now illustrate how this affects statistical inference. 

We simulate repeated trips as follows. Each of $J$ travelers is assigned one origin-destination pair, drawn as in Section~\ref{sec:sim}, and one habitual route through it, drawn once from the route distribution induced by the true parameter $\beta_0$. The traveler then repeats that route on all $\mathcal{T}$ of their trips, so all within-traveler variation is eliminated. 
We fix $N = J \times \mathcal{T} = 10{,}000$, so that the dataset satisfies the i.i.d. assumption when $\mathcal{T}=1$.

For each $\mathcal  T$, we run $R=500$ replications and, at each estimate, compare two standard errors. 
The naive standard error is  Eq.~\eqref{eq:hat_sigma}, whose meat $\tilde S=\Ebb_N[\hat g^n\hat g^{n\top}]$ treats the per-trip scores $\hat g^n=Z^{n\top}\hat P^n\nabla^2 F(\hat x^n)(y^n-\hat x^n)$ as independent. The traveler-clustered standard error sums these scores within each traveler before forming the meat,
\begin{equation}\label{eq:S_cluster}
  \tilde S_{\mathrm{cl}}=\frac{J}{J-1}\,\frac{1}{N}\sum_{j=1}^{J}\Big(\sum_{n\in j}\hat g^n\Big)\Big(\sum_{n\in j}\hat g^n\Big)^{\!\top},
  \qquad
  \tilde\sigma_{\mathrm{cl}}=\sqrt{\mathrm{diag}\!\big(\tilde H^{-1}\tilde S_{\mathrm{cl}}\tilde H^{-1}\big)/N},
\end{equation}
where $n$ ranges over the trips of traveler $j$, the factor $J/(J-1)$ is the finite-sample correction, $\tilde H$ is the bread defined in Eq.~\eqref{eq:H_hat and S_hat}, and the Wald interval uses a $t_{J-1}$ critical value. Results are reported in Table~\ref{tab:sim-repeated}.

\begin{table}[H]
  \centering
  \begin{threeparttable}
  \caption{Robustness to repeated trips ($N=10{,}000$, $R=500$ replications).}
  \label{tab:sim-repeated}
  \setlength\tabcolsep{12pt}
  \begin{tabular}{lrrrrr}
    \toprule
    & \multicolumn{5}{c}{Trips per traveler $\mathcal T$} \\
    \cmidrule(lr){2-6}
    & $1$ & $2$ & $5$ & $10$ & $20$ \\
    Number of travelers $J$ & ($10{,}000$) & ($5{,}000$) & ($2{,}000$) & ($1{,}000$) & ($500$) \\
    \midrule
    \multicolumn{6}{l}{\textit{Bias}} \\
    \midrule
    $\tilde\beta^{\text{length}}$ & $-0.0005$ & $0.0014$ & $0.0008$ & $0.0018$ & $0.0075$ \\
    $\tilde\beta^{\text{time}}$   & $0.0001$ & $-0.0002$ & $0.0027$ & $0.0044$ & $0.0031$ \\
    $\tilde\beta^{\text{sim}}$    & $0.0031$ & $-0.0031$ & $0.0001$ & $0.0070$ & $0.0173$ \\
    \addlinespace
    \midrule
    \multicolumn{6}{l}{\textit{Monte-Carlo SD}} \\
    \midrule
    $\tilde\beta^{\text{length}}$ & $0.0185$ & $0.0276$ & $0.0402$ & $0.0623$ & $0.0935$ \\
    $\tilde\beta^{\text{time}}$   & $0.0164$ & $0.0242$ & $0.0378$ & $0.0550$ & $0.0784$ \\
    $\tilde\beta^{\text{sim}}$    & $0.0546$ & $0.0859$ & $0.1294$ & $0.1912$ & $0.2775$ \\
    \addlinespace
    \midrule
    \multicolumn{6}{l}{\textit{Naive $95\%$ coverage}} \\
    \midrule
    $\tilde\beta^{\text{length}}$ & $95.2\%$ & $80.2\%$ & $64.4\%$ & $49.4\%$ & $34.3\%$ \\
    $\tilde\beta^{\text{time}}$   & $95.6\%$ & $82.8\%$ & $61.6\%$ & $44.8\%$ & $32.9\%$ \\
    $\tilde\beta^{\text{sim}}$    & $95.4\%$ & $80.4\%$ & $61.8\%$ & $44.8\%$ & $30.9\%$ \\
    \addlinespace
    \midrule
    \multicolumn{6}{l}{\textit{Traveler-clustered $95\%$ coverage}} \\
    \midrule
    $\tilde\beta^{\text{length}}$ & $95.2\%$ & $94.2\%$ & $95.4\%$ & $92.6\%$ & $94.0\%$ \\
    $\tilde\beta^{\text{time}}$   & $95.6\%$ & $94.8\%$ & $94.8\%$ & $93.2\%$ & $93.6\%$ \\
    $\tilde\beta^{\text{sim}}$    & $95.4\%$ & $93.2\%$ & $94.4\%$ & $93.0\%$ & $92.8\%$ \\
    \bottomrule
  \end{tabular}
  \end{threeparttable}
\end{table}

Table~\ref{tab:sim-repeated} confirms two points. Repeated trips do not bias the estimator: the bias stays statistically indistinguishable from zero at every $\mathcal T$, so even when the $10{,}000$ trips come from as few as $J=500$ travelers, microPURC recovers the true $\beta_0$. 
On the other hand, the coverage of the naive standard error falls from $95\%$ at $\mathcal  T=1$ to about a third ($31$--$34\%$) at $\mathcal  T=20$, while the traveler-clustered standard error tracks the true dispersion and holds $93$--$95\%$ throughout. 

In sum, repeated trips do not bias microPURC; they inflate the true sampling variance, and the traveler-clustered standard error tracks that inflation. The two standard errors play asymmetric roles. The clustered estimator remains consistent under arbitrary within-traveler dependence, including none: when trips are independent, the within-traveler cross-products $\hat g^n\hat g^{m\top}$, $n\neq m$, vanish in expectation and $\tilde S_{\mathrm{cl}}$ estimates the same limit as $\tilde S$, as the $\mathcal T=1$ column of Table~\ref{tab:sim-repeated} shows. Its cost is precision rather than level, since it is effectively an average over $J$ rather than $N$ terms, and it is correspondingly noisy when the number of travelers is small. The i.i.d.\ formula, by contrast, is consistent only when the within-traveler scores are uncorrelated; under the positive within-traveler dependence one expects when travelers repeat habitual routes, it understates the sampling uncertainty. We therefore recommend clustering by traveler whenever identifiers are available. When identifiers are unavailable, as is common in route choice datasets, standard errors computed under independence should be read as a lower bound on the true sampling uncertainty.

\subsection{Comparison with alternative estimators}\label{sec:estimator-comparison}

We now compare the microPURC estimator with three alternative estimators on the same simulated data. 
\begin{itemize}
  \item \textbf{PURC-OLS} \citep{fosgerau_perturbed_2022}, the aggregate estimator of the  PURC model. It averages discrete trips to the level of origin-destination flows, determines each active set from the observed OD flow $\bar y$, and solves the projected first-order condition~\eqref{eq:projectedFOC}, obtaining parameter estimates with ordinary least squares (OLS). Unlike microPURC, it does not need to solve any flow-prediction PURC problem, but cannot work with individual-level trip data.
  \item \textbf{MLE}, which computes the (Markov) path likelihood from joint link choice probabilities Eq.~\eqref{eq:markov_sampling} for the observed routes. 
  We then maximize the (non-smooth) likelihood with Nelder--Mead search~\citep[][scipy implementation]{gao2012implementing}.
  \item \textbf{Direct-RV}, which minimizes the residual variance $RV_N(\beta)$~\eqref{eq:RV} directly with the same Nelder--Mead search method.
\end{itemize}
As in microPURC, we use the efficient quadratic solver~\citep{schwan2023piqp} for the PURC-flow prediction problem for both the MLE and direct-RV estimators.
Table~\ref{tab:sim-estimators} reports, for each estimator and sample size, the RMSE of $\tilde\beta$, the median runtime per fit, and, for the MLE, the share of replications that fail due to zero probability. We initialize all estimators with $\tilde \beta_{(0)} = \mathbf{1}$.

\begin{table}[H]
  \centering
  \begin{threeparttable}
  \caption{Comparison with alternative estimators on the EMA network ($R=100$ replications per sample size, $\beta_0=(0.5,0.5,0.5)$).}
  \label{tab:sim-estimators}
  \small
  \setlength\tabcolsep{10pt}
  \begin{tabular}{lrrrrrrr}
    \toprule
    Estimator & \multicolumn{7}{c}{Sample size $N$} \\
    \cmidrule(lr){2-8}
     & 200 & 500 & 1{,}000 & 2{,}000 & 5{,}000 & 10{,}000 & 20{,}000 \\
    \midrule
    \multicolumn{8}{l}{\textit{RMSE$^*$}} \\
    \midrule
    microPURC & 0.542 & 0.283 & 0.243 & 0.124 & 0.090 & 0.062 & 0.048 \\
    PURC-OLS & 1.165 & 0.733 & 0.464 & 0.291 & 0.181 & 0.116 & 0.078 \\
    MLE & 0.773 & 0.689 & 0.685 & 0.726 & 0.888 & --- & --- \\
    direct-RV & 0.371 & 0.219 & 0.152 & 0.103 & 0.066 & 0.044 & 0.032 \\
    \addlinespace
    \midrule
    \multicolumn{8}{l}{\textit{Median runtime [s]}} \\
    \midrule
    microPURC & 0.093 & 0.103 & 0.111 & 0.141 & 0.194 & 0.238 & 0.350 \\
    PURC-OLS & 0.029 & 0.048 & 0.068 & 0.096 & 0.145 & 0.194 & 0.265 \\
    MLE & 2.294 & 2.006 & 2.209 & 4.128 & 15.863 & --- & --- \\
    direct-RV & 1.497 & 1.707 & 2.023 & 2.645 & 5.386 & 6.374 & 19.807 \\
    \addlinespace
    \midrule
    \multicolumn{8}{l}{\textit{MLE replications degenerating [\%]}} \\
    \midrule
    MLE & 25 & 42 & 70 & 90 & 98 & 100 & 100 \\
    \bottomrule
  \end{tabular}
  \begin{tablenotes}
      \footnotesize
      \item $^*$: RMSE is computed over replications that returned an estimate, and runtimes are medians over the same set. 
    \end{tablenotes}
  \end{threeparttable}
\end{table}

PURC-OLS is the fastest of the four, since it needs no PURC solve, but its RMSE is
roughly twice microPURC's at every sample size ($0.078$ against $0.048$ at
$N=20{,}000$).
The MLE is least reliable for PURC, and is degenerate as anticipated in Section~\ref{sec:litreview}. Whenever an observed route uses a link carrying no predicted flow (at any iteration), that route has probability zero and the log-likelihood is undefined, so the estimator fails to return an estimate. Such failures worsen as sample size increases: the share of replications that degenerate rises from $25\%$ at $N=200$ to $100\%$ at $N\ge 10{,}000$. 
On the other hand, the residual variance criterion $RV_N$ avoids this failure mode because the RV stays finite on zero-flow links.

However, minimizing $RV_N$ directly is computationally expensive. Although its RMSE is slightly better than microPURC's ($0.032$ versus $0.048$ at $N=20{,}000$), the direct-RV estimator takes more than fifty times as long to run. At $N=20{,}000$, the direct-RV estimator takes about $19.8$ seconds, while it takes only $0.35$ seconds for microPURC. This saving occurs because the microPURC map $\Phi_N$ replaces the line search over $RV_N$ with a single projected regression per iteration, thereby converging in a few iterations.

Figure~\ref{fig:sim-recovery-cdf} plots the share of estimates within a distance from the true parameter $\beta_0$, under various initialization distances as in Section~\ref{sec:robustness_to_init_sim}. For microPURC, the four radius curves lie exactly on top of one another and reach $1$ at a small distance from the truth. 
In comparison, direct-RV rises more steeply, reflecting its slightly higher efficiency, while it incurs a slight drop in performance at radius $r=8$: about $6\%$ of runs settle beyond $0.25$, though all of them still reach the truth eventually. 
The MLE behaves differently: as the radius grows, its curves plateau far below $1$, where, with $r=8$, the curve flattens near $0.4$. The remaining mass corresponds to runs that diverge. Such behavior is consistent with the local convergence property of MLE estimators.

\begin{figure}[H]
  \centering
  \includegraphics[width=0.35\textwidth]{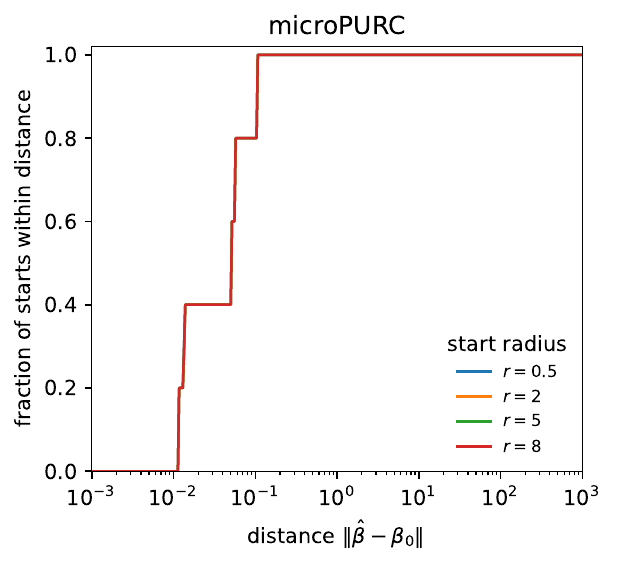}\hfill
  \includegraphics[width=0.315\textwidth]{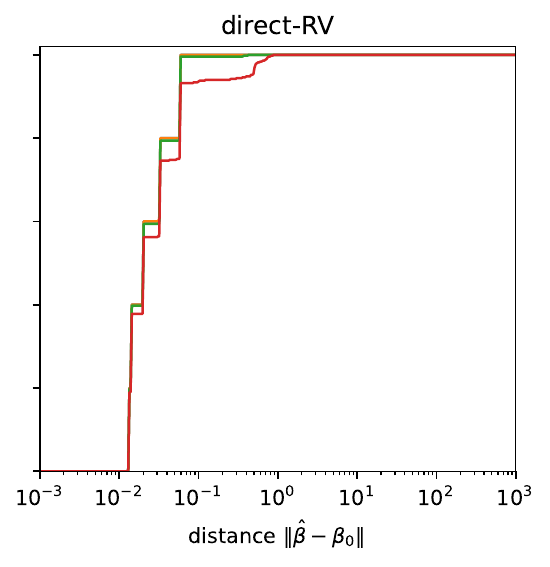}\hfill
  \includegraphics[width=0.315\textwidth]{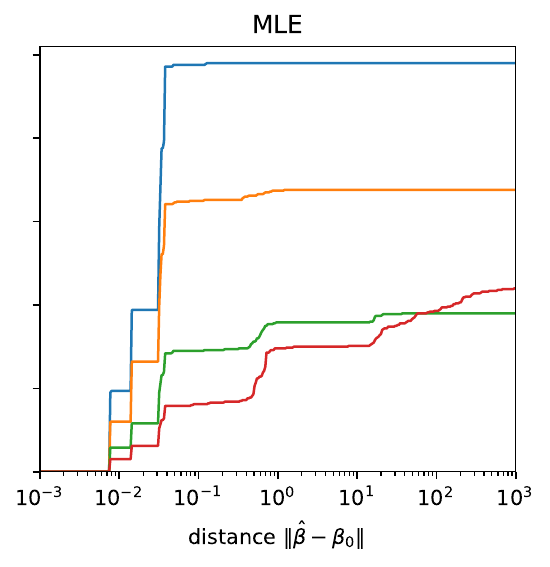}
  \caption{Estimator performance to initialization.}
  \label{fig:sim-recovery-cdf}
\end{figure}

Taken together, microPURC provides better statistical efficiency than the aggregate estimator, avoids the degeneracy that makes the MLE unstable, and attains the robustness of the direct search at lower computational costs.

\section{Computational scalability}\label{sec:scalability}

We now illustrate the computational scalability of the proposed microPURC estimator. 
We first demonstrate it on standard benchmark road networks, then use controlled grid networks to isolate the computational costs along each of the three dimensions: the number of trips $N$, the network size $|\mathcal{L}|$, and the number of parameters $K$. 

Table~\ref{tab:scaling-real} reports estimation on six benchmark networks, each with $N=5{,}000$ simulated trips. 
Estimation takes between $0.09$ seconds on SiouxFalls and $3.4$ seconds on ChicagoSketch, the largest at $2{,}950$ links. 
All estimations complete within four to five outer iterations, while the memory requirement is negligible. 
The table also reports the mean number of active links per trip, which stays small relative to the network: it ranges from $7.9$ on SiouxFalls to $75.6$ on Winnipeg, so that a single trip activates about $10\%$ of the smallest network and under $1\%$ of the largest.
Because each trip's projection is formed on its active subnetwork, this sparsity bounds the dimension of the per-trip linear algebra and hence the cost of an outer iteration.
Together with the empirical application of Section~\ref{sec:empirical_application}, which runs on a network of $98{,}503$ links, this indicates that the estimator is practical on real networks at the scales encountered in applications.

\begin{table}[H]
  \centering
  \begin{threeparttable}
  \caption{Scalability on benchmark road networks ( $N=5{,}000$ trips, over 5 replications).}
  \label{tab:scaling-real}
  \small
  \setlength\tabcolsep{5.5pt}
  \begin{tabular}{lrrrrrrr}
    \toprule
    Network & $|\mathcal{N}|$ & $|\mathcal{L}|$ & $N$ & Mean active links & Runtime [s] & Memory [MB] & Iters \\
    \midrule
    SiouxFalls & 24 & 76 & 5,000 &  7.9 & 0.09 & 0.00  & 4.0 \\
    EMA & 74 & 258 & 5,000 & 12.0 & 0.16 & 0.00  & 4.0 \\
    Anaheim & 416 & 914 & 5,000 & 45.2 &0.54 & 0.38  & 5.0 \\
    Barcelona & 930 & 2,522 & 5,000 & 51.2 &2.78 & 1.88  & 5.0 \\
    Winnipeg & 1,040 & 2,836 & 5,000 & 75.6 &2.94 & 1.83 &  5.0 \\
    ChicagoSketch & 933 & 2,950 & 5,000 & 22.7 &3.41 & 30.34  & 5.0 \\
    \bottomrule
  \end{tabular}
  \end{threeparttable}
\end{table}

To test scalability on each problem dimension, we turn to controlled grid networks (Figure~\ref{fig:grid}). Each link in the network has $K$ synthetic attributes; trips are generated by drawing origin-destination pairs uniformly over the demand nodes, simulating flows at a fixed $\beta_0$, and sampling routes as in Section~\ref{sec:sim}. 

\begin{figure}[H]
  \centering
  \includegraphics[width=0.5\textwidth]{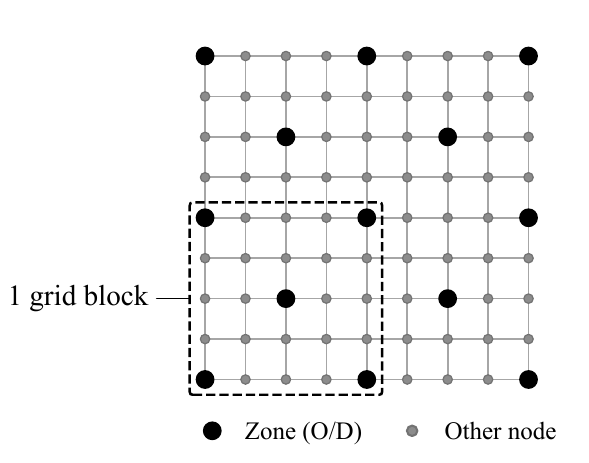}
  \caption{Grid networks.}
  \label{fig:grid}
\end{figure}

Figure~\ref{fig:scaling} reports the scalability with respect to the number of trips and network sizes. 
At a grid network with $|\mathcal{L}| = 1,520$ links, runtime grows sublinearly in the number of trips (left): from $0.34$ seconds at $N=500$ to $10.9$ seconds at $N=50{,}000$ with a log-log slope of $0.79$ below the linear reference. 
Each outer iteration costs $O(N)$, since the $N$ trip-level problems are solved independently, and the number of outer iterations falls from six to four, as the sample grows and $\Phi_N$ stabilizes around $\beta_0$, so the per-trip cost declines. 
As shown in the right panel, with 5000 samples, computation time grows about linearly in network size: from $0.08$ seconds at $80$ links to $21.1$ seconds at $9{,}800$ links. 

\begin{figure}[H]
  \centering
  \begin{subfigure}{0.49\textwidth}
    \centering
    \includegraphics[width=\linewidth]{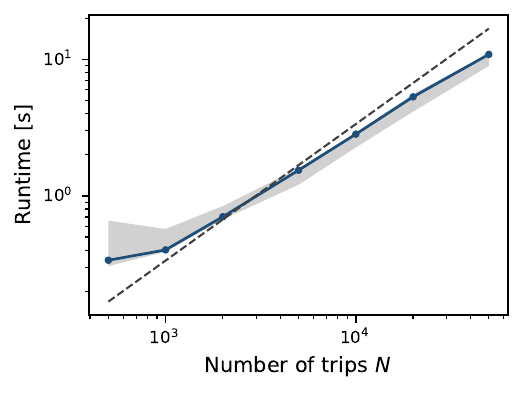}
  \end{subfigure}
  \hfill
  \begin{subfigure}{0.49\textwidth}
    \centering
    \includegraphics[width=\linewidth]{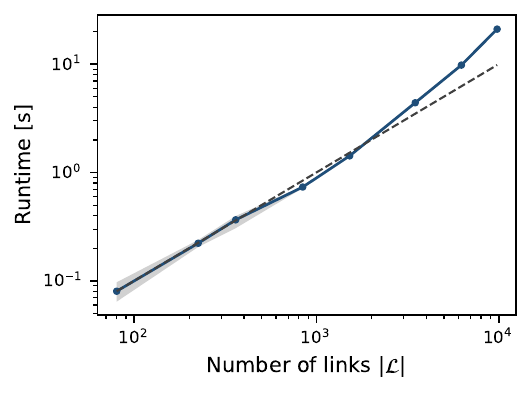}
  \end{subfigure}
  \caption{Runtime on grid networks. (dashed line is a linear reference; the shaded band represents variations over $R=5$ repetitions).}
  \label{fig:scaling}
\end{figure}

Finally, runtime is constant in the number of parameters. On a network with $|\mathcal{L}| = 1,520$ links and $N = 5,000$ samples, the average runtime is $1.47$ seconds for $K$ between 2 and 8, with a standard deviation of 0.12 seconds. See details in Table~\ref{tab:scalability_K} in the Appendix.

\section{Empirical application}\label{sec:empirical_application}

This section applies the microPURC estimator to trip-level route choices observed on the Danish national road network. We find that the estimator is computationally feasible at scale and yields economically meaningful parameter estimates.
We report parameter estimates and standard errors for two specifications and interpret magnitudes using the implied value of travel time, which we benchmark against census wage statistics.

\subsection{Data}
The data are drawn from a Danish road pricing experiment conducted in 2024--2025.
We use observations recorded during the experiment's control periods (the status quo).
Trips are recorded automatically by a dedicated smartphone app and map-matched to the Danish national road network.
Each trip is coded as a link-incidence vector $y\in\{0,1\}^{\mathcal{L}}$.
For each trip, we observe the corresponding origin--destination vector $b$ and the matrix of link attributes $Z$ used to construct link costs $c=Z\beta$.
We retain regular weekday trips with a distance of at least 2 km and a duration of at least 5 minutes, and we winsorize trip distance and duration at the 1\% and 99\% quantiles. This results in a dataset of $N=18{,}019$ trips from 574 unique users, with mean trip distance of 14.72 km and mean trip duration of 18.87 minutes.

The road network contains $98{,}503$ directed links.
Link attributes vary across 10 time intervals per day.
The main attributes are link length (km), GPS-based link travel time (min) when available ($\mathrm{time}_{\mathrm{GPS}}$), free-flow link travel time (min) otherwise  ($\mathrm{time}_{\mathrm{ff}}$), and a junction indicator. We associate observations with time-dependent travel times according to their departure time periods.

Additional descriptive statistics are reported in Appendix~\ref{app:emp_detail_stats}.

\subsection{Model specification}

We estimate two specifications.
In both cases, we use the quadratic perturbation~\eqref{eq:quadF}.

\paragraph{Specification 1 (baseline).}
The baseline link cost function includes link length, GPS-based link travel time, free-flow travel time, and an indicator for junctions in urban areas:
\begin{equation}
Z\beta \,=\, \beta_{{\ell}} \times \text{length}
+ \beta_{\text{tt}}\times \text{time}_{\text{GPS}}
+ \beta_{\text{ff\_tt}}\times \text{time}_{\text{ff}}
+ \beta_{\text{junction}}\times(\text{junction}\times\text{urban}).
\end{equation}

\paragraph{Specification 2 (income heterogeneity).}
To allow for individual heterogeneity in route choice by income, we interact distance with a binary indicator for income $\mathcal{I}\in\{0,1\}$ that equals one if the observation's income is above the Danish population average, and zero otherwise.
The link cost function is specified as
\begin{align}
Z\beta &=\, \beta_{\ell}^{\text{low}}\times\text{length}\times (1-\mathcal{I})
+ \beta_{\ell}^{\text{high}}\times\text{length}\times \mathcal{I} \nonumber \\
&+ \beta_{\text{tt}}\times \text{time}_{\text{GPS}}
+ \beta_{\text{ff\_tt}}\times \text{time}_{\text{ff}}
+ \beta_{\text{junction}}\times(\text{junction}\times\text{urban}).
\end{align}

\subsection{Empirical results}

We estimate both specifications using Algorithm~\ref{iterated_algo}.
For the estimates in this section, we initialize $\tilde \beta_{(0)} = \mathbf{1}$. We show in the next subsection that the estimates are robust to the initialization.

Tables~\ref{tab:parest_baseline} and~\ref{tab:parest_income} report coefficient estimates and standard errors. Our data contains a traveler identifier, hence we compute standard errors both with and without clustering (see Section \ref{sec:repeated_trips}). 
All coefficients are significantly different from zero under the clustered standard errors, which inflate the standard error under the i.i.d. assumption by a factor of 3.5-5.1.
In both specifications, the coefficient on GPS-based link travel time is smaller than the coefficient on the free-flow travel time, consistent with the observation that the free-flow travel time understates the experienced travel time on average.

\begin{table}[H]
    \centering
    \begin{threeparttable}
    \caption{Parameter estimates (baseline specification)}
    \label{tab:parest_baseline}
    \begin{tabular}{lrrrr}
        \toprule
        Attribute & $\tilde\beta$ & Std. error& Clustered & Clustered \\
         &  & (i.i.d.) & std. error & t-stat \\
        \midrule
        Link length & 1.4603 & 0.0469 & 0.2303 & 6.34 \\
        GPS-based link time & 1.8947 & 0.0259 & 0.0912 & 20.77 \\
        Free-flow link time & 2.3932 & 0.0449 & 0.1626 & 14.72 \\
        Link junction $\times$ urban & 0.4691 & 0.0099 & 0.0376 & 12.49 \\
        \bottomrule
    \end{tabular}
    \begin{tablenotes}\footnotesize
      \item Clustered standard errors group the per-trip scores by traveler.
    \end{tablenotes}
    \end{threeparttable}
\end{table}

\begin{table}[H]
    \centering
    \begin{threeparttable}
    \caption{Parameter estimates (income-heterogeneous specification)}
    \label{tab:parest_income}
    \begin{tabular}{lrrrr}
        \toprule
        Attribute & $\tilde\beta$ & Std. error & Clustered & Clustered \\
         &  & (i.i.d.) & std. error & t-stat \\
        \midrule
        Link length $\times (1-\mathcal{I})$ & 1.4576 & 0.0617 & 0.3120 & 4.67 \\
        Link length $\times \mathcal{I}$ & 1.0520 & 0.0591 & 0.2775 & 3.79 \\
        GPS-based link time & 1.8836 & 0.0297 & 0.1096 & 17.18 \\
        Free-flow link time & 2.2704 & 0.0512 & 0.1880 & 12.08 \\
        Link junction $\times$ urban & 0.4806 & 0.0108 & 0.0413 & 11.62 \\
        \bottomrule
    \end{tabular}
    \begin{tablenotes}\footnotesize
      \item Clustered standard errors group the per-trip scores by traveler.
    \end{tablenotes}
    \end{threeparttable}
\end{table}

To interpret magnitudes, we translate the estimated trade-off between GPS-based travel time and distance into an implied value of travel time (VTT) and compare it to census data.
The monetary driving cost is 2.83 DKK per kilometer, according to the Danish Unit Price Catalog \citep{UnitPrices}. This cost includes fuel, tires, repair and maintenance, vehicle tax, and depreciation.
The ratio $\tilde\beta_{tt}/\tilde\beta_{\ell}$ has units of km per minute and therefore translates this driving cost into a VTT.
Specifically,
\begin{equation*}
\text{VTT (DKK/hour)} \,=\, 2.83 \times \frac{\tilde\beta_{tt}}{\tilde\beta_{\ell}} \times 60.
\end{equation*}
Under the baseline specification, this yields
\begin{equation*}
\text{VTT} = 2.83 \times \frac{1.8947}{1.4603} \times 60 \,\approx\, 220\ \text{DKK/hour}.
\end{equation*}

We use reported categorical income collected alongside the experiment and, using the midpoint of the categorical income intervals, estimate the average pre-tax annual wage at about 540,900 DKK for these observations.

Assuming an average tax rate of 40\%, the implied average post-tax annual wage is 324,540 DKK. Using that Danes on average work 1380 hours per year \citep{OECD2024Hours}, this translates into an hourly value of 235 DKK/hour,
which is close to the VTT implied by the data. This indicates that the estimated parameters have a reasonable size.~\footnote{The implied VTT is meant to provide a simple plausibility check on the scale of the estimated coefficients. This comparison is not intended as a formal welfare or policy evaluation, but rather as a rough external check for the magnitudes implied by the model. Differences relative to official guideline values should therefore not be overinterpreted.}

For the income-heterogeneous specification, the implied VTT differs by income group through the length coefficient. Using the same formula with $\tilde\beta_{\ell}^{\text{low}}$ and $\tilde\beta_{\ell}^{\text{high}}$ gives
\begin{align*}
\text{VTT}^{\text{low}} &= 2.83 \times \frac{1.8836}{1.4576} \times 60 \approx 219\ \text{DKK/hour}\\
\text{VTT}^{\text{high}} &= 2.83 \times \frac{1.8836}{1.0520} \times 60 \approx 304\ \text{DKK/hour}.
\end{align*}

The implied VTTs for the low and high (above average) income groups are also consistent with (loosely) computed after-tax hourly wages of 188 DKK/hour and 383 DKK/hour, respectively (by translating the group-specific post-tax annual wage into hourly rates).
If we take into account that probably the low income group works fewer hours and the high income group works more hours~\citep{EU2024HoursByOccu}, then the implied hourly wages come closer to the estimated VTTs. In sum, the implied VTTs are of a plausible magnitude and show a clear and plausible difference across the two income groups in specification~2.

The fitted model also provides insights into the active set, i.e., the endogenous choice set. 
As shown in Figure~\ref{fig:emp-active}, the active set grows with travel distances: longer trips are expected to traverse more of the network, so the perturbed utility route choice model properly captures this and spreads flow over more links.
For a median trip (9.2~km), it activates only about $30$ links. So the predicted active set remains sparse. Notably, even for the longest trips (102~km), the active set contains only 0.35~\% of the $98{,}503$ network links. 

\begin{figure}[H]
  \centering
  \includegraphics[width=0.65\textwidth]{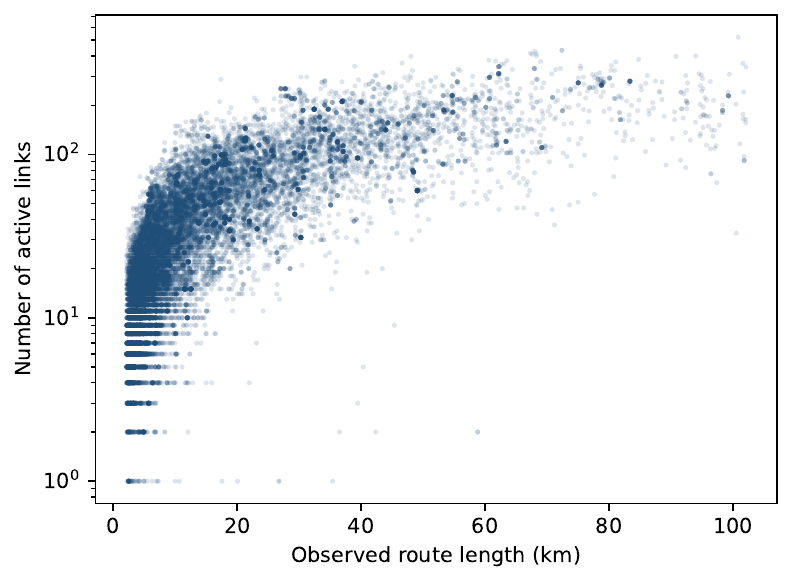}
  \caption{Number of active links versus trip distance.}
  \label{fig:emp-active}
\end{figure}

\paragraph{Computational performance.} Each estimation step solves $N=18{,}019$ trip-level PURC problems on a network of $98{,}503$ directed links. The empirical experiments are conducted on a 64-core workstation with 512~GB memory.
The baseline specification converges in 13 outer iterations in a total of 10.3 minutes, with a peak memory usage of $50.6$~GB; the income-heterogeneous specification is similar. 
Because the $N$ trips are solved independently within each estimation step, the estimator is embarrassingly parallel: we distribute the trips across the available cores, and the per-iteration compute grows linearly in the number of trips while the number of outer iterations stays small, so the method scales to large networks and samples (see Section~\ref{sec:sim} and~\ref{sec:scalability}). The specific hardware, PURC solver, and estimation tuning parameters are collected in Appendix~\ref{app:implementation_details}.

\subsection{Prediction quality}
Beyond the parameter estimates, we assess how well the fitted model reproduces the observed routes, both in-sample and out-of-sample. 
For a trip $n$ with observed link set $\mathcal{R}_n=\{e:y^n_e=1\}$ and predicted flow $\hat x^n=\hat x(Z\tilde\beta,b^n)$, we use two link-level overlap measures between the predicted flow and the observed route,
\begin{equation}
\mathrm{coverage}_n = \frac{\sum_e {1}_{y_e\cdot\hat x^n_e>0}}{|\mathcal{R}_n|},
\qquad
\mathrm{flowshare}_n = \frac{\sum_{e\in\mathcal{R}_n}\hat x^n_e}{\sum_{e\in\Links}\hat x^n_e}.
\end{equation}
Coverage is the fraction of the observed route's links that carry positive predicted flow, and flow share is the fraction of the predicted flow mass that falls on the observed route. We also report the residual variance Eq.~\eqref{eq:RV}, our estimator criterion, evaluated out of sample. 
Finally, at the route level, we report the $R^2$ from regressing the predicted total on the observed total across trips. Because the model predicts a link flow rather than a single route, the predicted total of a link attribute $a$ (route length or travel time) is the flow-weighted expectation $\sum_{e\in\Links}\hat x^n_e\, a_e$, whereas the observed total is $\sum_{e\in\Links} y^n_e\, a_e$.

\paragraph{In-sample fit} Figure~\ref{fig:emp-calibration} plots the predicted expected against the observed route totals for the baseline specification. Both the total length and the total travel time align closely with the $45^\circ$ line, with $R^2=0.99$ and $R^2=0.92$, respectively. 
Aggregating over trips, the predicted flow places $65\%$ of its mass on the observed route (flow share) and covers $81\%$ of the observed route's links (coverage), and the in-sample residual variance is $RV_N(\tilde\beta)=12.83$ (baseline specification). 
Overall, the perturbed utility model yields satisfactory in-sample performance. We report similar performance for the income-heterogeneous specification in Appendix~\ref{app:emp_detail_stats}.

\begin{figure}[H]
  \centering
  \begin{subfigure}{0.48\textwidth}\includegraphics[width=\linewidth]{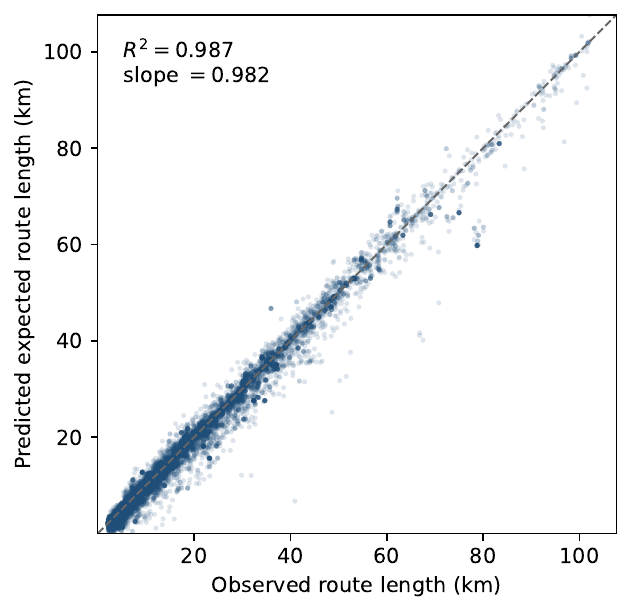}\caption{Route length}\end{subfigure}
  \hfill
  \begin{subfigure}{0.48\textwidth}\includegraphics[width=\linewidth]{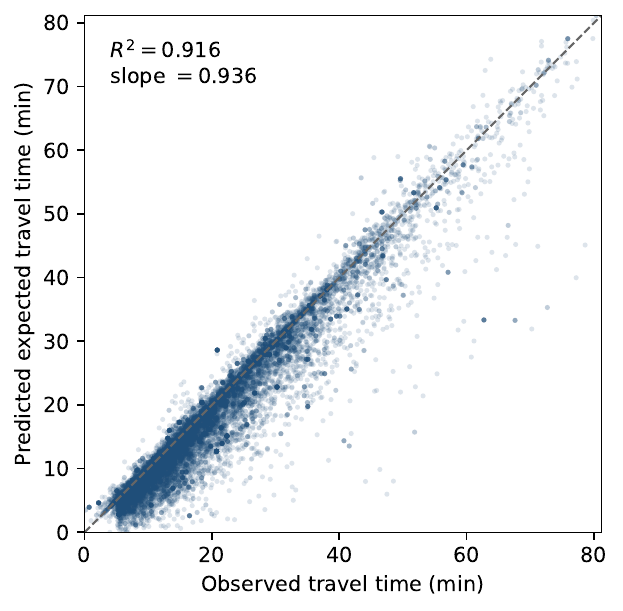}\caption{GPS travel time}\end{subfigure}
  \caption{Predicted versus observed route totals (baseline specification, in sample).}
  \label{fig:emp-calibration}
\end{figure}

\paragraph{Out-of-sample evaluation} We further evaluate the model's out-of-sample performance with 5-fold cross-validation: the model is re-estimated on each training fold (each with $80\%$ of the full dataset) and evaluated on the held-out fold ($20\%$ of the full dataset), and the metrics are computed across folds.

\begin{table}[H]
  \centering
  \caption{Held-out (five-fold) prediction quality. \\Means over folds (standard deviations in parentheses).}
  \label{tab:emp-prediction}
  \begin{tabular}{lccccc}
    \toprule
    Specification & Coverage & Flow share & $RV_N$ & Length $R^2$ & Time $R^2$ \\
    \midrule
    Baseline & 0.807 & 0.650 & 12.83 & 0.987 & 0.916 \\
             & (0.003) & (0.004) & (0.17) & (0.001) & (0.005) \\
    Income-heterogeneous & 0.811 & 0.646 & 13.10 & 0.986 & 0.911 \\
             & (0.004) & (0.003) & (0.08) & (0.001) & (0.004) \\
    \bottomrule
  \end{tabular}
\end{table}

As shown in Table~\ref{tab:emp-prediction}, the out-of-sample values are essentially identical to the in-sample values: coverage remains about $0.81$, flow share of $0.65$, and route-total $R^2$ of $0.99$ (length) and $0.92$ (time). 
In particular, the held-out residual variance $RV_N(\tilde\beta)\approx 12.8$ equals its in-sample value, so the fitted model generalizes to unseen trips in terms of the criterion. 
Similarly, the income-heterogeneous specification performs well out of sample.

\subsection{Robustness to initialization}\label{subsec:emp-init-robust}
Similar to the robustness exercise in the simulation section~\ref{sec:robustness_to_init_sim}, we re-estimate each specification from multiple randomized initial parameter vectors and summarize both (i) the distance from the randomized start to the baseline estimate and (ii) the distance from the final estimate to the baseline estimate.

For each specification, we take the converged estimate from the previous subsection (initialized with $\tilde \beta_{(0)} = \mathbf{1}$) as $\hat\beta$.
We then run $M=10$ replications in which each element of $\tilde \beta_{(0)}$ is drawn independently from $\mathrm{Uniform}[0.5,1.5]$, holding constant all Algorithm~\ref{iterated_algo} parameters (Appendix~\ref{app:implementation_details}).
For each replication $m$, we record the starting distance $\|\tilde \beta_{(0), m}-\hat\beta\|$, the final distance $\|\tilde\beta_m-\hat\beta\|$, the number of outer iterations, and wall-clock runtime.

Table~\ref{tab:emp-init-diagnostics} shows that the random initializations are far from the baseline estimate in Euclidean norm (about 1.8 on average), yet all runs converge.
In the baseline specification, every replication returns essentially the same coefficient vector as the baseline run, as shown in Table~\ref{tab:emp-init-coefs}.
In the income-heterogeneous specification, the final distance to $\hat\beta$ is on the order of $ 10^{-3}$ in every run, and the induced dispersion in coefficients is negligible relative to the point estimates in Table~\ref{tab:parest_income}.
Runtime is about ten to twelve minutes per run
on our 64-core workstation. This demonstrates the scalability of our approach.

\begin{table}[H]
  \centering
  \begin{threeparttable}
  \caption{Robustness to initialization in the empirical application.}
  \label{tab:emp-init-diagnostics}
  \small
  \setlength\tabcolsep{3pt}
  \begin{tabular}{lccccc}
    \toprule
      Specification
      & Starting distance
      & Final distance
      & Convergence
      & Avg.
      & Avg.  \\
      & $\mathbb{E}\big[\|\tilde \beta_{(0)}-\hat\beta\|\big]$
      & $\mathbb{E}\big[\|\tilde\beta-\hat\beta\|\big]$
      & rate
      & iterations
      & runtime (min.) \\
    \midrule
      Baseline
      & 1.703
      & 0.0005
      & 100\%
      & 13.60
      & 12.07 \\

      & (0.320)
      &  (0.0004)
      &
      & (1.35)
      & (3.16) \\
      Income-heterogeneous
      & 1.818
      & 0.0036
      & 100\%
      & 13.40
      & 10.00 \\
      & (0.321)
      & (0.0029)
      &
      & (1.07)
      & (0.91) \\
    \bottomrule
  \end{tabular}
  \begin{tablenotes}
      \footnotesize
      \item Notes: Each row summarizes $M=10$ randomized initializations. Entries report means with standard deviations in parentheses.
    \end{tablenotes}
  \end{threeparttable}
\end{table}

\begin{table}[H]
  \centering
  \begin{threeparttable}
  \caption{Coefficient stability across randomized initializations.}
  \label{tab:emp-init-coefs}
  \small
  \begin{tabular}{lccc}
    \toprule
      Parameter & Baseline $\hat\beta$ & Mean over reps & SD over reps \\
    \midrule
    \multicolumn{4}{l}{\textit{Specification 1 (baseline)}}\\
      $\beta_{\ell}$ (distance) & 1.4603 & 1.4604 & 0.0001 \\
      $\beta_{\text{tt}}$ (GPS time) & 1.8947 & 1.8944 & 0.0003 \\
      $\beta_{\text{ff\_tt}}$ (free-flow time) & 2.3932 & 2.3930 & 0.0004 \\
      $\beta_{\text{junction}}$ (junction$\times$urban) & 0.4691 & 0.4689 & 0.0002 \\
    \midrule
    \multicolumn{4}{l}{\textit{Specification 2 (income-heterogeneous distance)}}\\
      $\beta_{\ell}^{\text{low}}$ (distance, below avg.) & 1.4576 & 1.4573 & 0.0028 \\
      $\beta_{\ell}^{\text{high}}$ (distance, above avg.) & 1.0520 & 1.0528 & 0.0005 \\
      $\beta_{\text{tt}}$ (GPS time) & 1.8836 & 1.8843 & 0.0012 \\
      $\beta_{\text{ff\_tt}}$ (free-flow time) & 2.2704 & 2.2704 & 0.0006 \\
      $\beta_{\text{junction}}$ (junction$\times$urban) & 0.4806 & 0.4811 & 0.0002 \\
    \bottomrule
  \end{tabular}
  \end{threeparttable}
\end{table}

Overall, these results show that our Algorithm~\ref{iterated_algo} remains robust to initialization in the empirical applications, where multiple randomized initial starting points reach the same estimates in our experiments.

\section{Concluding remarks}\label{sec:conclusion}
This paper has introduced microPURC, a micro-level estimator for the perturbed utility route choice (PURC) model. MicroPURC enables consistent, asymptotically normal, and computationally efficient estimation of route choice parameters directly from trip-level data, such as individual GPS traces, without requiring pre-defined choice sets or data aggregation. The estimator is computationally feasible for large networks, as demonstrated in our simulation study and empirical application.

There are several interesting directions for future research. A main direction may be applying the microPURC estimator in new settings, leveraging the strengths of the PURC model and the microPURC estimator with large location datasets that have now become widely available. Another research direction concerns scaling to very large problems, as we can now envisage very large networks and data. It is also of interest to be able to estimate more general PURC models. For example, allowing more flexible specification of the perturbation functions or random parameters. More generally, it is of interest to develop specification tests for the PURC as well as other route choice models. The microPURC estimator solves a bilevel problem. Bilevel estimation problems occur across a range of fields, and future research could extend our methodology to these settings.

\section*{Acknowledgements}
Part of this work has been financed by the European Union - NextGenerationEU. Fosgerau and Nielsen have received funding from the Independent Research Fund Denmark. 

\bibliographystyle{jfe}
\bibliography{MFreferences,madspReferences,nnreferences, TKRAreferences}

\appendix{}

\section{Appendix}\label{appendix}

This appendix comprises three sections.
Section \ref{app:index_notation} provides an index of notation. Section \ref{app:Phat} provides a lemma that establishes the basic properties of the projection matrix $\hat P$ and how it may be efficiently computed. Section \ref{App:asymptotic_theory} provides assumptions, supporting results, and proofs of our main theorems.

\subsection{Index of notation}\label{app:index_notation}

\begin{longtable}{@{}p{2.8cm}@{\hspace{1.2em}}>{\raggedright\arraybackslash}p{\dimexpr\textwidth-2.8cm-1.2em\relax}@{}}
\toprule
Symbol & Meaning \\
\midrule
\endfirsthead
\toprule
Symbol & Meaning \\
\midrule
\endhead
\multicolumn{2}{@{}l@{}}{\underline{Network}} \\
\addlinespace[0.3em]
$\mathcal{N}$ & Set of nodes \\
$\mathcal{L}$ & Set of links \\
$ij$ & Link from node $i$ to node $j$ \\
$v$ & Generic node; $o$ origin, $d$ destination of a trip \\
$A$ & Node--link incidence matrix, entries $a_{v,ij}$ \\
$l_{ij}$ & Length of link $ij$; $L=\mathrm{diag}(l)$ \\
\addlinespace
\multicolumn{2}{@{}l@{}}{\underline{Trip observations and data}} \\
\addlinespace[0.3em]
$(y,Z,b)$ & Generic trip: route indicator, link attributes, OD vector \\
$y\in\{0,1\}^{\mathcal{L}}$ & Link-incidence vector of the observed route \\
$Z$ & Matrix of link attributes; $Z_{ij}$ is the row for link $ij$ \\
$b\in\R^{\mathcal{N}}$ & OD vector: $-1$ at the origin, $+1$ at the destination \\
$c=Z\beta$ & Link cost vector \\
$n$, $N$ & Trip index and number of trips \\
$J$, $j$ & Number of travelers and traveler index (Section~\ref{sec:repeated_trips}) \\
$\mathcal T$ & Trips per traveler (Section~\ref{sec:repeated_trips}) \\
$K$ & Number of parameters \\
$R$ & Number of Monte Carlo replications \\
$\mathcal{I}$ & Indicator for above-average income (Specification 2) \\
\addlinespace
\multicolumn{2}{@{}l@{}}{\underline{Model primitives}} \\
\addlinespace[0.3em]
$x$ & Link flow vector, $x\in\R_+^{\mathcal{L}}$ \\
$\hat x(c,b)$ & Optimal PURC flow given cost vector $c$ and OD vector $b$ \\
$x_0=\hat x(Z\beta_0,b)$ & Optimal flow at the true parameter \\
$x^*=\hat x(Z\beta^*,b)$ & Optimal flow at a fixed point $\beta^*$ of $\Phi$ \\
$F$, $F_{ij}$ & Perturbation function and its link components \\
$\nabla F$, $\nabla^2F$ & Gradient and Hessian of $F$ \\
$\Lambda$ & Lagrangian of the perturbed cost minimization problem \\
$\hat\eta$ & Multipliers on flow conservation (node potentials) \\
$\hat\mu$ & Multipliers on the non-negativity constraint $x\ge 0$ \\
$s_{ij}$ & Dual slack of link $ij$; $ij$ active iff $s_{ij}<0$ \\
$\beta$, $\beta_0$ & Cost parameter vector and its true value \\
\addlinespace
\multicolumn{2}{@{}l@{}}{\underline{Active set and projection}} \\
\addlinespace[0.3em]
$\hat B$ & $\mathrm{diag}(1_{\hat x>0})$, indicator of active links \\
$\hat P$ & Orthogonal projection onto circulations on the active subnetwork \\
$P^*$ & $\hat P$ evaluated at $x^*$ \\
$\tilde P^n$, $\tilde P^n_*$ & Projection for trip $n$ in reduced (active-block) form \\
$(\cdot)^+$ & Moore--Penrose pseudoinverse \\
$m_n$ & Active-set margin of trip $n$, in $\beta$-units \\
\addlinespace
\multicolumn{2}{@{}l@{}}{\underline{Estimation}} \\
\addlinespace[0.3em]
$\Ebb$, $\Ebb_N$ & Population expectation and empirical average over trips \\
$RV$, $RV_N$ & Residual variance criterion and its sample analogue \\
$T(\beta,x,Z,y)$ & Bias-corrected projected first-order condition \\
$\Phi$, $\Phi_N$ & Iterative map and its sample analogue \\
$\tilde\beta$ & The microPURC estimator \\
$\bar\beta$ & Auxiliary minimizer of the merit function (proof of Thm.~\ref{thm:normality}) \\
$\hat g^n$ & Score of trip $n$ \\
$H$, $S$ & Curvature and score-variance matrices of the sandwich form \\
$\tilde H$, $\tilde S$ & Sample counterparts of $H$ and $S$; $\tilde H$ is also the matrix inverted in Step 2 of Algorithm~\ref{iterated_algo} \\
$\tilde S_{\mathrm{cl}}$ & Traveler-clustered counterpart of $\tilde S$ \\
$\tilde\sigma$, $\tilde\sigma_{\mathrm{cl}}$ & Standard errors, i.i.d.\ and traveler-clustered \\
\addlinespace
\multicolumn{2}{@{}l@{}}{\underline{Algorithm~\ref{iterated_algo}}} \\
\addlinespace[0.3em]
$\tilde\beta_{(t)}$ & Parameter iterate at outer iteration $t$; $\tilde\beta_{(0)}$ the starting value \\
$Q_N$ & Merit function $\tfrac12\|\beta-\Phi_N(\beta)\|^2$ \\
$\alpha_{(t)}$ & Krasnoselskii--Mann step size \\
$\gamma$ & Backtracking reduction factor \\
$M$ & Width of the non-monotone line-search window \\
$m_{\max}$ & Maximum number of line-search trials \\
$\tau$ & Fixed-point residual tolerance \\
\addlinespace
\multicolumn{2}{@{}l@{}}{\underline{Asymptotics}} \\
\addlinespace[0.3em]
$\mathcal{B}_0$ & Neighborhood of $\beta_0$ with constant active sets (Ass.~\ref{as:regularity}(e)) \\
$\mathcal{B}$, $B_r$ & Closed balls centered at $\beta_0$, the latter of radius $r$ \\
$\epsilon_N$ & Vanishing linear coefficient in the convergence bound \\
$\parrow$, $\Darrow$ & Convergence in probability and in distribution \\
$\norm{\cdot}$ & Euclidean norm; operator norm for matrices \\
\bottomrule
\end{longtable}

\subsection{The projection matrix}\label{app:Phat}

The following lemma provides the basic properties of the matrix $\hat P$. By items b) and d), $\hat P$ is a symmetric and idempotent matrix, hence it is an orthogonal projection matrix. By items a) and c), it is a projection on the linear subspace \eqref{eq:subspace}, defined in the main text.

\begin{lemma}\label{lem:P_hat}
Define
\begin{equation*}
\hat{P}=\hat{B}-(A \hat B)^+A\hat B,
\end{equation*}%
Then $\hat{P}$ is an orthogonal projection matrix on the linear subspace $\{x\in \R^{\mathcal{L}}:Ax=0, \hat B x=x\}$. Moreover,
\begin{itemize}
    \item[a)] $\hat{P}=\hat{P}\hat{B}$;
    \item[b)] $\hat P = \hat P ^\top = \hat B - \hat B A^\top \left(\hat B     A^\top\right)^+$.
    \item[c)]$\hat{P}A^{\top }=0$;
    \item[d)] $\hat P^2 = \hat P$ ;
\end{itemize}
\end{lemma}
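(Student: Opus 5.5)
The plan is to work throughout with the Moore--Penrose pseudoinverse identities, applied to $M \coloneqq A\hat B$. Recall that $M^+M$ is the orthogonal projection onto $\mathrm{range}(M^\top)=\mathrm{null}(M)^\perp$. Recall also that $(M^+)^\top=(M^\top)^+$ and $MM^+M=M$. Since $\hat B$ is a diagonal 0/1 matrix, it is symmetric and idempotent, so $M^\top=\hat BA^\top$ and $M\hat B=M$.

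The first step is to show that $M^+M=\hat B M^+M\hat B$. The row space of $M$ is spanned by vectors of the form $\hat B A^\top u$, so it lies in $\mathrm{range}(\hat B)$. Hence $\hat B M^+M=M^+M$, and symmetry of $M^+M$ gives the same on the right. From this we get $\hat P\hat B=\hat B-M^+M\hat B=\hat P$, which is (a). For (b), symmetry is immediate, because $\hat B$ and $M^+M$ are both symmetric. The alternative formula comes from the standard identity $M^+M=M^\top(M^\top)^+$, that is, the projection onto $\mathrm{range}(M^\top)$, which here reads $\hat BA^\top(\hat BA^\top)^+$.

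For (c), I compute $\hat PA^\top$ using the form in (b):
\[
\hat PA^\top=\hat BA^\top-\hat BA^\top(\hat BA^\top)^+\hat BA^\top=\hat BA^\top-\hat BA^\top=0,
\]
where the middle equality uses $XX^+X=X$ with $X=\hat BA^\top$. For (d), I expand $\hat P^2=\hat B-2M^+M+M^+M$. Here I use $\hat B^2=\hat B$, the identity $\hat BM^+M=M^+M\hat B=M^+M$ from the first step, and $(M^+M)^2=M^+M$. The expansion gives $\hat P^2=\hat P$.

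Finally, I identify the range of $\hat P$. Write $\mathrm{range}(\hat B)=V$ and $\mathrm{range}(M^\top)=W$. We already have $W\subseteq V$. The difference of commuting projections $\hat B-M^+M$ is then the orthogonal projection onto $V\cap W^\perp$. It remains to check that $V\cap W^\perp=V\cap\mathrm{null}(M)=\{x:\hat Bx=x,\ A\hat Bx=0\}$, which equals $\{x:\hat Bx=x,\ Ax=0\}$. Concretely, take $x$ with $\hat Bx=x$ and $Ax=0$. Then $Mx=A\hat Bx=Ax=0$, so $M^+Mx=0$ and $\hat Px=x$. Conversely, any $x$ in the range of $\hat P$ satisfies $\hat Bx=x$ by (a) and symmetry, and $Ax=0$ by (c) transposed, since $A\hat P=0$. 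No step is a real obstacle. The only point that needs care is the inclusion $\mathrm{range}(M^\top)\subseteq\mathrm{range}(\hat B)$, because that is what makes the two projections commute.
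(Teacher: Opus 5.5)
Your proof is correct and follows essentially the same route as the paper. Both use the Moore--Penrose identities for $M=A\hat B$ (symmetry of $M^+M$, $(M^+)^\top=(M^\top)^+$, $XX^+X=X$) to get (a)--(d), and both identify the range by double inclusion. The differences are cosmetic: you make explicit the fact $\hat BM^+M=M^+M\hat B=M^+M$ that the paper's ``straightforward computation'' for (d) relies on, and the first equality in your display for (c) silently uses (a), $\hat PA^\top=\hat P\hat BA^\top$, exactly as the paper does when it writes $\hat PA^\top=\hat P^\top\hat BA^\top$.
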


\begin{proof}                                                                  \begin{itemize}
    \item[a)] It follows by the definition of $\hat P$ that $\hat P \hat B = \hat P $.
    \item[b)]  
    Note that
    \[
    \hat P^\top = \hat B^\top - \left[(A \hat B)^+A\hat B\right]^\top =  \hat B - (A \hat B)^+A\hat B = \hat P,
    \]
    where we used $\hat B^\top=\hat B$ and the Moore-Penrose inverse property $\left[(A \hat B)^+A\hat B\right]^\top = (A \hat B)^+A\hat B$. Then, it follows that
    \[
     \hat P^\top =\hat B^\top - (A\hat B)^\top\bigl((A\hat B)^+\bigr)^\top
    =\hat B - \hat B A^\top \bigl((A\hat B)^\top\bigr)^+
    =\hat B - \hat B A^\top (\hat B A^\top)^+,
    \]
    using that $(A\hat B)^\top=\hat B A^\top$ and the property that Moore-Penrose inverses commute with transposition: $((A\hat B)^+\bigr)^\top = \bigl((A\hat B)^\top\bigr)^+$.
    \item[c)]Note that
    \begin{eqnarray*}
        \hat{P}A^{\top } &=& \hat{P}^\top \hat B A^{\top } \\
        &=&    \left( \hat B - \hat B A^\top \left(\hat B     A^\top\right)^+\right) \hat B A^{\top }   \\
        &=& \hat B A^{\top } - \hat B A^\top \left(\hat B     A^\top\right)^+ \hat B A^{\top } =0,
    \end{eqnarray*}
    where the last line follows from the Moore–Penrose condition.
    \item[d)] Follows by straightforward computation.
\end{itemize}

To show that $\hat P$ is an orthogonal projection on the linear subspace $S=\{x\in \R^{\mathcal{L}}:Ax=0, \hat B x=x\}$, note first that we have shown that $\hat P$ is idempotent and symmetric. It remains to show that $\mathrm{range}(\hat P)=S$. For $y\in S$ we note that $\hat P y = y$, such that $S \subseteq \mathrm{range}(\hat P)$. Conversely, for any $z=\hat Py$, we have i) $\hat Bz = \hat B \hat Py = \hat Py = z$, and ii) $Az = A\hat B \hat P y = A\hat By - A\hat B (A\hat B)^+A\hat B y =0$. Hence, $\mathrm{range}(\hat P) \subseteq S$.

\end{proof}
\bigskip

The theory section below uses matrices and vectors that have the size of the full network. This is convenient for the theory, but not for computation.
For illustration purposes, label the links such that all active links come first, and remove the row corresponding to the destination in order for the node-link incidence matrix to have full rank. The projection matrix then has the form
\begin{eqnarray}\label{eq:Ptilde}
\tilde P^n=%
\begin{pmatrix}
\tilde P^n_* & 0 \\
0 & 0%
\end{pmatrix}%
,
\end{eqnarray}
Let $\tilde A^n$ be the block of the adjacency matrix corresponding to the
active network (except for having removed the destination). Then $\tilde P^n_*
= I-\tilde A^{n^{\top }}(\tilde A^n \tilde A^{n^{\top }})^{-1}\tilde A^n$%
. Letting $\tilde Z^n$ denote the relevant columns of $Z^n$ and $\tilde
l^n\circ \tilde y^n$ denote the relevant columns of $l\circ y^n$, we obtain
\begin{eqnarray}
(Z^n\beta+l\circ y^n)^{\top }\tilde P^n (Z^n\beta +l\circ y^n) =(\tilde
Z^n\beta +\tilde l^n \circ \tilde y^n)^{\top }\tilde{P}_*^n (\tilde
Z^n\beta+ \tilde l^n \circ \tilde y^n).
\end{eqnarray}
The same principle applies throughout.

\subsection{Asymptotic theory}\label{App:asymptotic_theory}

To establish context for the next lemma, recall that the map $\Phi$ given in eq. \eqref{eq:Phi} minimizes the Euclidean norm of the approximate projected first-order condition  $T$. We now establish some mathematical properties of this map.

\begin{lemma}[Properties of $\Phi$,  $\Phi_N$ and $\hat x$]\label{lem:Phi_properties}
 Let $\mathcal{B}\subset \mathcal{B}_0$ be a closed ball centered at $\beta_0$, where $\mathcal{B}_0$ is given by Assumption \ref{as:regularity}(e). Then we have
\begin{itemize}
    \item $\nabla \Phi(\beta_0)=0$;
    \item $\Phi(\beta)\neq \beta$ for all $\beta\neq \beta_0$ in a neighborhood of $\beta_0$;
    \item  $\sup_{\beta\in \mathcal{B}}\norm{\nabla^2 \Phi(\beta)}<\infty$;
    \item $\sup_{\beta\in \mathcal{B}}\norm{\Phi_N(\beta)-\Phi(\beta)}\parrow 0$,  $\sup_{\beta\in \mathcal{B}}\norm{\nabla \Phi_N(\beta)-\nabla \Phi(\beta)}\parrow 0$, \\ and $\sup_{\beta\in \mathcal{B}}\norm{\nabla^2 \Phi_N(\beta)-\nabla^2 \Phi(\beta)}\parrow 0$.
    \item For almost all $Z,b$, $\hat x(Z\beta,b)$ is three times continuously differentiable as a function of $\beta$ for all $\beta\in \mathcal{B}_0$.
\end{itemize}
\end{lemma}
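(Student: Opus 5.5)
I would prove the items in reverse order, starting with the smoothness of $\hat x$, because everything else depends on it. On $\mathcal{B}_0$ the active set is fixed for almost all $(Z,b)$ by Assumption~\ref{as:regularity}(e). So $\hat B$ and $\hat P$ do not depend on $\beta$. Restricted to the active links, $\hat x$ and $\hat\eta$ then solve the square system
\[
\hat B\bigl(Z\beta+\nabla F(x)+A^\top\eta\bigr)=0,\qquad A\hat Bx=b,
\]
with $x$ vanishing off the active set. Its Jacobian in the variables $(x,\eta)$, after discarding one redundant row of $A$, is the saddle-point matrix with blocks $\nabla^2F$ (positive definite on the active block) and $\tilde A$ (full row rank). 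This matrix is nonsingular. Since $F$ is $C^4$ by Assumption~\ref{as:regularity}(c), the implicit function theorem makes $\hat x(Z\beta,b)$ three times continuously differentiable in $\beta$. The derivatives take the form $\partial\hat x/\partial\beta = -G\,Z$, where $G=\tilde P_*(\tilde P_*\nabla^2F\tilde P_*)^+\tilde P_*$ is a generalized inverse restricted to circulations. Because every entry of $\hat x$ lies in $[0,1]$, $\nabla^2F$ is bounded, and $(\nabla^2F)^{-1}$ is bounded on $[0,1]$, the derivatives up to order three are bounded by polynomials in $\|Z\|$.

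Next, I would use the closed form \eqref{eq:Phi_closedform}. On $\mathcal{B}$ the matrix $H=\Ebb[Z^\top\hat PZ]$ is constant, so $\Phi(\beta)=-H^{-1}\Ebb[Z^\top\hat P m(\beta)]$ with
\[
m(\beta)=\nabla F(\hat x)+\nabla^2F(\hat x)(y-\hat x).
\]
Differentiating and conditioning on $(Z,b)$ replaces $y$ by $x_0$. This gives
\[
\partial_\beta m=\nabla^3F(\hat x)[\partial\hat x,\,x_0-\hat x].
\]
The two terms $\nabla^2F\,\partial\hat x$ cancel. At $\beta=\beta_0$ we have $x_0-\hat x=0$, so $\nabla\Phi(\beta_0)=0$. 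For the second derivatives I would differentiate once more. Each term is a product of derivatives of $F$ of order at most four and derivatives of $\hat x$ of order at most two, evaluated at flows in $[0,1]$. With the moment bound on $\|Z\|$, dominated convergence then gives $\sup_{\mathcal{B}}\|\nabla^2\Phi\|<\infty$. Here I would tacitly need higher moments of $Z$ than $\Ebb\|Z\|^2$, or a bound of $\partial\hat x$ that is linear in $Z$. I would try to get the latter from $\|G\|\le\sup(\nabla^2F)^{-1}$ and keep the products of $Z$ factors under control.

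For the second item I would use $\nabla\Phi(\beta_0)=0$. By Taylor expansion, $\Phi(\beta)-\beta=-(\beta-\beta_0)+O(\|\beta-\beta_0\|^2)$, which is nonzero for small $\beta\neq\beta_0$.

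The uniform convergence is the main obstacle. On $\mathcal{B}$ the sample analogue uses $\tilde H$, which does not depend on $\beta$, and $\tilde H\to H$ by the law of large numbers. For the numerator and its first two derivatives I would apply a uniform law of large numbers on the compact set $\mathcal{B}$, for example Newey–McFadden Lemma 2.4. The integrands are continuous in $\beta$ almost surely and are dominated by an integrable envelope built from the derivative bounds above. One subtlety is that the event of Assumption~\ref{as:regularity}(e) holds only for almost all $(Z,b)$, which is enough almost surely for the sample. Uniform convergence of the third derivative is not needed, because the derivatives of $\Phi_N$ and $\Phi$ are taken termwise. Combining with Slutsky's lemma for $\tilde H^{-1}$ then yields all three suprema converging in probability to $0$.
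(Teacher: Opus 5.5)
Your route is the paper's. You use the implicit function theorem on the fixed active set to get $\hat x\in C^3$. You differentiate the closed form \eqref{eq:Phi_closedform}, where the $\nabla^2F(\hat x)\,\partial\hat x$ terms cancel and iterated expectations replace $y$ by $x_0$, giving $\nabla\Phi(\beta_0)=0$. You use that $\beta-\Phi(\beta)$ has Jacobian $I$ at $\beta_0$, and a uniform law of large numbers for $\Phi_N$. The paper invokes the inverse function theorem where you Taylor-expand. Your Taylor step also needs $\Phi(\beta_0)=\beta_0$, which you should record; it follows from the projected first-order condition and $\Ebb[y\mid Z,b]=x_0$.

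You are more careful than the paper in two places. You treat $\Phi_N$ as $\tilde H^{-1}$ times a sample average, with $\tilde H$ constant on $\mathcal{B}$, and apply Slutsky, rather than calling it ``the sample average of $\Phi$''. You also ask for integrable envelopes, which the paper never checks for $\nabla\Phi$ and $\nabla^2\Phi$.

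The envelope question you leave open is a genuine loose end, and the fix you sketch does not close it. Even with $\norm{\partial\hat x}\le C\norm{Z}$, the integrand of $\nabla^2\Phi$ contains terms such as $Z^\top\hat P\,\nabla^3F(\hat x)[\partial_k\hat x,\partial_l\hat x]$. These are of order $\norm{Z}^3$, which $\Ebb\norm{Z}^2<\infty$ does not control.

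What closes it is the uniformity of $\mathcal{B}_0$ in Assumption~\ref{as:regularity}(e). Fix a closed ball of radius $r_0$ about $\beta_0$ inside $\mathcal{B}_0$. For almost all $(Z,b)$ the same $\hat P$ serves at every $\beta$ in this ball and at $\beta_0$. Flows also stay in $[0,1]^{\mathcal{L}}$, because the active subnetwork is the acyclic one at $\beta_0$. Subtracting the two projected first-order conditions gives $\hat PZ(\beta-\beta_0)=-\hat P\bigl(\nabla F(\hat x(Z\beta,b))-\nabla F(x_0)\bigr)$. Hence $\norm{\hat PZ}\le 2\sup_{x\in[0,1]^{\mathcal{L}}}\norm{\nabla F(x)}/r_0$ almost surely.

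Moreover, $Z$ enters every integrand only through $\hat PZ$:
\begin{itemize}
\item $Z^\top\hat P=(\hat PZ)^\top$;
\item $\partial_k\hat x=-(\hat P\nabla^2F\hat P)^+\hat PZe_k$, with $\norm{(\hat P\nabla^2F\hat P)^+}\le 1/\min_{ij}\min_{[0,1]}F''_{ij}$;
\item $\partial^2_{kl}\hat x$ is built from these and $\nabla^3F$.
\end{itemize}
So all integrands of $\Phi$, $\nabla\Phi$, $\nabla^2\Phi$ and their sample analogues are bounded by constants on that ball. Dominated differentiation and the Newey--McFadden uniform law then go through with constant envelopes, with no additional moment assumption on $Z$.
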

\begin{proof}
We begin by inspecting the function $T$ given in \eqref{eq:T_function},
\begin{eqnarray*}
    T(\beta,x,Z,y)=\hat P(x)(Z\beta+\nabla F(x)+\nabla^2 F(x)(y-x))
\end{eqnarray*}
Since $\hat P$ is a nonzero orthogonal projection by Lemma \ref{lem:P_hat}, we have $\norm{\hat P(x)}= 1$ where for a matrix, $\norm{\cdot}$ denotes the Euclidean operator norm i.e. the maximal singular value. Applying the triangle inequality and submultiplicativity, we find
\begin{eqnarray*}
    \norm{T(\beta,x,Z,y)}\leq \norm{Z}\norm{\beta} +\norm{\nabla F(x)}+\norm{\nabla^2 F(x)}\norm{y-x}.
\end{eqnarray*}
by continuity of $\nabla F$ and $\nabla^2 F$, we find
\begin{eqnarray*}
    \sup_{\beta\in \mathcal{B},x\in [0,1]^{\mathcal{L}},y\in \{0,1\}^{\mathcal{L}}}\norm{T(\beta,x,Z,y)}\leq C_1\norm{Z}+C_2,
\end{eqnarray*}
where $C_1=\sup_{\beta\in \mathcal{B}}\norm{\beta}$ and $C_2=\sup_{x\in [0,1]^{\mathcal{L}}}\norm{\nabla F(x)} + \sup_{x\in [0,1]^{\mathcal{L}},y\in \{0,1\}^{\mathcal{L}}}\left\{\norm{\nabla^2 F(x)}\norm{y-x}\right\}$ are both finite by continuity over compact sets. Then
\begin{eqnarray*}
    \sup_{\beta\in \mathcal{B},x\in [0,1]^{\mathcal{L}},y\in \{0,1\}^{\mathcal{L}}}\norm{T(\beta,x,Z,y)}^2\leq C_1^2 \norm{Z}^2+2C_1 C_2 \norm{Z}+C_2^2,
\end{eqnarray*}
such that $ \sup_{\beta\in \mathcal{B},\beta'\in \mathcal{B}}\Ebb \left[\norm{T(\beta,\hat x(Z\beta',b),Z,y)}^2\right]<\infty $ by Assumption \ref{as:regularity}(d).

By Assumption \ref{as:regularity}(e), the set of active links is fixed for $\beta\in \mathcal{B}\subset \mathcal{B}_0$, such that $\hat P(\hat x(Z\beta,b))=\hat P(\hat x(Z\beta_0,b))\equiv \hat P$ for all $\beta\in \mathcal{B}$. By the implicit function theorem \citep[][Thm. 3.3.1]{krantz2012implicit}, $\hat x(Z\beta,b)$ is three times continuously differentiable as a function of $\beta$ in a neighborhood of $\beta_0$, for almost all $Z,b$ (see the proof of Theorem 1 in \citet{fosgerau_sensitivity_2025} for the precise argument).    Then
\begin{eqnarray*}
    & & \frac{\partial}{\partial \beta_k}\left(\nabla F(\hat x)+\nabla^2 F(\hat x)(y-\hat x)\right) \\
    &=& \nabla^2 F(\hat x)\frac{\partial \hat x}{\partial \beta_k}-\nabla^2 F(\hat x)\frac{\partial \hat x}{\partial \beta_k} +\frac{\partial \nabla^2 F(\hat x)}{\partial \beta_k}(y-\hat x) \\
    &=& \frac{\partial \nabla^2 F(\hat x)}{\partial \beta_k}(y-\hat x) .
\end{eqnarray*}
From \eqref{eq:Phi_closedform} we therefore find that for $\beta\in \mathcal{B}$,
\begin{eqnarray*}
\frac{\partial \Phi(\beta)}{\partial \beta_k} &=&-\left(\Ebb\left[ Z^\top \hat P Z\right]\right)^{-1} \Ebb\left[ Z^\top \hat P\frac{\partial \nabla^2 F(\hat x(Z\beta,b))}{\partial \beta_k}(y-\hat x(Z\beta,b)) \right] \\
&=& -\left(\Ebb\left[ Z^\top \hat P Z\right]\right)^{-1} \Ebb\left[ Z^\top \hat P\frac{\partial \nabla^2 F(\hat x(Z\beta,b))}{\partial \beta_k}(\hat x(Z\beta_0,b)-\hat x(Z\beta,b)) \right],
\end{eqnarray*}
where the second equality applies the law of iterated expectations and Assumption \ref{as:regularity}(b). It follows that $\nabla \Phi(\beta_0)=0$.

Defining $h(\beta)=\beta-\Phi(\beta)$ we have that $\nabla h(\beta_0)=I$. It now follows by the implicit function theorem \citep[Theorem 3.3.2 in][]{krantz2012implicit} that $h$ is one-to-one on an open set containing $\beta_0$, such that $h(\beta)=0$ if and only if $\beta=\beta_0$ on this set.

Inspecting the expression for $\frac{\partial \Phi(\beta)}{\partial \beta_k}$, we observe that $\nabla^2 \Phi(\beta)$ exists and is continuous by Assumption \ref{as:regularity}(c). It follows from compactness of $\mathcal{B}$ that  $\sup_{\beta\in \mathcal{B}}\norm{\nabla^2 \Phi(\beta)}<\infty$. Since $\Phi_N$ ($\nabla \Phi_N$ and $\nabla^2 \Phi_N$) is the sample average of $\Phi$ ($\nabla \Phi, \nabla^2 \Phi$) and $\mathcal{B}$ is compact, the last point follows from the uniform law of large numbers.
\end{proof}
\bigskip

The rest of this section provides proofs of the results stated in the main text.

\begin{proof}[Proof of Lemma \ref{lem:unique_RV}]
Let $\epsilon = y - \Ebb[y | Z, b]$ so that $\Ebb[\epsilon|Z, b] = 0$. For any $\beta \in \mathbb{R}^K$, we have
\begin{align*}
    RV(\beta) &= \Ebb\left[\norm{y - \hat{x}(Z\beta, b)}^2\right] = \Ebb\left[\norm{\Ebb[y | Z, b] + \epsilon - \hat{x}(Z\beta, b)}^2\right] \\
    &= \Ebb\left[\norm{\Ebb[y | Z, b] - \hat{x}(Z\beta, b)}^2\right] + 2 \Ebb \left[ \epsilon^\top (\Ebb[y | Z, b] - \hat{x}(Z\beta, b)) \right] + \Ebb\left[\norm{\epsilon}^2\right].
\end{align*}
Using the law of iterated expectations, we have
\begin{equation*}
    \Ebb \left[ \epsilon^\top (\Ebb[y | Z, b] - \hat{x}(Z\beta, b)) \right] = \Ebb \left[ \Ebb[\epsilon|Z, b]^\top (\Ebb[y | Z, b] - \hat{x}(Z\beta, b)) \right] = 0,
\end{equation*}
so that
\begin{equation*}
    RV(\beta) = \Ebb\left[\norm{\Ebb[y | Z, b] - \hat{x}(Z\beta, b)}^2\right] + \Ebb\left[\norm{\epsilon}^2\right].
\end{equation*}

By Assumption~\ref{as:regularity}(b) we have that $\Ebb[y | Z, b] = \hat{x}(Z\beta_0, b)$ so $RV(\beta_0) = \Ebb\left[\norm{\epsilon}^2\right]$.
Further, subtracting $RV(\beta_0)$ from $RV(\beta)$ yields
\begin{eqnarray}\label{eq:residual_inequality}
    RV(\beta)-RV(\beta_0)= \Ebb\left[\norm{\Ebb[y | Z, b] - \hat{x}(Z\beta, b)}^2\right] \geq 0.
\end{eqnarray}
So, we have $RV(\beta) \geq RV(\beta_0), \forall \beta \in \mathbb{R}^K$.
When $RV(\beta) = RV(\beta_0)$, this implies $\Ebb[y | Z, b] = \hat{x}(Z\beta, b)$ almost surely. By the uniqueness in Assumption \ref{as:regularity}(b), this implies $\beta = \beta_0$, and consequently that $\beta_0$ is the unique minimizer of $RV$.
\end{proof}\bigskip

\begin{proof}[Proof of Lemma \ref{lem:fxp_foc}]
The first-order condition for the minimization problem in \eqref{eq:Phi} is
\begin{eqnarray*}
  0 &=&\Ebb\left[ Z^\top P^* (Z\beta^*+\nabla F(x^*)+\nabla^2 F(x^*)(y-x^*))\right]\\
    &=& \Ebb\left[ Z^\top P^*\nabla^2 F(x^*)(y-x^*)\right]\\
    &=&\Ebb\left[ Z^\top P^*\nabla^2 F(x^*)(\Ebb[y|Z,b]-x^*)\right]\\
    &=& \Ebb\left[ Z^\top P^*\nabla^2 F(x^*)(\hat x(Z\beta_0,b)-x^*)\right]
\end{eqnarray*}
where the first simplification follows from the PUM first-order condition  \eqref{eq:projectedFOC} and the second by iterated expectations and Assumption \ref{as:regularity}(b), noting that the law of iterated expectations applies since
\begin{eqnarray*}
 & &  \Ebb[\norm{Z^\top P^*\nabla^2 F(x^*)(y-x^*)}] \\
&<&     \Ebb[\norm{Z^\top }] \sup_{y,x\in [0,1]^{\mathcal{|L|}}}  \norm{\hat P(x)\nabla^2 F(x)(y-x)} <\infty
\end{eqnarray*}
where $\Ebb[\norm{Z^\top }]<\infty$ by Assumption \ref{as:regularity}(d) and $\sup_{y,x\in [0,1]^{|\mathcal{L}|}}  \norm{\hat P(x)\nabla^2 F(x)(y-x)}<\infty$ by continuity over a compact set.
\end{proof}

\bigskip

\begin{proof}[Proof of Lemma \ref{lem:quadratic_convergence}]
    A second-order mean value inequality in $\Phi$ (\citet{zeidler2012applied} Theorem 4.C) yields
\begin{eqnarray*}
   \norm{\Phi(\beta)-\beta_0}&=& \norm{\Phi(\beta)-\Phi(\beta_0)} \\
   &\leq  & \norm{\nabla \Phi(\beta_0)}\norm{\beta-\beta_0}+\sup_{\tau \in [0,1]}\norm{\nabla^2 \Phi(\tau \beta+(1-\tau)\beta_0)}\norm{\beta-\beta_0}^2 \\
   &\leq & C\norm{\beta-\beta_0}^2,
\end{eqnarray*}
given that $\beta$ is sufficiently close to $\beta_0$ by Lemma \ref{lem:Phi_properties}.
\end{proof}\bigskip

\begin{proof}[Proof of Theorem \ref{thm:quadratic_convergence}]
By Lemma~\ref{lem:Phi_properties} and Assumption~\ref{as:regularity}, we have the following:
\begin{enumerate}
    \item[(a)] For almost all $Z,b$, $\hat x(Z\beta,b)$ is three times continuously differentiable as a function of $\beta$ for all $\beta\in \mathcal{B}_0$.  This implies that $\Phi$ and $\Phi_N$ are twice continuously differentiable for all $\beta \in \mathcal B_0$;
    \item[(b)] There exists a constant $C_H > 0$, such that $C_H=\sup_{\beta\in \mathcal{B}_0}\norm{\nabla^2 \Phi(\beta)} < \infty$;
    \item[(c)] The map $\Phi$ satisfies $\nabla\Phi(\beta_0)=0$, and  $\Phi_N\to\Phi$ and $\nabla\Phi_N\to\nabla\Phi$ uniformly on $\mathcal B_0$.
\end{enumerate}

Let $B_r\subset\mathcal B_0$ be a closed ball with radius $r$ centered at $\beta_0$. By item (a), $\Phi_N$ is twice continuously differentiable on $B_r$. We will show that $\Phi_N$ is a contraction mapping on $B_r$.

By $\nabla\Phi(\beta_0)=0$ in item (c) and the continuity of $\Phi$ and $\nabla\Phi$ in item (a), we can pick $r$ such that
\[
\sup_{\beta\in B_r} \|\nabla \Phi(\beta)\| < 1,  \quad \text{and} \quad \sup_{\beta\in B_r} \|\Phi(\beta)- \beta_0\| < r.
\]
To see this, fix $C_L < 1$, and note that by continuity of $\nabla \Phi$ around $\beta_0$ and since $\nabla\Phi(\beta_0)=0$, there exists $r'>0$ such that
\[
\sup_{\beta \in B_{r'}} \|\nabla \Phi(\beta)\| < C_L,
\]
where $B_{r'}$ is a closed ball centered at $\beta_0$ with radius $r'$. Applying the mean value inequality~\citep[][Thm 9.19]{rudin1976principles}, for any $\beta \in B_{r'}$, we have
\begin{align*}
    \norm{\Phi(\beta) -  {\beta_0}}  = \norm{\Phi(\beta) -  \Phi({\beta_0})} \leq C_L  \norm{\beta -\beta_0}  \leq C_L  r'.
\end{align*}
Now, just pick $r = r'$ and take the $\sup$ over $B_r$. Then $ \sup_{\beta\in B_r} \|\Phi(\beta)- \beta_0\| \leq C_L r<r$.

For large $N$, we have by uniform convergence in item (c) that,
\begin{equation}\label{eq:cond_closed_ball}
    \sup_{\beta\in B_r} \|\nabla \Phi_N(\beta)\| < 1,  \quad \text{and} \quad \sup_{\beta\in B_r} \|\Phi_N(\beta)- \beta_0\| < r,
 \end{equation}
with probability approaching 1. In particular, the second inequality implies that $\Phi_N$ is a self-map on $B_r$, that is $\Phi_N(B_r) \subseteq B_r$.

We are now ready to show $\Phi_N: B_r \rightarrow B_r$ is a contraction mapping.
For $\beta, \beta' \in B_r$, we have by first-order Taylor expansion that
\begin{align*}
    \norm{\Phi_N(\beta)-\Phi_N(\beta')} \leq \sup_{\xi\in B_r} \|\nabla \Phi_N(\xi)\| \norm{\beta-\beta'}.
\end{align*}
By construction of $B_r$, we have $\sup_{\xi\in B_r} \|\nabla \Phi_N(\xi)\| < 1$. Hence, $\Phi_N$ is a contraction mapping on $B_r$. Existence and uniqueness of the fixed point follow by a direct application of the Banach fixed point theorem.

Let $\tilde{\beta}$ denote the unique fixed point satisfying $\tilde{\beta} = \Phi_N(\tilde{\beta})$ on $B_r$.
By item (c) and the continuity of $\nabla\Phi$, the consistency $\tilde\beta  \overset{p}{\to}\beta_0$ (Theorem~\ref{thm:normality}) and $\nabla\Phi(\beta_0)= 0$ imply $\norm{\nabla\Phi(\tilde\beta)}\overset{p}{\to} 0$. By uniform convergence in item (c), we have that
\[
  \epsilon_N \coloneqq \norm{\nabla\Phi_N(\tilde\beta)}\overset{p}{\to} 0.
\]

We now establish convergence of Algorithm~\ref{iterated_algo}.
For any $\beta\in B_r$, let $\delta =\beta - \tilde \beta$, since $\tilde\beta\in B_r$ with probability approaching 1, we can write
\[
\Phi_N(\beta) = \Phi_N(\tilde \beta) + \nabla \Phi_N(\tilde \beta) \delta + R(\delta),
\]
where $R(\delta)$ denotes the remainder function. Specifically, let $\xi$ lie on the line segment between $\beta$ and $\tilde\beta$. By item (b), we have that $R$ satisfies,
\[
  \|R(\delta)\|
  \;\le\;
  \tfrac12 \sup_{\xi\in B_r} \bigl\|\nabla^2\Phi_N(\xi)\bigr\|
  \,\|\delta\|^2
  \;\le\;
  C_R \,\|\delta\|^2
\]
for some constant $C_R>0$ independent of $\beta\in B_r$.
Since $\Phi_N(\tilde\beta)=\tilde\beta$ is the fixed point, this yields, for all
$\tilde\beta_{(t)}\in B_r$,
\begin{equation}\label{eq:taylor-phi}
  \Phi_N(\tilde\beta_{(t)}) - \tilde\beta =   \nabla\Phi_N(\tilde\beta)\,\delta_t + R_t,
  \qquad \|R_t\| \le C_R\|\delta_t\|^2.
\end{equation}

The iteration of Algorithm~\ref{iterated_algo} can be written as
\[
  \tilde\beta_{(t+1)}
  =
  (1-\alpha_{(t)})\tilde\beta_{(t)}
  +
  \alpha_{(t)}\Phi_N(\tilde\beta_{(t)}),
  \qquad \alpha_{(t)}\in(0,1].
\]
Subtracting $\tilde\beta$ from both sides gives
\[
  \delta_{t+1}
  =
  (1-\alpha_{(t)})\delta_t
  +
  \alpha_{(t)}\bigl(\Phi_N(\tilde\beta_{(t)})-\tilde\beta\bigr).
\]
Substituting \eqref{eq:taylor-phi} yields
\[
  \delta_{t+1}
  =
  \bigl[(1-\alpha_{(t)})I+\alpha_{(t)}\nabla\Phi_N(\tilde\beta)\bigr]\delta_t
  + \alpha_{(t)} R_t.
\]
Taking norms and using $\alpha_{(t)}\in(0,1]$ and
$\|\nabla\Phi_N(\tilde\beta)\|=\epsilon_N$,
\begin{align*}
  \|\delta_{t+1}\|
  &\le
  \bigl(1-\alpha_{(t)}+\alpha_{(t)}\|\nabla\Phi_N(\tilde\beta)\|\bigr)\|\delta_t\|
  + \alpha_{(t)}\|R_t\| \\
  &\le
  \bigl(1-\alpha_{(t)}+\alpha_{(t)}\epsilon_N\bigr)\|\delta_t\|
  + C_R\|\delta_t\|^2.
\end{align*}

Next, we show that, on $B_r$, $\alpha_{(t)} = 1, \forall t \geq 0$. 
Write $g(\beta)\coloneqq \Phi_N(\beta)-\beta$ for the gap and set $\eta \coloneqq \sup_{\xi\in B_r}\norm{\nabla \Phi_N(\xi)}$, which is strictly below one by Eq.~\eqref{eq:cond_closed_ball}. Hence, $\Phi_N$ is an $\eta$-contraction uniformly on the closed ball $B_r$.

Suppose $\tilde\beta_{(t)}\in B_r$. 
The trial point of the full step $\alpha_{(t)} = 1$ is $\Phi_N(\tilde\beta_{(t)})$, which lies in $B_r$ by the self-map property, so applying the contraction to the pair $\Phi_N(\tilde\beta_{(t)}),\tilde\beta_{(t)}$ gives
\begin{align*}    \norm{g\bigl(\Phi_N(\tilde\beta_{(t)})\bigr)}
    &= \norm{\Phi_N\bigl(\Phi_N(\tilde\beta_{(t)})\bigr) - \Phi_N(\tilde\beta_{(t)})} \\
    &\leq \eta \norm{\Phi_N(\tilde\beta_{(t)}) - \tilde\beta_{(t)}}
    \;=\; \eta \norm{g(\tilde\beta_{(t)})}.
\end{align*}
Recall the merit function
\[
Q_N(\beta) = \frac{1}{2}\norm{g(\beta)}^2,
\]
which is a strictly increasing function of $\norm{g(\beta)}$. Hence, $Q_N(\Phi_N(\tilde\beta_{(t)})) \leq \eta^2 Q_N(\tilde \beta_{(t)}) \leq \max_{0\le i\le \min(t,M)} Q_N(\tilde\beta_{(t-i)})$,
so the non-monotone condition is met at $m=0$ and $\alpha_{(t)}=1$ is accepted, giving $\tilde\beta_{(t+1)}=\Phi_N(\tilde\beta_{(t)})\in B_r$. 
Since $\tilde\beta_{(0)}\in B_r$ by assumption, induction gives
$\alpha_{(t)} = 1, \forall t \geq 0$, and we have
\begin{equation}\label{eq:quadratic_bound}
    \|\delta_{t+1}\|
  \;\le\;
  \epsilon_N\|\delta_t\| + C_R\|\delta_t\|^2.
\end{equation}
This is exactly the bound \eqref{eq:aq-bound-app} with $C=C_R$.

Since $\epsilon_N \xrightarrow{p }0$, the inequality above gives \eqref{eq:aq-bound-app}, and the resulting convergence is almost quadratic in the sense that the linear coefficient
$\epsilon_N$ vanishes in probability as $N\to\infty$.
\end{proof}\bigskip

\begin{proof}[Proof of Theorem \ref{thm:normality}] Consider the estimator for an auxiliary variable $\bar \beta$
\begin{eqnarray}
    \bar \beta=\argmin_{\beta\in \mathcal{B}}Q_N(\beta), \qquad Q_N(\beta)=\frac{1}{2}\left\|\beta-\Phi_N(\beta)\right\|^2,
\end{eqnarray}
where $\mathcal{B}\subset \mathcal{B}_0$ is a sufficiently small closed ball centered at $\beta_0$ such that $\beta_0$ is the only fixed point of $\Phi$ in $\mathcal{B}$. Existence of such $\mathcal{B}$ is ensured by Lemma \ref{lem:Phi_properties}, and $\mathcal{B}_0$ was defined in Assumption \ref{as:regularity}(e).

$Q_N$ converges uniformly in probability to $Q(\beta)=\frac{1}{2}\left\|\beta-\Phi(\beta)\right\|^2$ on $\mathcal{B}$ by the uniform law of large numbers. Note that $Q(\beta)=0$ if and only if $\beta=\beta_0$ by Lemma \ref{lem:Phi_properties}. Therefore, $\bar\beta\parrow \beta_0$ by \citet[][Thm. 2.1]{newey_large_1994} .

The proof proceeds as follows: We will prove the claims of Theorem \ref{thm:normality} for the estimator $\bar \beta$ rather than $\tilde \beta$, and then finally show that $\tilde \beta=\bar \beta$ with probability approaching $1$. We will now verify the conditions of \citet[][Thm. 3.1]{newey_large_1994} for $\bar \beta$.

\textit{Condition i)} $\beta_0\in \mathrm{int}(\mathcal{B})$ by construction. \textit{Condition ii)} With probability 1, $ Q_N(\beta)$ is twice continuously differentiable throughout $\mathcal{B}$ since $\mathcal{B}\subset \mathcal{B}_0$.  \textit{Condition iii)} Since $\Ebb\left[ Z^\top \hat P Z \right]$ is invertible by Assumption \ref{as:regularity}(d), $\Ebb_N\left[ Z^{n,\top }\hat P^nZ^n\right]$ is invertible with probability approaching $1$. In that case, $\Phi_N$ is given in closed form: Define  $\beta\in \mathcal{B}_0$ with $\hat x^n =\hat x(Z^n\beta,b^n)$ and $\hat P^n=\hat P(\hat x^n)$, then we have
\begin{eqnarray*}
    \Phi_N(\beta) &=& -\left( \mathbb E_N(Z^{n,\top}\hat P^n Z^n)\right)^{-1} \Ebb_N\left[ Z^{n,\top}\hat P^n\left(\nabla F(\hat x^n)+\nabla^2 F(\hat x^n)(y^n-\hat x^n) \right)\right]\\
    &=& -\left( \mathbb E_N(Z^{n,\top}\hat P^n Z^n)\right)^{-1} \Ebb_N\left[ Z^{n,\top}\hat P^n\left(-Z^n\beta+\nabla^2 F(\hat x^n)(y^n-\hat x^n) \right)\right] \\
    &=&\beta -\left( \mathbb E_N(Z^{n,\top} \hat P^n Z^n)\right)^{-1} \Ebb_N\left[ Z^{n,\top} \hat P^n\nabla^2 F(\hat x^n)(y^n-\hat x^n)\right],
\end{eqnarray*}
where the second line applies \eqref{eq:projectedFOC} while the third line uses that $\hat P^n$ is constant whenever $\beta\in \mathcal{B}_0$ by Assumption \ref{as:regularity}(e).

We then have
\begin{eqnarray*}
    \sqrt{N}\nabla Q_N(\beta_0)&=& \sqrt{N}(I-\nabla \Phi_N(\beta_0))^\top (\beta_0-\Phi_N(\beta_0))\\
    &=& \left(I-\nabla \Phi_N(\beta_0)\right)^\top \left( \mathbb E_N\left[ Z^{n,\top} \hat P^n Z^{n}\right]\right)^{-1}\sqrt{N}\left(\Ebb_N\left[Z^{n,\top}\hat  P^n \nabla^2 F(\hat x^n)(y^n-\hat x^n) \right]\right).
\end{eqnarray*}
By the law of large numbers, and $\nabla \Phi(\beta_0)=0$, the first term on the RHS converges in probability to $I$ while the second term converges in probability to $H^{-1}=\left(\mathbb E\left[Z^\top \hat PZ\right]\right)^{-1}$. The third term converges in distribution to $\mathbf{N}(0,S)$ by the classical central limit theorem where
\begin{eqnarray*}
S=\mathbb E\left[Z^\top \hat P\nabla^2 F(\hat x)(y- \hat x)(y- \hat x)^\top \nabla^2 F(\hat x) \hat P Z \right],
\end{eqnarray*}
which is finite, since
\begin{eqnarray*}
   & & \Ebb\left[\norm{Z^\top \hat P\nabla^2 F(\hat x)(y-\hat x)(y-\hat x)^\top \nabla^2 F(\hat x)\hat PZ}\right] \\
  &\leq & \Ebb\left[\norm{Z^\top} \norm{ \hat P\nabla^2 F(\hat x)(y-\hat x)(y-\hat x)^\top \nabla^2 F(\hat x)\hat P}\norm{Z}\right] \\
  &=& \Ebb\left[\norm{Z}^2 \norm{ \hat P\nabla^2 F(\hat x)(y-\hat x)(y-\hat x)^\top \nabla^2 F(\hat x)\hat P}\right] \\
  &\leq & \Ebb\left[\norm{Z}^2\right] \sup_{x\in[0,1]^{|\mathcal{L}|},y\in \{0,1\}^{|\mathcal{L}|}}\norm{\hat P(x)\nabla^2 F(x)(y-x)(y-x)^\top\nabla^2 F(x)\hat P(x)} <\infty
\end{eqnarray*}
where $\norm{Z^\top}=\norm{Z}$ for the Euclidean norm, and the supremum is finite by continuity over a compact set.
Assumption \ref{as:regularity}(d) gives $\Ebb[\norm{Z}^2]<\infty$.
Therefore $\sqrt{N}\nabla Q_N(\beta_0)\Darrow \mathbf{N}(0,H^{-1}SH^{-1})$ by the Central Limit Theorem.

\textit{Condition iv)} By the uniform law of large numbers, $\sup_{\beta\in \mathcal{B}}\norm{\nabla^2 Q_N(\beta)- \nabla^2 Q(\beta)}\parrow 0$.
To see that $\nabla^2 Q(\beta_0)$ is nonsingular, write
\[
h(\beta) := \beta - \Phi(\beta), \qquad
Q(\beta) := \tfrac12 \|h(\beta)\|^2.
\]
Then
\[
\nabla Q(\beta) = \nabla h(\beta)^\top h(\beta),
\]
so
\[
\nabla^2 Q(\beta) = \nabla h(\beta)^\top \nabla h(\beta)
  + \sum_{j=1}^K h_j(\beta)\,\nabla^2 h_j(\beta).
\]
At $\beta_0$ we have $h(\beta_0)=0$ because $\Phi(\beta_0)=\beta_0$, so the second term drops out and
\[
\nabla^2 Q(\beta_0) = \nabla h(\beta_0)^\top \nabla h(\beta_0).
\]
Moreover, $\nabla h(\beta) = I - \nabla\Phi(\beta)$, hence by Lemma~\ref{lem:Phi_properties}
$\nabla h(\beta_0) = I - \nabla\Phi(\beta_0) = I$, and therefore
\[
\nabla^2 Q(\beta_0) = I,
\]
which is symmetric and positive definite.
Then $\sqrt{N}(\bar \beta-\beta_0)\Darrow \mathbf{N}(0,H^{-1}SH^{-1})$ by \citet[][thm. 3.1]{newey_large_1994}.

Finally, we show that $\bar \beta=\tilde \beta$ with probability approaching 1. Since $\beta_0$ is interior to $\mathcal{B}_0$ and $\bar \beta\parrow \beta_0$, $\bar \beta$ is an interior solution with probability approaching $1$, such that $   \nabla Q_N(\bar \beta)=0$, i.e.
\begin{eqnarray}\label{eq:Q_FOC}
    (I-\nabla \Phi_N(\bar \beta))^\top(\bar \beta-\Phi_N(\bar \beta))=0 ,
\end{eqnarray}
where $I-\nabla \Phi_N(\bar \beta)\parrow I-\nabla \Phi(\beta_0)=I$ by Lemma \ref{lem:Phi_properties} which is nonsingular, such that $I-\nabla \Phi_N(\bar \beta)$ is nonsingular with probability approaching $1$. Then \eqref{eq:Q_FOC} implies that $\bar \beta=\Phi_N(\bar \beta)$. Since the minimizer of $Q_N$ is unique with probability approaching $1$, $\bar \beta$ is the only fixed point of $\Phi_N$ in $\mathcal{B}$.

As $\bar \beta$ is a fixed point of $\Phi_N$  with probability approaching one, we have by definition of $\tilde \beta$ in \eqref{eq:tilde_beta} that $
    RV_N(\tilde \beta)\leq RV_N(\bar \beta)$. 
As is standard in extremum estimation, we take the minimization in \eqref{eq:tilde_beta} to run over a compact parameter space $\Theta$ with $\beta_0\in\mathrm{int}(\Theta)$ and $\mathcal{B}\subset\Theta$. By the uniform law of large numbers, $\sup_{\beta\in\Theta}\| RV_N(\beta)-RV(\beta)\|\parrow 0$. Applying this at $\tilde\beta$ and at $\bar\beta$, and using $\bar\beta\parrow\beta_0$ with continuity of $RV$,
\[
RV(\tilde\beta)\;\leq\; RV_N(\tilde\beta)+o_p(1)\;\leq\; RV_N(\bar\beta)+o_p(1)\;=\;RV(\beta_0)+o_p(1).
\]
The set $\Theta\setminus\mathrm{int}(\mathcal{B})$ is compact, $RV$ is continuous on it, and the unique minimizer $\beta_0$ of $RV$ (Lemma~\ref{lem:unique_RV}) lies in $\mathrm{int}(\mathcal{B})$ by construction, so
\[
\delta\;\coloneqq\;\inf_{\beta\in\Theta\setminus\mathrm{int}(\mathcal{B})}RV(\beta)-RV(\beta_0)\;>\;0 .
\]
If $\tilde\beta\notin\mathcal{B}$, then $RV(\tilde\beta)\geq RV(\beta_0)+\delta$ by the definition of $\delta$. Therefore
\[
\Pr\bigl[\tilde\beta\notin\mathcal{B}\bigr]\;\leq\;\Pr\bigl[RV(\tilde\beta)-RV(\beta_0)\geq \delta\bigr]\;\rightarrow\;0,
\]
where the convergence holds because $RV(\tilde\beta)-RV(\beta_0)$ is bounded by an $o_p(1)$ term while $\delta>0$ is a fixed constant. Hence the fixed point $\tilde\beta$ lies in $\mathcal{B}$ with probability approaching $1$, where $\bar\beta$ is the unique fixed point of $\Phi_N$, so
$\tilde \beta=\bar \beta$ with probability approaching $1$.

It now follows by the triangle inequality that
\begin{eqnarray*}
    \norm{\tilde \beta-\beta_0}\leq \norm{\tilde \beta-\bar \beta}+\norm{\bar \beta-\beta_0}\parrow 0.
\end{eqnarray*}
\end{proof}

\section{Additional simulation results}
\label{app:sim_converged}

We provide additional finite-sample results on the converged replications in
Tables~\ref{tab:sim-bias-rmse-conv} and~\ref{tab:sim-sd-se-conv}. 
Conditioning on convergence makes little change to the statistics, compared to the full-replication results in Tables~\ref{tab:sim-bias-rmse} and~\ref{tab:sim-sd-se}. 
Across every coefficient and sample size, the bias and RMSE move by at most $0.013$, the empirical standard deviations and estimated standard errors by at most $0.009$, and the coverage by less than one percentage point.

\begin{table}[H]
  \centering
  \begin{threeparttable}
  \caption{Bias, RMSE, and $\sqrt{N}$-RMSE by coefficient and sample size, over converged replications.}
  \label{tab:sim-bias-rmse-conv}
  \small
  \setlength\tabcolsep{4pt}
  \begin{tabular}{cccccccccc}
    \toprule
    $N$ & \multicolumn{3}{c}{$\tilde\beta^{\text{length}}$} & \multicolumn{3}{c}{$\tilde\beta^{\text{time}}$} & \multicolumn{3}{c}{$\tilde\beta^{\text{sim}}$} \\
    \cmidrule(lr){2-4} \cmidrule(lr){5-7} \cmidrule(lr){8-10}
     & Bias & RMSE & $\sqrt{N}$-RMSE & Bias & RMSE & $\sqrt{N}$-RMSE & Bias & RMSE & $\sqrt{N}$-RMSE \\
    \midrule
    200 & 0.0273 & 0.1726 & 2.4406 & 0.0202 & 0.1454 & 2.0569 & 0.0099 & 0.4480 & 6.3354 \\
    500 & 0.0122 & 0.0996 & 2.2260 & 0.0050 & 0.0805 & 1.7992 & 0.0041 & 0.2588 & 5.7863 \\
    1,000 & 0.0111 & 0.0673 & 2.1274 & -0.0039 & 0.0559 & 1.7668 & -0.0011 & 0.2009 & 6.3528 \\
    2,000 & 0.0009 & 0.0444 & 1.9860 & 0.0031 & 0.0381 & 1.7034 & 0.0049 & 0.1298 & 5.8047 \\
    5,000 & 0.0011 & 0.0277 & 1.9574 & 0.0010 & 0.0229 & 1.6216 & -0.0011 & 0.0806 & 5.7016 \\
    10,000 & 0.0009 & 0.0178 & 1.7756 & -0.0000 & 0.0169 & 1.6853 & 0.0010 & 0.0573 & 5.7289 \\
    20,000 & -0.0003 & 0.0137 & 1.9315 & 0.0003 & 0.0115 & 1.6255 & 0.0007 & 0.0401 & 5.6686 \\
    \bottomrule
  \end{tabular}
  \begin{tablenotes}
      \footnotesize
      \item Computed on the converged replications only: $R_{\mathrm{conv}} = 375, 401, 442, 462, 472, 489, 496$ of $R=500$ for $N = 200, 500, 1,000, 2,000, 5,000, 10,000, 20,000$, respectively. 
    \end{tablenotes}
  \end{threeparttable}
\end{table}

\begin{table}[H]
  \centering
  \begin{threeparttable}
  \caption{Empirical SD, $\overline{\text{SE}}$, and 95\% coverage by coefficient and sample size, over converged replications.}
  \label{tab:sim-sd-se-conv}
  \small
  \setlength\tabcolsep{4pt}
  \begin{tabular}{cccccccccc}
    \toprule
    $N$ & \multicolumn{3}{c}{$\tilde\beta^{\text{length}}$} & \multicolumn{3}{c}{$\tilde\beta^{\text{time}}$} & \multicolumn{3}{c}{$\tilde\beta^{\text{sim}}$} \\
    \cmidrule(lr){2-4} \cmidrule(lr){5-7} \cmidrule(lr){8-10}
     & SD & $\overline{\text{SE}}$ & Coverage & SD & $\overline{\text{SE}}$ & Coverage & SD & $\overline{\text{SE}}$ & Coverage \\
    \midrule
    200 & 0.1706 & 0.1339 & $92.00\%$ & 0.1442 & 0.1184 & $89.87\%$ & 0.4485 & 0.4086 & $93.60\%$ \\
    500 & 0.0989 & 0.0847 & $89.28\%$ & 0.0804 & 0.0746 & $93.77\%$ & 0.2591 & 0.2558 & $93.77\%$ \\
    1,000 & 0.0664 & 0.0599 & $92.76\%$ & 0.0558 & 0.0523 & $93.67\%$ & 0.2011 & 0.1797 & $91.18\%$ \\
    2,000 & 0.0444 & 0.0419 & $94.37\%$ & 0.0380 & 0.0370 & $94.16\%$ & 0.1298 & 0.1268 & $94.81\%$ \\
    5,000 & 0.0277 & 0.0268 & $93.43\%$ & 0.0229 & 0.0233 & $95.97\%$ & 0.0807 & 0.0803 & $94.07\%$ \\
    10,000 & 0.0178 & 0.0189 & $96.93\%$ & 0.0169 & 0.0164 & $93.86\%$ & 0.0573 & 0.0568 & $94.27\%$ \\
    20,000 & 0.0137 & 0.0134 & $93.75\%$ & 0.0115 & 0.0116 & $94.56\%$ & 0.0401 & 0.0402 & $95.36\%$ \\
    \bottomrule
  \end{tabular}
  \begin{tablenotes}
      \footnotesize
      \item Computed on the converged replications only: $R_{\mathrm{conv}} = 375, 401, 442, 462, 472, 489, 496$ of $R=500$ for $N = 200, 500, 1,000, 2,000, 5,000, 10,000, 20,000$, respectively.
    \end{tablenotes}
  \end{threeparttable}
\end{table}

\begin{table}[H]
    \centering
    \caption{Scalability to the number of parameters}
    \label{tab:scalability_K}
    \begin{tabular}{ccccc}
    \toprule
${K}$ & Median runtime [s] & Min runtime [s] & Max runtime [s]& Iterations \\
\midrule
2          & 1.546           & 1.239          & 1.624           & 4              \\
3          & 1.452           & 1.422          & 1.466          & 4              \\
4          & 1.639           & 1.337            & 1.711          & 5              \\
5          & 1.356           & 1.339          & 1.623          & 4              \\
8          & 1.375           & 1.345          & 1.643          & 4 \\
\bottomrule
\end{tabular}
\end{table}

\section{Additional implementation details}
\label{app:implementation_details}

This appendix collects implementation details for the simulation study in Section~\ref{sec:sim}, the scalability study in Section~\ref{sec:scalability}, and the empirical application in Section~\ref{sec:empirical_application}.
All runs are performed on the same workstation, equipped with Intel Xeon Gold 6426Y CPUs (64 cores) and 512~GB memory. 

The quadratic PURC problem (which appears in both the generation of simulated data and in estimation) is essentially a quadratic program. Hence, we exploit the highly optimized quadratic problem solver (PIQP, proximal interior point solver, \cite{schwan2023piqp} with warm starting~\citep{chen2025warmstarting}) to solve the quadratic PURC problem.
We use the same solver parameters throughout: absolute and duality-gap tolerances are set to $10^{-12}$, and we cap the number of interior-point iterations at 1000 (no binding caps were encountered in all experiments).

In the estimator, a trip's active set is determined by the dual-slack sign test as detailed in the paragraph on active link detection at the end of Section~\ref{sec:optimizing_behavior}, with node potentials returned by the PIQP solver.
To improve numerical stability when solving for the projection matrix $\tilde P_n^*$ in Eq.~\eqref{eq:Ptilde}, we further ground the active subnetwork separately on each connected component (in exact arithmetic, there would be only one), so components that are disconnected due to floating residuals are accommodated.

For the microPURC estimator and Algorithm~\ref{iterated_algo}, we use the following nonmonotone line search throughout:
\begin{itemize}
  \item Reduction factor $\gamma = 0.5$.
  \item Nonmonotone window of width $3$ for the line-search reference value.
  \item Maximum number of line searches $m_{\max} = 5$.
\end{itemize}

The fixed-point residual tolerance is $\tau = 10^{-4}\sqrt{K}$ in every experiment reported in the paper. The $\sqrt{K}$ factor keeps the threshold on $\lVert\Phi_N(\tilde\beta)-\tilde\beta\rVert$ comparable across parameter dimensions. 



\paragraph{Sensitivity analysis of algorithm parameters.}
To show the impacts of the algorithm parameters on estimation performance, we vary each one at a time around its default on $D=20$ independent datasets of $N=10{,}000$ trips, and record the largest deviation from the estimate obtained at the default. 
As shown in Table~\ref{tab:sim-algo-params}, four of the six parameters, the iteration cap, the nonmonotone window, the reduction factor, the number of line searches, leave $\tilde\beta$ numerically unchanged. 
Only the fixed-point tolerance $\tau$, and the inner QP tolerance slightly influence the estimate, and the largest deviation anywhere is $0.0084$, an order of magnitude below the statistical error $\mathbb{E}\lVert\tilde\beta-\beta_0\rVert\approx0.0557$ at this sample size. 
Among the sweep values of $\tau$, the largest deviation comes from loosening $\tau$ to $10^{-2}$, which simply stops the iteration earlier. Tightening $\tau$ to $10^{-5}$ only increases the number of iterations ($15$ vs $4$ at the default) and has no impact on the estimates. 

\begin{table}[H]
  \centering
  \begin{threeparttable}
  \caption{Sensitivity of the estimate to Algorithm~\ref{iterated_algo}'s tuning parameters ($D=20$ datasets, $N=10{,}000$).}
  \label{tab:sim-algo-params}
  \small
  \begin{tabular}{llrl}
    \toprule
    Parameter & Values swept & $\max\lVert\Delta\tilde\beta\rVert_\infty$ & Conv.\% \\
    \midrule
    $\tau$ & 0.01, 0.003, 0.001, 0.0003, 0.0001, 1e-05 & 0.0067 & 95--100 \\
    Max.\ outer iterations & 50, 100, 200, 500 & 0.0000 & 100 \\
    Nonmonotone window $M$ & 1, 3, 5, 10 & 0.0000 & 100 \\
    Reduction factor & 0.3, 0.5, 0.7 & 0.0000 & 100 \\
    Max.\ line searches $m_{\max}$ & 3, 5, 10 & 0.0000 & 100 \\
    QP tol.\ $\varepsilon_{\mathrm{abs}}$ & 0.0001, 1e-06, 1e-08, 1e-10, 1e-12, 1e-14 & 0.0084 & 100 \\
    \bottomrule
  \end{tabular}
  \end{threeparttable}
\end{table}

\paragraph{Code availability.}
The code implementing the microPURC estimator and Algorithm~\ref{iterated_algo} is publicly available at \url{https://github.com/andyYaoR/micropurc}, together with
scripts that reproduce the simulation, scalability, and robustness results in Sections~\ref{sec:sim} and~\ref{sec:scalability} and the benchmark networks used there. The empirical application in Section~\ref{sec:empirical_application} uses trip records from the Danish road pricing experiment, which are confidential and cannot be shared; the repository contains the estimation code used for that application.

\section{Empirical application details}\label{app:emp_detail_stats}

\paragraph{Trip-level summary.}
Table~\ref{tab:dataset-summary} reports summary statistics for trip distance, trip duration, and the distribution of OD pairs per user.

\begin{table}[!ht]
\centering
\caption{Summary statistics for filtered observations (control periods, baseline specification).}
\label{tab:dataset-summary}
\begin{tabular}{lrrrrrr}
\toprule
 & Mean & Median & P25 & P75 & Min & Max \\
\midrule
Distance (km) & 14.72 & 9.21 & 4.85 & 18.62 & 2.14 & 102.08 \\
Duration (min) & 18.87 & 15.15 & 9.48 & 24.37 & 5.17 & 80.67 \\
OD pairs per user & 31.39 & 20 & 7 & 43 & 1 & 184 \\
\bottomrule
\end{tabular}
\end{table}

Using an income indicator defined relative to the Danish population average, 2{,}430 observations are above the average, and 15{,}589 are at or below the average.

\paragraph{Link attributes.}
The road network contains $98{,}503$ directed links.
For each day, link attributes are stored in 10 time intervals, indexed as follows: (1) 21:00--24:00 and 00:00--05:00, (2) 05:00--06:00, (3) 06:00--07:00, (4) 07:00--08:00, (5) 08:00--09:00, (6) 09:00--15:00, (7) 15:00--16:00, (8) 16:00--17:00, (9) 17:00--18:00, (10) 18:00--21:00, where each observation is associated with link attributes according to its departure time interval.

The main attributes are 1) link length (km), 2) GPS-based link travel time (min), 3) free-flow link travel time (min), and 4) a junction indicator (interacted with an indicator for urban area).
Link length and the interaction $\text{junction}\times\text{urban}$ are time-invariant.
The GPS-based and free-flow travel times are mutually exclusive for each link in each time interval: if sufficient GPS data exist for a link and time interval, the GPS-based travel time is used and the free-flow travel time is set to zero; otherwise, the free-flow travel time is used and the GPS-based travel time is set to zero. The link attribute statistics are summarized in Table~\ref{tab:network-attrs}.

\begin{table}[h]
\centering
\caption{Link attribute summary \\(Time‑varying stats are reported as ranges across the 10 time intervals)}
\label{tab:network-attrs}
\small
\setlength\tabcolsep{4pt}
\begin{tabular}{lrrr}
\toprule
Attribute & Mean (range) & Median (range) & Non-zero (range) \\
\midrule
Link length (km) & 3.079 & 0.350 & 98{,}503 \\
GPS-based link time (min) & 0.881--1.082 & 0.460--0.581 & 79{,}706--85{,}098 \\
Free-flow link time (min) & 13.952--19.240 & 0.658--1.034 & 13{,}405--18{,}797 \\
Link junction $\times$ urban (binary) & 0.503 & 1 & 49{,}506 \\
\bottomrule
\end{tabular}
\end{table}

\paragraph{Prediction quality under the income-heterogeneous specification.}
Figure~\ref{fig:emp-calibration-income} reports the in-sample performance for the income-heterogeneous specification. The predicted flow-weighted expected route totals again align well with the observed totals along the $45^\circ$ line, with $R^2=0.99$ for total length and $R^2=0.91$ for total travel time.

\begin{figure}[H]
  \centering
  \begin{subfigure}{0.48\textwidth}\includegraphics[width=\linewidth]{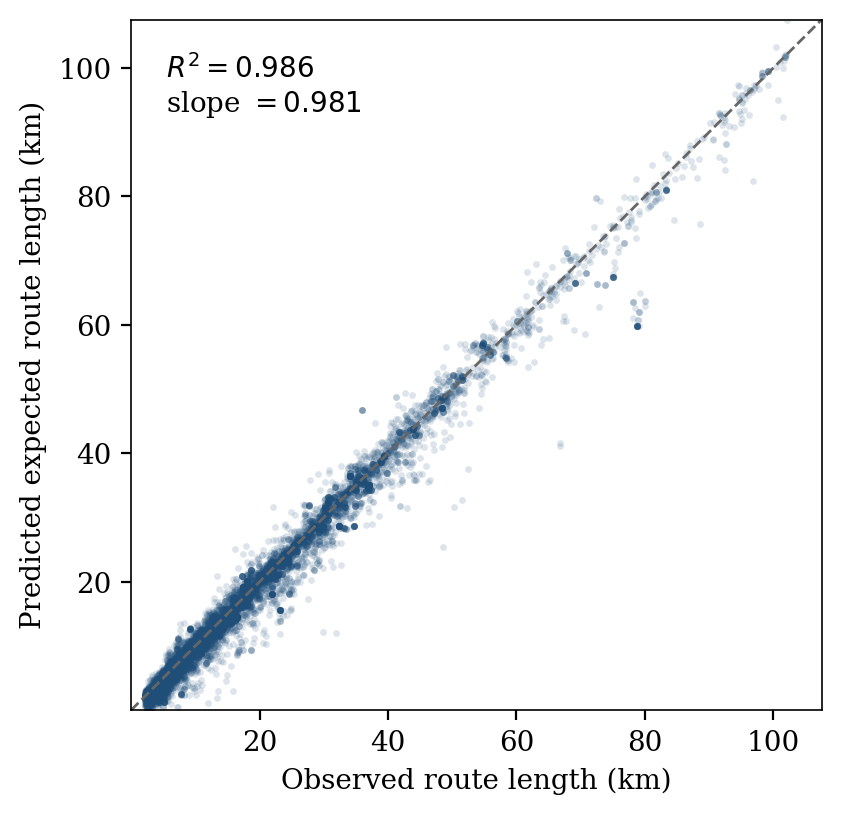}\caption{Route length}\end{subfigure}
  \hfill
  \begin{subfigure}{0.48\textwidth}\includegraphics[width=\linewidth]{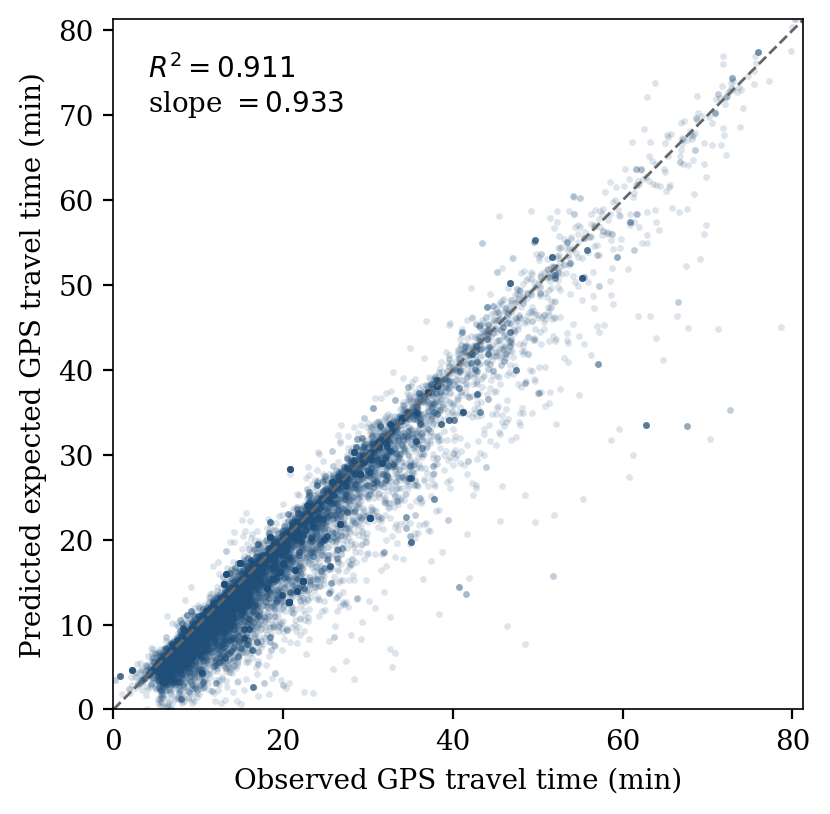}\caption{Travel time}\end{subfigure}
  \caption{Predicted expected versus observed route totals (income-heterogeneous specification, in sample).}
  \label{fig:emp-calibration-income}
\end{figure}

\end{document}